%%
%% This is file `sample-sigconf.tex',
%% generated with the docstrip utility.
%%
%% The original source files were:
%%
%% samples.dtx  (with options: `sigconf')
%% 
%% IMPORTANT NOTICE:
%% 
%% For the copyright see the source file.
%% 
%% Any modified versions of this file must be renamed
%% with new filenames distinct from sample-sigconf.tex.
%% 
%% For distribution of the original source see the terms
%% for copying and modification in the file samples.dtx.
%% 
%% This generated file may be distributed as long as the
%% original source files, as listed above, are part of the
%% same distribution. (The sources need not necessarily be
%% in the same archive or directory.)
%%
%% The first command in your LaTeX source must be the \documentclass command.
% \documentclass[sigconf]{}
\pdfoutput=1
\documentclass[manuscript, nonacm]{acmart}
% \documentclass[sigconf]{acmart}

%% NOTE that a single column version may be required for 
%% submission and peer review. This can be done by changing
%% the \doucmentclass[...]{acmart} in this template to 
%% \documentclass[manuscript,screen]{acmart}
%% 
%% To ensure 100% compatibility, please check the white list of
%% approved LaTeX packages to be used with the Master Article Template at
%% https://www.acm.org/publications/taps/whitelist-of-latex-packages 
%% before creating your document. The white list page provides 
%% information on how to submit additional LaTeX packages for 
%% review and adoption.
%% Fonts used in the template cannot be substituted; margin 
%% adjustments are not allowed.
%%
%%
%% \BibTeX command to typeset BibTeX logo in the docs
\AtBeginDocument{%
  \providecommand\BibTeX{{%
    \normalfont B\kern-0.5em{\scshape i\kern-0.25em b}\kern-0.8em\TeX}}}

%% Rights management information.  This information is sent to you
%% when you complete the rights form.  These commands have SAMPLE
%% values in them; it is your responsibility as an author to replace
%% the commands and values with those provided to you when you
%% complete the rights form.
\setcopyright{acmcopyright}
\copyrightyear{2018}
\acmYear{2018}
\acmDOI{10.1145/1122445.1122456}

%% These commands are for a PROCEEDINGS abstract or paper.
\acmConference[Woodstock '18]{Woodstock '18: ACM Symposium on Neural
  Gaze Detection}{June 03--05, 2018}{Woodstock, NY}
\acmBooktitle{Woodstock '18: ACM Symposium on Neural Gaze Detection,
  June 03--05, 2018, Woodstock, NY}
\acmPrice{15.00}
\acmISBN{978-1-4503-XXXX-X/18/06}

%%
%% Submission ID.
%% Use this when submitting an article to a sponsored event. You'll
%% receive a unique submission ID from the organizers
%% of the event, and this ID should be used as the parameter to this command.
%%\acmSubmissionID{123-A56-BU3}

%%
%% The majority of ACM publications use numbered citations and
%% references.  The command \citestyle{authoryear} switches to the
%% "author year" style.
%%
%% If you are preparing content for an event
%% sponsored by ACM SIGGRAPH, you must use the "author year" style of
%% citations and references.
%% Uncommenting
%% the next command will enable that style.
%%\citestyle{acmauthoryear}

\usepackage[utf8]{inputenc}
\usepackage{listings}
\lstset{
  basicstyle=\ttfamily,
  mathescape
}
\usepackage{amsmath}
\usepackage[skins, breakable]{tcolorbox}
\usepackage{multirow}
\usepackage{amsthm}
\usepackage{amsfonts}
\usepackage{graphicx}
\usepackage{appendix}
\usepackage{xspace}
\usepackage{color}
\usepackage{tikz}
\usepackage{pgfplots}
\usepackage{textcomp}
\usepackage{enumitem}
\usepackage{float}
\usepackage{hhline}
\usepackage{bm}

\newtheorem{lemma}{Lemma}
\newtheorem{theorem}{Theorem}
\newtheorem{definition}{Definition}

\newcommand{\fc} {\mathcal{C}}

\newcommand{\ff} {\mathcal{F}}

\newcommand{\fs} {\mathcal{S}}

\newcommand{\propose}{\texttt{propose}\xspace}

\newcommand{\vote}{\texttt{vote}\xspace}

\newcommand{\lock}{\texttt{lock}\xspace}
\newcommand{\rank}{\texttt{rank}\xspace}

\newcommand{\latency} {good-case latency\xspace}
\newcommand{\clockskew}{\sigma}
\newcommand{\clone}{round-$0$\xspace}
\newcommand{\cltwo}{round-$1$\xspace}
\newcommand{\clthree}{round-$2$\xspace}
\newcommand{\timeout}{\texttt{timeout}\xspace}
\newcommand{\status}{\texttt{status}\xspace}

\newcommand{\vbblong}{{partially synchronous validated Byzantine broadcast}\xspace}
\newcommand{\vbbshort}{{psync-VBB}\xspace}
\newcommand{\psyncbblong}{{partially synchronous Byzantine broadcast}\xspace}
\newcommand{\psyncbbshort}{{psync-BB}\xspace}

\newtcolorbox{mybox}[1][]{
enhanced,
colback=white,
boxsep=0pt,
% code={\setstretch{1.25}},
#1
} 

\newcommand{\stitle}[1]{\vspace{0.5ex} \noindent\textsf{\textbf{#1}}}

\renewcommand{\paragraph}[1]{\smallskip\stitle{#1}}

\newcommand{\daniel}[1]{{}}

%%
%% end of the preamble, start of the body of the document source.
\begin{document}

%%
%% The "title" command has an optional parameter,
%% allowing the author to define a "short title" to be used in page headers.
\title{Good-case Latency of Byzantine Broadcast: A Complete Categorization}

%%
%% The "author" command and its associated commands are used to define
%% the authors and their affiliations.
%% Of note is the shared affiliation of the first two authors, and the
%% "authornote" and "authornotemark" commands
%% used to denote shared contribution to the research.
\author{Ittai Abraham}
% \authornote{iabraham@vmware.com}
\affiliation{%
  \institution{VMware Research}
  \country{Israel}
}
\email{iabraham@vmware.com}

\author{Kartik Nayak}
% \authornote{kartik@cs.duke.edu}
\affiliation{%
  \institution{Duke University}
  \country{USA}
}
\email{kartik@cs.duke.edu}

\author{Ling Ren}
% \authornote{renling@illinois.com}
\affiliation{%
  \institution{University of Illinois at Urbana-Champaign}
  \country{USA}
}
\email{renling@illinois.com}

\author{Zhuolun Xiang}
% \authornote{Lead author}
\affiliation{%
  \institution{University of Illinois at Urbana-Champaign}
  \country{USA}
}
\email{xiangzl@illinois.com}

%%
%% By default, the full list of authors will be used in the page
%% headers. Often, this list is too long, and will overlap
%% other information printed in the page headers. This command allows
%% the author to define a more concise list
%% of authors' names for this purpose.
% \renewcommand{\shortauthors}{Trovato and Tobin, et al.}

%%
%% The abstract is a short summary of the work to be presented in the
%% article.
\begin{abstract}
  This paper explores the problem {\em good-case latency} of Byzantine fault-tolerant broadcast, motivated by the real-world latency and performance of practical state machine replication protocols.
  The good-case latency measures the time it takes for all non-faulty parties to commit when the designated broadcaster is non-faulty. 
  We provide a complete characterization of tight bounds on good-case latency, in the authenticated setting under synchrony, partial synchrony and asynchrony.
  Some of our new results may be surprising, e.g., 2-round PBFT-style partially synchronous Byzantine broadcast is possible if and only if $n\geq 5f-1$, and a tight bound for good-case latency under $n/3<f<n/2$ under synchrony is not an integer multiple of the delay bound.  
  
\end{abstract}

%%
%% The code below is generated by the tool at http://dl.acm.org/ccs.cfm.
%% Please copy and paste the code instead of the example below.
%%
% \begin{CCSXML}
% <ccs2012>
%   <concept>
%       <concept_id>10003752.10003809.10010172</concept_id>
%       <concept_desc>Theory of computation~Distributed algorithms</concept_desc>
%       <concept_significance>500</concept_significance>
%       </concept>
%   <concept>
%       <concept_id>10002978.10003006.10003013</concept_id>
%       <concept_desc>Security and privacy~Distributed systems security</concept_desc>
%       <concept_significance>300</concept_significance>
%       </concept>
%  </ccs2012>
% \end{CCSXML}

% \ccsdesc[500]{Theory of computation~Distributed algorithms}
% \ccsdesc[300]{Security and privacy~Distributed systems security}

% %%
% %% Keywords. The author(s) should pick words that accurately describe
% %% the work being presented. Separate the keywords with commas.
% \keywords{Byzantine broadcast, asynchrony, synchrony, latency, good-case, optimal}

%% A "teaser" image appears between the author and affiliation
%% information and the body of the document, and typically spans the
%% page.

%%
%% This command processes the author and affiliation and title
%% information and builds the first part of the formatted document.
\maketitle

\section{Introduction}
\label{sec:intro}

% \begin{table}[t]
% \centering
% \setlength\doublerulesep{0.5pt}
% \begin{tabular}{|c|c|c|c|}
% \hline
% \textbf{Timing Model} & \textbf{Resilience} & \textbf{Lower Bound} & \textbf{Upper Bound} \\ \hhline{====} 
% \multirow{2}{*}{Asynchrony} & $n\geq 5f-1$ & $2$ rounds & \textbf{$\bm{2}$ rounds} \\ \cline{2-4} 
%  & $3f+1\leq n\leq 5f-2$ & \textbf{$\bm{3}$ rounds} & $3$ rounds~\cite{bracha1987asynchronous} \\ 
%  \hhline{====}
% \multirow{5}{*}{Synchrony} & $0<f < n/3$ & $2\delta$ & $\bm{ 2 \delta}$ \\ \cline{2-4} 
%  & $f=n/3$ & $\bm{\Delta+\delta}$ & $\bm{\Delta+\delta}$ \\ \cline{2-4} 
%  & \multirow{2}{*}{$n/3< f <n/2$} & \multirow{2}{*}{\textbf{\begin{tabular}[c]{@{}c@{}}sync start $\bm{\Delta+\delta}$\\ unsync start $\bm{\Delta+1.5\delta}$\end{tabular}}} & \multirow{2}{*}{\textbf{\begin{tabular}[c]{@{}c@{}}sync start $\bm{\Delta+\delta}$\\ unsync start $\bm{\Delta+1.5\delta}$\end{tabular}}} \\
%  &  &  &  \\ \cline{2-4} 
%  & $n/2 \leq f<n$ & {$\bm{(\lfloor \frac{n}{n-f} \rfloor -1)\Delta}$} & $O(\frac{n}{n-f})\Delta$~\cite{constantroundBB} \\ 
%  \hline
% \end{tabular}
% \smallskip 
% \caption{Upper and lower bounds for \latency of Byzantine fault-tolerant broadcast. \textbf{Our new results are marked bold.} 
% Under asynchrony, all lower and upper bounds are for BRB.
% To strengthen all results under synchrony, all lower bounds are for BRB and synchronized start, and all upper bounds are for BB and unsynchronized start, unless otherwise specified. }
% \vspace{-2em}
% \label{table:results}
% \end{table}

\begin{table*}[t]
\centering
\setlength\doublerulesep{0.5pt}
\begin{tabular}{|c|c|c|c|c|}
\hline
\textbf{Problem} & \textbf{Timing Model} & \textbf{Resilience} & \textbf{Lower Bound} &
\textbf{Upper Bound} \\ \hhline{=====} 
BRB & Asynchrony & $n\geq 3f+1$ & $2$ rounds %\textbf{$\bm{2}$ rounds} 
& $2$ rounds %\textbf{$\bm{2}$ rounds} 
\\
 \hhline{=====}
\multirow{2}{*}{Psync-BB} & \multirow{2}{*}{Partial Synchrony} & $n\geq 5f-1$ & $2$ rounds %\textbf{$\bm{2}$ rounds} 
& \textbf{$\bm{2}$ rounds} \\ \cline{3-5} 
& & $3f+1\leq n\leq 5f-2$ & \textbf{$\bm{3}$ rounds} & $3$ rounds~\cite{castro1999practical} \\ 
 \hhline{=====}
 \multirow{5}{*}{BB} &
\multirow{5}{*}{Synchrony} & $0<f < n/3$ & ${2\delta}$ & $\bm{ 2 \delta}$ \\ \cline{3-5} 
 & & $f=n/3$ & $\bm{\Delta+\delta}$ & $\bm{\Delta+\delta}$ \\ \cline{3-5} 
 & & \multirow{2}{*}{$n/3< f <n/2$} & \multirow{2}{*}{\textbf{\begin{tabular}[c]{@{}c@{}}sync start $\bm{\Delta+\delta}$\\ unsync start $\bm{\Delta+1.5\delta}$\end{tabular}}} & \multirow{2}{*}{\textbf{\begin{tabular}[c]{@{}c@{}}sync start $\bm{\Delta+\delta}$\\ unsync start $\bm{\Delta+1.5\delta}$\end{tabular}}} \\
 &  &  & & \\ \cline{3-5} 
 & & $n/2 \leq f<n$ & {$\bm{(\lfloor \frac{n}{n-f} \rfloor -1)\Delta}$} & $O(\frac{n}{n-f})\Delta$~\cite{constantroundBB} \\ 
 \hline
\end{tabular}
\smallskip 
\caption{Upper and lower bounds for \latency of Byzantine fault-tolerant broadcast. \textbf{Our new (and nontrivial) results are marked bold.} 
Under asynchrony, all lower and upper bounds are for BRB.
Under partial synchrony, all lower and upper bound are for a new broadcast formulation named \psyncbblong we propose (Definition~\ref{def:psyncbb}).
To strengthen all results under synchrony, all lower bounds are for BRB and synchronized start, and all upper bounds are for BB and unsynchronized start, unless otherwise specified. }
\vspace{-1em}
\label{table:results}
\end{table*}

Byzantine fault-tolerant broadcast is a fundamental problem in distributed computing.
In Byzantine broadcast (BB) or Byzantine reliable broadcast (BRB), there is a designated broadcaster that sends its input value to all parties, and all non-faulty parties must deliver the same value. Moreover, if the broadcaster is non-faulty, then the delivered value must be broadcaster's input. 
BB requires all non-faulty parties to eventually terminate, while BRB relaxes the condition to only require termination when the broadcaster is honest or if a non-faulty party terminates.

One of the most important practical applications of broadcast is to implement \textit{Byzantine fault-tolerant state machine replication} (BFT SMR), which ensures all non-faulty replicas agree on the same sequence of client inputs to provide the client with the illusion of a single non-faulty server.
Most of the practical solutions for BFT SMR are based on the Primary-Backup paradigm. In this approach, in each \emph{view}, one replica is designated to be the leader, and is in charge of a view to drive decisions, until replaced by the leader of the next view due to malicious behavior or bad network connection. 
The Primary-Backup approach for SMR exposes deep connections to broadcast. 
Each view in BFT SMR is similar to an instance of broadcast where with the leader taking on a similar role as the broadcaster\footnote{Each view of BFT SMR does not require committing an honest leader's proposed value, and thus, is weaker than BB and BRB.}, and hence an efficient broadcast protocol can be converted to an SMR protocol with similar efficiency guarantees.
Due to the importance of BFT SMR and the recent interest in permissioned blockchains, improving the latency of the BFT SMR and understanding its fundamental latency limit have been a research focus for several decades~\cite{castro1999practical, kotla2007zyzzyva, martin2006fast, gueta2019sbft, synchotstuff, abraham2020optimal}.

However, 
there exists a 
mismatch between theoretical studies on broadcast latency and practical SMR systems.
Most of the theoretical studies focused on the {\em worst-case latency} of broadcast. 
For Byzantine broadcast, the worst-case number of rounds required is $f+1$ to tolerate $f$ faults \cite{fischer1982lower}. 
As $f$ is typically assumed to be linear in $n$, any BB protocol will inevitably have a poor worst-case latency as $n$ increases.
However, in contrast, practical BFT SMR systems care more about the \emph{good-case}, in which a stable non-faulty leader stays in charge and drives consensus on many decisions. 
Another relatively minor disconnect lies in the ``life cycle'' of the protocol.
Broadcast and reliable broadcast require all parties to \emph{halt} or \emph{terminate} after agreeing on a single value, while practical SMR protocols are intended to run forever; replicas \emph{commit} or \emph{decide} on an ever-growing sequence of values.
Hence, in contrast to the worst-case latency to halt, we argue the importance of \emph{\latency to commit} for broadcast protocols defined as follows.

\begin{definition}[Informal; Good-case Latency]\label{def:goodcase:sync}
    The \latency of a broadcast protocol is the time for all honest parties to commit (over all executions and adversarial strategies), given the designated broadcaster is honest.
\end{definition}

In fact, practical systems often implicitly talk about the good-case latency without formally defining it as such. In a talk from 2000~\cite{Barbara2001}, Barbara Liskov commented that she did not know whether 3 rounds are optimal for PBFT~\cite{castro1999practical}. 
It is clear that Liskov was implicitly referring to the number of rounds needed to reach agreement in the good case that the primary is non-faulty.

Our work's main contribution is a formal theoretical framework to address exactly this question and be able to prove the minimum number of phases in the good-case where the primary is non-faulty. In fact, our work gives a complete and tight categorization on \latency, for any threshold adversary size, both for the synchronous model, partially synchronous model and for the asynchronous model in the authenticated setting (i.e., with signatures). 
% \daniel{motivating \psyncbbshort}
Another contribution of our work is a new broadcast formulation, named \psyncbblong (\psyncbbshort), for the partially synchronous model to better abstract a single shot of BFT SMR such as PBFT, since no existing broadcast formulation captures the properties of partially synchronous BFT SMR protocols such as PBFT. For instance, most existing BFT SMR solutions are leader-based and will replace an honest leader during asynchrony, while Byzantine broadcast and Byzantine reliable broadcast force parties to commit an honest broadcaster's value.
We summarize our findings in Table~\ref{table:results} and some of them may be quite surprising.

\paragraph{Complete categorization for \latency in asynchrony.}
The results for BRB under asynchrony turn out to be strightforward.
Although by definition BRB can always be solved with \latency of $2$ rounds, the results are not useful for solving BFT SMR, since the parties in BRB are allowed to never commit when the broadcaster is Byzantine.

\begin{theorem}[Informal; tight bounds on \latency in asynchrony]
For Byzantine Reliable Broadcast with $f$ Byzantine parties in the asynchronous and authenticated setting, in the good-case, 
2 rounds are necessary and sufficient iff $\, n \geq 3f+1 $ (Section~\ref{sec:async}).
\end{theorem}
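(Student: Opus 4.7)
The plan is to prove the theorem by handling the two directions of the iff separately, and then within the easier direction argue both that $2$ rounds suffice and that $1$ round does not.

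For the sufficiency direction ($n \geq 3f+1$ implies a 2-round BRB protocol), I would exhibit an explicit protocol. In round 1 the broadcaster sends a signed value $v$ to all parties; in round 2 each honest party that received a correctly signed $v$ forwards a signed \emph{echo} of $v$ to all parties; any party that collects $n-f$ echoes of the same value (with distinct signatures) commits $v$. Safety follows from quorum intersection: since $n \geq 3f+1$, we have $n - f \geq 2f+1$, so any two echo quorums share at least $f+1$ parties, hence at least one honest party, and an honest party echoes at most one value. Good-case liveness is immediate: when the broadcaster is honest, all $n-f$ honest parties echo the same $v$ in round 2 and every honest party commits by the end of round 2. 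To satisfy the BRB totality property (if one honest party commits, all do) I would add an out-of-band relay step in which a committing party forwards its $n-f$ echo signatures; this is asynchronous and does not lengthen the good-case path.

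For the necessity direction ($n \leq 3f$ implies no BRB is possible at all, let alone in 2 good-case rounds), I would invoke the classical impossibility of asynchronous Byzantine agreement/reliable broadcast for $n \leq 3f$, which applies even with signatures because the obstruction is the indistinguishability between slow and crashed honest parties rather than message authenticity. A self-contained partitioning argument would split the $n$ parties into three groups of size $\lceil n/3 \rceil$ and schedule message delays so that two honest partitions each see a Byzantine broadcaster committing to different values with no intersecting honest evidence to distinguish the worlds, violating agreement.

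Finally, to show $2$ rounds are tight (not just achievable), I would argue that a $1$-round protocol is impossible even when $n \geq 3f+1$. In a $1$-round protocol every honest party must commit based solely on the message it receives directly from the broadcaster. A Byzantine broadcaster can then send a validly signed $v_0$ to some honest party $p$ and a validly signed $v_1 \neq v_0$ to another honest party $q$; by the good-case commit rule (which must commit on an honest broadcaster's message alone), both $p$ and $q$ commit their respective values, violating agreement. Hence $2$ rounds are necessary.

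The main obstacle is being careful about the BRB totality requirement in the sufficiency direction: the naive 2-round commit rule gives safety and good-case liveness immediately, but one must verify that the additional relay of $n-f$-sized echo certificates can be tacked on without inflating the good-case latency, and that the protocol still qualifies as BRB rather than merely ``commit notification.'' The rest of the argument is mostly invoking well-known facts (quorum intersection, $3f+1$ impossibility, equivocation in one round).
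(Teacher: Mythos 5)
Your proposal is correct and follows essentially the same route as the paper: the identical propose/echo/commit-on-$n-f$-votes protocol with vote-certificate forwarding for totality, the same equivocation-based indistinguishability argument for the 2-round lower bound (the paper phrases it slightly more carefully, noting that round-0 messages from non-broadcaster parties depend only on initial states, so a party committing after round 1 cannot distinguish an honest broadcaster from an equivocating one), and the same appeal to the classical $n\geq 3f+1$ solvability bound for the resilience direction.
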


\daniel{check here}

\paragraph{Complete categorization for \latency in partial synchrony.}
PBFT~\cite{castro1999practical} shows that in the good case, 3 rounds are sufficient with $n \geq 3f+1$ and FaB~\cite{martin2006fast} shows that 2 rounds are sufficient with $n \geq 5f+1$. Somewhat surprisingly, we show that these results are not tight in the authenticated model (which PBFT and all follow-up work assume), and the boundary between 2 and 3 rounds is at $n \geq 5f-1$. 
We propose a new broadcast formulation named \psyncbblong (\psyncbbshort) that abstracts a single-shot of BFT SMR under partial synchrony, and show the following results. 

\begin{theorem}[Informal; tight bounds on \latency in partial synchrony]
For Partially Synchronous Byzantine Broadcast with $f$ Byzantine parties in the partially synchronous and authenticated setting, in the good-case: 
\begin{enumerate}
\item 2 rounds are necessary and sufficient if $\, n \geq 5f-1 $ (Section~\ref{sec:psync:1}), and   
\item 3 rounds are necessary and sufficient if $\, 3f+1 \leq n < 5f-1$ (Section~\ref{sec:psync:2}).
\end{enumerate}
\end{theorem}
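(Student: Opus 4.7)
The plan is to prove four statements: sufficiency and necessity of $2$ rounds when $n\ge 5f-1$, and sufficiency and necessity of $3$ rounds when $3f+1\le n<5f-1$. The trivial half---that $1$ round never suffices for a broadcast problem with even one Byzantine party---follows from a single-message indistinguishability argument between an honest and an equivocating leader. The classical PBFT protocol~\cite{castro1999practical} provides a $3$-round good-case commit for all $n\ge 3f+1$, giving the $3$-round upper bound. So the real content is (a) a $2$-round protocol for $n\ge 5f-1$ and (b) a $2$-round impossibility for $n\le 5f-2$.

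\textbf{Two-round protocol when $n\ge 5f-1$.} I would follow the FaB-style fast-commit skeleton, refined with signatures and a carefully designed view-change. Inside a view, the leader multicasts its proposal in round $1$, each honest party that receives a valid proposal signs it and multicasts the vote in round $2$, and a party commits as soon as it collects $n-f$ matching signed votes. Within-view safety holds because any two commit quorums of size $n-f$ intersect in $n-2f\ge 3f-1$ parties, at least $2f-1$ of which are honest, so an equivocating leader cannot produce two conflicting commit certificates. Across views, the next leader gathers $n-f$ signed status messages reporting each party's latest vote and is forced to adopt any value supported by a suitable threshold of reports; the bound $n\ge 5f-1$ is exactly what is needed to guarantee that whenever a value was committed in a previous view, enough honest status reports surface in every view-change quorum to overwhelm up to $f$ Byzantine fabrications. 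The weaker liveness of \psyncbbshort---honest parties may skip a view without committing during asynchrony---is what lets us improve from FaB's $5f+1$ down to $5f-1$.

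\textbf{Two-round impossibility when $n\le 5f-2$.} This is the main obstacle. I would argue by contradiction: assume a $2$-round \psyncbbshort protocol exists at $n=5f-2$ and build a chain of indistinguishable executions forcing a safety violation. Partition the parties into five groups $A_1,\ldots,A_5$ of sizes at most $f$ summing to $5f-2$. Consider a sequence of worlds, each differing from the previous by one group either being Byzantine or having its round-$2$ messages delayed past every honest party's commit decision. Because every honest commit happens purely on the basis of a round-$2$ view---there is no later confirmation round to fall back on---careful scheduling of delays and Byzantine behavior makes two worlds indistinguishable to a critical honest party, yet they must commit different values. The heart of the argument is to verify that each transition in the chain stays within the $f$-Byzantine budget and respects partial-synchrony scheduling; this is precisely where $n=5f-2$ becomes the boundary, since with one more party the intersection structure permits a view-change rule strong enough to disambiguate the two worlds, whereas at $5f-2$ no such rule exists.
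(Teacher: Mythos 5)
Your decomposition into four claims matches the paper's, and two of them are handled the same way: the trivial $2$-round lower bound via a one-message equivocation argument, and the $3$-round upper bound via PBFT. The genuine gap is in your $2$-round protocol for $n\ge 5f-1$. The view-change rule you describe --- gather $n-f$ status reports and adopt any value supported by "a suitable threshold" --- does not close at this resilience, and the reason you offer for beating FaB (the weaker liveness of \psyncbbshort) is not what buys the improvement. Run the counting: a commit requires $n-f=4f-1$ matching votes, so at least $3f-1$ honest parties voted for the committed value $v$. A view-change quorum of $4f-1$ reports can miss up to $f$ of those honest voters and can contain $f$ Byzantine fabrications for a conflicting $v'$; worse, if the leader equivocated, up to $f$ \emph{honest} parties may legitimately report $v'$. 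The quorum can then hold only $2f-1$ reports for $v$ against up to $2f$ for $v'$, and no threshold rule disambiguates --- this is exactly the obstruction that forces FaB to $n\ge 5f+1$. The paper's key idea, absent from your sketch, is that with signatures a party can \emph{detect leader equivocation} (two distinct values signed by the leader) and, upon detection, insist that its $4f-1$ view-change messages come from parties \emph{other than the leader}. That quorum contains at most $f-1$ Byzantine reports, hence at least $3f$ honest ones, of which at least $3f+(3f-1)-(4f-1)=2f$ are for $v$ --- a strict majority of $4f-1$. In the no-equivocation case only one leader-signed value appears in the quorum, so the $2f-1$ reports for $v$ face no competition. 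Without this case split, safety across views fails at $n=5f-1$.

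Your impossibility sketch for $n\le 5f-2$ points in the right direction but is a plan rather than a proof. The paper's construction uses six entities (the broadcaster plus groups $A,B,C,D,E$ with $|A|,|C|,|D|\le f-1$ and $|B|=|E|=f$, not five groups of size at most $f$), and it crucially exploits partial synchrony rather than pure message reordering: in two of the five executions GST is pushed past the commit point so that all post-round-$0$ messages from one group are delayed indefinitely, and the contradiction in the middle execution is extracted from the \emph{termination-after-GST} obligation of \psyncbbshort, not from a $2$-round commit there. The assertion that the intersection structure flips exactly at $5f-1$ needs that explicit construction and counting to be substantiated.
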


The new 2-round good-case \psyncbbshort protocol solves a single shot of BFT SMR within $2$ rounds in the good case.
We extend the protocol to obtain a practical BFT SMR in our complementary paper~\cite{abraham2021fast}. 
Observe one interesting and important special case here: when $f=1$, we have $n=4=3f+1=5f-1$, and 3-round PBFT is not optimal, as 2 rounds are sufficient.

\paragraph{Complete categorization for \latency in synchrony.} 
We give a complete categorization for \latency in the synchronous model, where message delays are bounded by a known upper bound $\Delta$. 
There turns out to be a surprisingly rich spectrum here. 

We adopt the separation between the conservative \textit{worst-case} bound $\Delta$ and the \textit{actual} (unknown) bound $\delta \leq \Delta$ as suggested in \cite{ierzberg1989efficient, pass2017hybrid, abraham2020optimal, abraham2020brief, shrestha2020optimality}. 
Moreover, our categorization highlights the importance of the assumption on the synchronization of when the protocol starts at each party. We distinguish two models: the \textit{synchronized start} model assumes all parties start the protocol at the exact same time; the \textit{unsynchronized start} model assumes all parties start the protocol within a known time interval bounded by the clock skew (see Section~\ref{sec:prelim}).
Whenever applicable, we prove lower bounds in the synchronized start model (hence they also apply to unsynchronized start), and  upper bounds in the unsynchronized start model (hence they also apply to synchronized start). 
The case of $n/3<f<n/2$ is the only exception, where the tight bounds differ based on the synchronization assumption.
It is also worth highlighting that, somewhat surprisingly, the bound for $n/3<f<n/2$ with an unsynchronized start is $\Delta+1.5\delta$, which is {\em not} an integer multiple of the message delay.

\begin{theorem}[Informal; bounds on \latency in synchrony]
For Byzantine Broadcast and Byzantine Reliable Broadcast with $f$ Byzantine parties in the synchronous and authenticated setting, in the good-case: 
\begin{enumerate}
\item if $~0<f<n/3$, then $2\delta$ is necessary and sufficient (Section~\ref{sec:sync:1});

\item if $~f=n/3$, then $\Delta+\delta$ is necessary and sufficient (Section~\ref{sec:sync:2});

\item if $~n/3<f<n/2$ then $\Delta+\delta$ is necessary and sufficient in the synchronized start model (Section~\ref{sec:sync:syncstart}), and $\Delta+1.5\delta$ is necessary and sufficient in the unsynchronized start model (Section~\ref{sec:sync:unsyncstart}); 

\item if $n/2\leq f<n$ then $(\lfloor \frac{n}{n-f} \rfloor -1)\Delta$ is necessary, and $O(\frac{n}{n-f})\Delta$ is sufficient (Section~\ref{sec:sync:5}).
\end{enumerate}
\end{theorem}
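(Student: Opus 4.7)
The plan is to dissect the theorem along its four resilience regimes and prove matching upper and lower bounds for each. The upper bounds will be obtained by giving concrete protocols, while the lower bounds will uniformly follow an indistinguishability template: starting from an execution in which the broadcaster is honest and some party commits ``too early,'' construct a sibling execution (with a Byzantine broadcaster and an adversarial delivery schedule) whose transcript at that party is identical up to the commit time, but which forces a different value at some other honest party; this contradicts agreement.

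For the upper bounds, in case (1) with $f<n/3$ I would use the natural two-round echo protocol: the broadcaster sends $v$ at time $0$; each party echoes $v$ upon receipt; a party commits upon collecting $n-f$ signed echoes. The good-case latency is $2\delta$ and safety follows from quorum intersection $n-2f\geq f+1$. In cases (2) and (3) sync-start, quorum intersection alone no longer guarantees an honest common party, so I would add a \emph{$\Delta$-wait} step: after echoing, each party waits until global time $\Delta$ to check for equivocation evidence, then locks and disseminates a vote; a commit requires a quorum of locks, which by the wait cannot coexist with a conflicting lock, giving latency $\Delta+\delta$. In case (3) unsync-start, the local clocks differ by up to the skew $\sigma$, so the $\Delta$-wait must be performed against the local clock and a careful overlap of the vote dissemination with the skew window saves a half-round rather than a full round, yielding $\Delta+1.5\delta$. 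In case (4) with $f\geq n/2$, I would invoke the $O(n/(n-f))\Delta$ protocol of \cite{constantroundBB}.

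For the lower bounds, case (1) reduces to showing that one round is insufficient: at time $<2\delta$ a committer's view is consistent with an execution where the broadcaster is Byzantine and equivocates to two groups, so two honest parties commit on different values. Cases (2) and (3) sync-start use a two-execution argument in which one execution has an honest broadcaster whose message is delivered just before a putative commit time $<\Delta+\delta$, while the sibling has a silent/equivocating Byzantine broadcaster that produces the same view at the early committer; the $\Delta$ term arises because in synchrony ``I haven't seen a message'' is only conclusive after time $\Delta$. Case (4) requires a chain of $\lfloor n/(n-f)\rfloor$ pairwise-indistinguishable executions, each differing by one $\Delta$-shift in the delivery of withheld messages and by the identity of the Byzantine coalition, forcing an $\Omega(\lfloor n/(n-f)\rfloor - 1)\Delta$ gap between the earliest and latest forced commit times.

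The two main obstacles are both lower bounds. The harder of the two is case (3) unsync-start: squeezing out exactly $0.5\delta$ from the clock skew budget $\sigma$ requires building a two-sided skew-and-delay schedule whose net effect is a non-integer multiple of $\delta$, and care is needed to ensure the adversary stays within $\sigma$ and $\Delta$ simultaneously in every link of the indistinguishability chain. The second obstacle is case (4), where the absence of honest majority means each step of the chain must be justified by a coalitions argument rather than quorum intersection; getting the chain length to match $\lfloor n/(n-f)\rfloor-1$ exactly (as opposed to an approximation) requires a tight combinatorial partition of the party set that is reused across links.
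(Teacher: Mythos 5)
Your overall architecture (protocols for upper bounds, indistinguishability chains for lower bounds, the same four-way case split) matches the paper, and your lower-bound sketches for cases (1), (3)-unsynchronized and (4) are essentially the paper's arguments. However, there are two genuine gaps in the upper bounds, which is where most of the paper's work actually lies.

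First, your ``$\Delta$-wait then lock-and-vote'' protocol for cases (2) and (3) does not deliver $\Delta+\delta$. If the wait is anchored at global time $\Delta$ as you wrote, equivocation detection fails: an honest party receiving $v$ at time $t_1$ close to $\Delta$ forwards it, but the forwarded copy reaches a conflicting party only by $t_1+\Delta>\Delta$, after that party has already locked $v'$; with $f\geq n/3$ the Byzantine parties can then double-vote to complete two conflicting quorums. If instead the wait is anchored at $t_{prop}+\Delta$ (the correct anchor), your sequential structure --- wait, \emph{then} disseminate votes --- yields latency $\delta+\Delta+\delta=\Delta+2\delta$. The paper's resolution is to vote \emph{immediately} upon receiving the proposal and run the $\Delta$ equivocation-detection window concurrently with vote collection, committing at $t_{prop}+\Delta$ once both conditions hold. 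Moreover, you conflate $f=n/3$ with $n/3<f<n/2$: the paper uses genuinely different mechanisms (at $f=n/3$, two conflicting $(n{-}f)$-quorums expose \emph{all} Byzantine parties in their intersection, so commit messages from the remaining parties can be trusted; for $n/3<f<n/2$ with synchronized start, votes must carry the absolute receipt time $d$, commits use $f{+}1$-quorums with the condition that all votes have $d\leq t$ and no equivocation appears by $t+\Delta$, and locks are ranked by $t$). A single ``quorum of locks'' argument does not cover both regimes. Finally, as sketched your protocols only give BRB; to claim the bound for BB you also need the Byzantine-agreement fallback the paper runs at a fixed later time with the lock as input.

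Second, and more seriously, your plan for the $\Delta+1.5\delta$ upper bound contains no mechanism that actually produces the half-$\delta$ saving. ``Performing the $\Delta$-wait against the local clock with a careful overlap'' is exactly the prior state of the art and yields $\Delta+2\delta$. The paper's protocol introduces a new idea: for \emph{every} $d\in[0,\Delta]$ (a guess of the unknown $\delta$), a party sends a $d$-tagged vote at local time $t_{prop}+\Delta-0.5d$ provided no equivocation has been seen, and commits on $f{+}1$ matching $d$-votes only if no equivocation appears by $t_{prop}+\Delta+0.5d$; certificates are ranked by $d$ so that the committed value's certificate provably outranks any conflicting one (Lemma~\ref{lem:bb3:1}). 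The asymmetric $\pm 0.5d$ windows are precisely what make the good case close at $\delta+(\Delta-0.5\delta)+\delta=\Delta+1.5\delta$ while preserving agreement. Without this family of early votes and the ranking argument, the non-integer bound is not achievable by your outline, so this case needs a substantially new construction rather than a tuning of the synchronized-start protocol.
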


\section{Preliminaries}
\label{sec:prelim}

\paragraph{Model of execution.} 
We define a protocol for a set of $n$ parties, among which at most $f$ are Byzantine faulty and can behave arbitrarily. If a party remains non-faulty for the entire protocol execution, we call the party honest.
During an execution $E$ of a protocol, parties perform sequences of events, including {\em send, receive/deliver, local computation}. 
The {\em local history} at some party $i$ during an execution $E$ refers to the initial state and the sequence of events performed by party $i$ in $E$, denoted as $E|_i$.
An honest party $i$ cannot distinguish two executions $E_1$ and $E_2$ if its local history is identical in both executions, i.e., $E_1|_i=E_2|_i$.
If the protocol is deterministic, for any two executions, if an honest party has the same initial state and receives the same set of messages at the same corresponding time points (by its local clock), the honest party will have the same local history and thus cannot distinguish two executions.
We will use the standard indistinguishability argument to prove our lower bounds.
In this paper, we investigate results for deterministic authenticated protocols.
We use (perfect) digital signatures and public-key infrastructure (PKI), assume ideal unforgeability, and use $\langle m \rangle_i$ to denote a message $m$ signed by party $i$. We call any message valid, if the message is in the correct format and properly signed by the corresponding party (sender). We say a party detects equivocation if it receives messages containing different values signed by the broadcaster.

\paragraph{Synchrony, partial synchrony, and asynchrony.}
We consider three standard network models, synchrony, partial synchrony and asynchrony. Under \textit{asynchrony}, the adversary can control the message delay of any message to be an arbitrary non-negative value.
Under \textit{partial synchrony}, the adversary can control the message delay of any message to be an arbitrary non-negative value until a Global Stable Time (GST), after which the message delays are bounded by $\Delta$.

For a more accurate latency characterization under \textit{synchrony}, we follow the literature~\cite{ierzberg1989efficient, pass2017hybrid, synchotstuff} to separate the {\em actual bound $\delta$}, and the {\em conservative bound $\Delta$} on the network delay:
\begin{itemize}[itemsep=0pt,topsep=0pt]
    \item For one execution, $\delta$ is the upper bound for message delays between any pair of honest parties, but the value of $\delta$ is {\em unknown} to the protocol designer or any party. Different executions may have different $\delta$ values. 
    \item For all executions, $\Delta$ is the upper bound for message delays between any pair of honest parties, and the value of $\Delta$ is {\em known} to the protocol designer and all parties. 
\end{itemize}
In other words, $\Delta$ is the maximum network delay bound assumed by the synchronous model, and by definition $\delta\leq \Delta$. 
In practice, the parameter $\Delta$ is usually chosen conservative, and thus $\delta\ll\Delta$.
In this model, for any execution with an actual message delay bound $\delta$, the adversary can control the delay of any message between two honest parties to be any value in $[0, \delta]$.
Between a pair of parties where at least one is Byzantine, the adversary can control the message delay to be any non-negative value or even infinity.
This can be easily achieved by having Byzantine parties postpone sending or reading the message to simulate an arbitrary delay.
A simulated delay of $\infty$ means that a message is never sent or is discarded without being read. 

\paragraph{Clock synchronization.}
For synchronous protocols, 
each party is equipped with a local clock that starts counting at the beginning of the protocol execution. 
We assume the \emph{clock skew} is at most $\clockskew$, i.e., they start the protocol at most $\clockskew$ apart from each other.
We assume parties have {\em no clock drift} for convenience. 
There exist clock synchronization protocols~\cite{dolev1995dynamic, abraham2019synchronous} that guarantee a bounded clock skew of $\clockskew\leq \delta$.
At the same time, there is a negative result that shows clocks cannot be synchronized within $(1-1/n)\delta$~\cite{attiya2004distributed}, hence the clock skew must $\geq (1-1/n)\delta\geq 0.5\delta$ in the nontrivial case of $n\geq 2$ for broadcast.
If $\clockskew=0$, all parties start their local clock at the same time, and we refer to such model as the {\em  synchronized start} model. Otherwise, it is called the {\em unsynchronized start} model. 
To strengthen our results for the synchronous case, {\em all lower bound results assume the synchronized start model and all upper bound results assume the unsynchronized start model, unless otherwise specified}.
Since the value of $\delta$ is unknown to the protocol designer or any party, our upper bound results will use $\Delta$ as the parameter for clock skew in the protocol. Note that the actual clock skew is still $\clockskew\leq \delta$, guaranteed by the clock synchronization protocols~\cite{dolev1995dynamic, abraham2019synchronous}.
For the lower bound under unsynchronized start, we assume the smallest achievable clock skew $\clockskew=0.5\delta$ to strengthen the bound.
Partially synchronous protocols use local blocks with {\em no clock drift and arbitrary clock skew}, for timeout and view-change. \daniel{check here}
Asynchronous protocols do not use clocks and make no assumption on clocks. 

\paragraph{Byzantine broadcast variants.}
We investigate two standard variants of Byzantine broadcast problem for synchrony and asynchrony, and define a new variants of Byzantine broadcast problem for partial synchrony (Definition~\ref{def:psyncbb}).

\begin{definition}[Byzantine Broadcast (BB)]
\label{def:bb}
A Byzantine broadcast protocol must satisfy the following properties.
    \begin{itemize}[itemsep=0pt,topsep=0pt]
        \item Agreement. If two honest parties commit values $v$ and $v'$ respectively, then $v=v'$.
        \item Validity. If the designated broadcaster is honest, then all honest parties commit the broadcaster's value and terminate.
        \item Termination. All honest parties commit and terminate.
    \end{itemize}
\end{definition}

\begin{definition}[Byzantine Reliable Broadcast (BRB)]
\label{def:bb}
A Byzantine reliable broadcast protocol must satisfy the following properties.
    \begin{itemize}[itemsep=0pt,topsep=0pt]
        \item Agreement. Same as above.
        \item Validity. Same as above.
        \item Termination. If an honest party commits a value and terminates, then all honest parties commit a value and terminate.
    \end{itemize}
\end{definition}

For partial synchrony, we define {\em \psyncbblong} below, which abstracts the single-shot of existing partially synchronous BFT SMR protocols such as PBFT~\cite{castro1999practical}.

\begin{definition}[\psyncbblong (\psyncbbshort)]
\label{def:psyncbb}
A \psyncbblong protocol provides the following properties.
    \begin{itemize}[itemsep=0pt,topsep=0pt]
        \item Agreement. Same as above.
        
        \item Validity. If the designated broadcaster is honest and $GST=0$, then all honest replicas commit the broadcaster's value.
        % otherwise the value $v$ committed by any honest replica satisfies $\ff(v)=true$ ($v$ is externally valid).
        
        \item Termination. All honest replicas commit and terminate after $GST$.
    \end{itemize}
\end{definition}

In comparison, BB is a harder than both \psyncbbshort and BRB, as BRB relaxes the termination property of BB to allow either all honest parties commit or no honest party commits, and \psyncbbshort relaxes both termination and validity.
Also note that under synchrony, \psyncbbshort is the same as BB, and BB can only be solved under synchrony.
As a result, under synchrony, any protocol that solves BB (same as \psyncbbshort) also solves BRB, and thus {\em any upper bound result (including \latency) for BB also implies the same upper bound result for BRB}.
Moreover, {\em  any lower bound result for BRB also implies the same lower bound result for BB}.
BRB is not comparable to \psyncbbshort, as the validity of BRB is stronger (honest parties must commit the honest broadcaster's value in BRB but not in \psyncbbshort), while the termination of \psyncbbshort is stronger (all honest parties commits after GST in \psyncbbshort but not in BRB).

In this paper, we present all the upper and lower bounds in the strongest possible form for synchrony and asynchrony as follows.
We present upper bounds (i.e., construct protocols) for BB under synchrony (which equal \psyncbbshort and also solve BRB), BRB under asynchrony as BB is impossible to solve even with a single fault under asynchrony.
We present all lower bound results for BRB for both synchrony and asynchrony (which also apply to BB).

For partial synchrony, we present all lower bounds for \psyncbbshort, and upper bounds for a slightly stronger formulation that additionally requires the committed values to be {\em externally valid} for some external predicate $\ff: \{0,1\}^l\rightarrow \{true, false\}$, to better capture the partially synchronus BFT SMR like PBFT in practice.
% In this paper, we only consider solving \vbcshort under Byzantine faults, and thus we call it {\em \vbblong} ({\em \vbbshort}).
% To solve a single-shot of BFT SMR under partial synchrony, we add one additional property to \psyncbbshort that requires the committed values are externally valid for some external predicate $\ff: \{0,1\}^l\rightarrow \{true, false\}$.
% \begin{definition}[External Validity for \psyncbbshort]
% \label{def:vbb}
% If the designated broadcaster is honest and $GST=0$, then all honest replicas commit the broadcaster's value; otherwise the value $v$ committed by any honest replica satisfies $\ff(v)=true$ ($v$ is externally valid).
% \end{definition}

\begin{definition}[\vbblong (\vbbshort)]
\label{def:psyncbb}
A \vbblong protocol provides the following properties.
    \begin{itemize}[itemsep=0pt,topsep=0pt]
        \item Agreement. Same as above.
        
        \item Validity. If the designated broadcaster is honest and $GST=0$, then all honest replicas commit the broadcaster's value;
        otherwise the value $v$ committed by any honest replica satisfies $\ff(v)=true$ ($v$ is externally valid).
        
        \item Termination. Same as above.
    \end{itemize}
\end{definition}

% We call \psyncbbshort with the above validity as {\em \vbblong (\vbbshort)}.
A \vbbshort protocol solves \psyncbbshort protocol by definition, and directly solves a single-shot of BFT SMR.
Hence, we will present upper bounds for \vbbshort, and the extension to a practical BFT SMR protocol can be found in our complementary paper~\cite{abraham2021fast}.

\daniel{check above}

We will also use {\em Byzantine agreement} as a primitive to simplify the construction of our BB protocols under synchrony. 
The Byzantine agreement gives each party an input, and its validity requires that if all honest parties have the same input value, then all honest parties commit that value.
In addition, due to clock skew, in our synchronous BB protocols, the honest parties may invoke the BA at times at most $\clockskew$ apart from each other. 
Therefore, we need the BA primitive to tolerate up to $\clockskew$ clock skew.
For instance, any synchronous lock-step BA can do so by using a clock synchronization algorithm \cite{dolev1995dynamic, abraham2019synchronous} to ensure at most $\Delta$ clock skew, and setting each round duration to be $2\Delta$ to enforce the abstraction of lock-step rounds. 
Our synchronous BB protocols in Figure~\ref{fig:sync:3f},  \ref{fig:sync:D+d} and \ref{fig:sync:D+1.5d}, \ref{fig:sync:2d} will use such a BA primitive.

\paragraph{Good-case latency of broadcast.}
As explained in Section~\ref{sec:intro}, improving the latency performance of BFT SMR protocols motivates our investigation on the \latency of the family of Byzantine fault-tolerant broadcast protocols.
Depending on the network model, the measurement of latency is different. 
% Recall that by definition, in BRB and BB, when the broadcaster is honest, all honest parties will commit a value and terminate.

\begin{definition}[Good-case Latency under Synchrony]\label{def:goodcase:sync}
    A Byzantine broadcast (or Byzantine reliable broadcast) protocol has \latency of $~T$ under synchrony, if all honest parties commit within time $~T$ since the broadcaster starts the protocol (over all executions and adversarial strategies), given the designated broadcaster is honest.
\end{definition}

To measure the latency of a partially synchronous protocol, we use the natural notion of {\em synchronous rounds}, following partially synchronous protocols like PBFT~\cite{castro1999practical}.

\begin{definition}[Good-case Latency under Partial Synchrony]\label{def:goodcase:psync}
    A \psyncbblong protocol has \latency of $R$ rounds under partial synchrony, if all honest parties commit within $R$ synchronous round (over all executions and adversarial strategies), given the designated broadcaster is honest and $GST=0$.
\end{definition}
\daniel{use sync rounds for psync}

To measure the latency of an asynchronous protocol, we adopt the standard and natural notion of {\em asynchronous rounds} from the literature~\cite{canetti1993fast}.
We defer the formal definitions to Appendix~\ref{app:asyncmodel}.

\begin{definition}[Good-case Latency under Asynchrony]\label{def:goodcase:async}
    A Byzantine reliable broadcast protocol has \latency of $R$ rounds under asynchrony, if all honest parties commit within asynchronous round $R$ (over all executions and adversarial strategies), given the designated broadcaster is honest.
\end{definition}

\section{Asynchronous Byzantine Fault-tolerant Broadcast}
\label{sec:async}
The standard broadcast formulation for asynchrony is Byzantine reliable broadcast, which is solvable if and only if $n\geq 3f+1$.
We show the {\em tight} lower and upper bound on the \latency of asynchronous BRB is $2$ rounds.

\paragraph{BRB lower bound $2$ rounds under $f>0$.}
This bound is almost trivial. The intuition is that if any BRB can guarantee a \latency of $1$ round, then parties must commit after receiving from the broadcaster. In an execution where the broadcaster is Byzantine and equivocates, this will lead to safety violations.

\begin{theorem}\label{thm:lb:async:2}
    Any Byzantine reliable broadcast protocol that is resilient to $f>0$ faults must have a \latency of at least $2$ rounds under asynchrony.
\end{theorem}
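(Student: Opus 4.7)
The plan is to argue by contradiction: assume some BRB protocol $\Pi$ has \latency of only $1$ asynchronous round, and derive a violation of agreement via an equivocating Byzantine broadcaster together with a standard indistinguishability argument. The heart of the matter is that a $1$-round commit must depend only on messages a party can gather within the first round, so the broadcaster can split honest parties into groups that deterministically commit different values.

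First I will fix two distinct values $v\neq v'$ and set up three executions. In $E_1$, the broadcaster $L$ is honest with input $v$; by hypothesis every honest party commits $v$ using only information received by the end of round~$1$. In $E_3$, symmetrically, $L$ is honest with input $v'$ and every honest party commits $v'$. In the crucial execution $E_2$, $L$ is Byzantine: it sends to $p_1$ exactly the round-$0$ message it would have sent in $E_1$ (carrying $v$, properly signed) and to $p_2$ exactly the round-$0$ message it would have sent in $E_3$ (carrying $v'$, properly signed). The adversary then delays every message between honest parties arbitrarily, long enough for $p_1$ and $p_2$ to commit. Two distinct honest parties $p_1, p_2$ exist because any BRB protocol under asynchrony requires $n\geq 3f+1\geq 4$ whenever $f\geq 1$, so the statement is nontrivial only in this regime.

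Next I will invoke indistinguishability. Round-$0$ messages are produced from each party's initial state alone and the broadcaster's signed value does not depend on any incoming message, so every non-broadcaster honest party sends the same round-$0$ message in $E_1$, $E_2$, and $E_3$. Consequently $p_1$'s local history through the end of round~$1$ in $E_2$ is identical to its history in $E_1$, forcing the deterministic protocol to have $p_1$ commit $v$; symmetrically $p_2$ commits $v'$ because its history matches $E_3$. Since $v\neq v'$, this violates agreement, contradicting the assumed BRB protocol.

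The main obstacle I anticipate is being fully precise about what ``commits within one asynchronous round'' means, so that the indistinguishability step is airtight. I plan to adopt the standard convention (as formalized in the appendix the paper cites from Canetti--Rabin) in which a round-$r$ message may depend only on messages of rounds strictly less than $r$ that have been received, and then observe that $p_1$'s commit in $E_2$ is a deterministic function of (i) the broadcaster's round-$0$ message to $p_1$ and (ii) round-$0$ messages from other honest parties, both of which coincide with $E_1$ by unchanged initial state. Once this modelling point is nailed down, everything else is routine, which is why the theorem is labelled ``almost trivial.''
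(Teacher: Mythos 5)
Your proposal is correct and follows essentially the same route as the paper's proof: three executions (honest broadcaster with $v$, honest broadcaster with $v'$, and an equivocating Byzantine broadcaster), the observation that round-$0$ messages depend only on initial states, and an indistinguishability argument forcing two honest parties to commit conflicting values, violating agreement. The only cosmetic difference is that you track two individual parties $p_1,p_2$ (justifying their existence via $n\geq 3f+1$) where the paper partitions the $n-1$ non-broadcaster parties into two nonempty groups, which changes nothing of substance.
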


\begin{proof}[Proof of Theorem~\ref{thm:lb:async:2}.]
    Suppose there exists a BRB protocol $\Pi$ that has a \latency of $1$ round, which means the honest parties can always commit after receiving all round-$0$ messages but before receiving any round-$1$ messages, if the designated broadcaster is honest.
    Let party $s$ be the broadcaster, and divide the remaining $n-1$ parties into two groups $A,B$ each with $\geq 1$ party. 
    For brevity, we often use $A$ ($B$) to refer all the parties in $A$ ($B$).
    Consider the following three executions of $\Pi$. 
    % All the executions constructed below have message delays equal to $\Delta$.
\begin{enumerate}
    \item Execution  1. The broadcaster $s$ is honest, and sends $0$ to all parties in round $0$.
    Since the broadcaster is honest, by validity and \latency, parties in $A,B$ will commit $0$ after receiving all round-$0$ messages but before receiving any round-$1$ messages.
    
    \item Execution  2. The broadcaster $s$ is honest, and sends $1$ to all parties in round $0$.
    Since the broadcaster is honest, by validity and \latency, parties in $A,B$ will commit $1$ after receiving all round-$0$ messages but before receiving any round-$1$ messages.
    
    \item Execution  3. The broadcaster $s$ is Byzantine, it sends $0$ to parties in $A$ and $1$ to parties in $B$ in round $0$.
\end{enumerate}

    {\em Contradiction.} 
    The set of round-$0$ messages received by $A$ from $B$ is identical in Execution 1 and 3 since the round-$0$ messages only depend on the initial states.
    Therefore, the parties in $A$ cannot distinguish Execution  $1$ and $3$ before receiving any round-$1$ message,  and thus will commit $0$ in Execution  $3$.
    Similarly, the parties in $B$ cannot distinguish Execution  $2$ and $3$ before receiving any round-$1$ message, and will commit $1$ in Execution  $3$.
    However, this violates the agreement property of BRB, and therefore no such protocol $\Pi$ exists.
\end{proof}

The same proof also applies to an even weaker broadcast formulation named Byzantine consistent broadcast (BCB), where termination of all honest parties is required only when the broadcaster is honest.

\paragraph{BRB upper bound $2$ rounds under $n\geq 3f+1$.}
\label{sec:async:ub:2r}
We show the tightness of the bound by presenting a trivial authenticated protocol $2$-round-BRB, which has \latency of $2$ rounds with $n\geq 3f+1$ parties, as presented in Figure~\ref{fig:async:2r}.
After the broadcaster proposes its value and parties send a \vote for the first valid proposal, each party waits for $n-f$ \vote messages for the same value to commit.

\begin{figure}[tb]
    \centering
    \begin{mybox}
\begin{enumerate}%[itemsep=0pt,topsep=0pt]
    \item\label{rb1:step:propose} \textbf{Propose.} The designated broadcaster $L$  with input $v$ sends $\langle \texttt{propose}, v\rangle$ to all  parties.

    \item\label{rb1:step:vote1} \textbf{Vote.} 
    When receiving the first proposal $\langle \texttt{propose}, v\rangle$ from the broadcaster,
    send a \vote message for $v$ to all parties in the form of $\langle \vote, v \rangle_i$.

    \item\label{rb1:step:commit} \textbf{Commit.}
    When receiving $n-f$ signed \vote messages for $v$, forward these \vote messages to all other parties, commit $v$ and terminate.

\end{enumerate}
    \end{mybox}
    \vspace{-1em}
    \caption{$2$-round-BRB Protocol with $n\geq 3f+1$}
    \vspace{-1em}
    \label{fig:async:2r}
\end{figure}

\begin{theorem}\label{thm:ub:async}
    The $2$-round-BRB protocol solves Byzantine reliable broadcast with $n\geq 3f+1$ in the asynchronous authenticated setting, and has optimal good-case latency of $~2$ rounds.
\end{theorem}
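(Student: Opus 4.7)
The plan is to verify the three BRB properties (Agreement, Validity, Termination) for the protocol in Figure~\ref{fig:async:2r}, and then observe that the good-case latency upper bound of $2$ rounds matches the lower bound from Theorem~\ref{thm:lb:async:2}, establishing optimality. The key quantitative fact I will use throughout is the quorum intersection property: with $n\geq 3f+1$, any two vote quorums of size $n-f$ intersect in at least $n-2f\geq f+1$ parties, so the intersection contains at least one honest party.

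First I would handle Validity. If the broadcaster is honest with input $v$, every honest party receives $\langle\texttt{propose},v\rangle$, and since this is the only proposal an honest party sees from the broadcaster, each honest party sends $\langle\texttt{vote},v\rangle_i$. With at least $n-f$ honest parties, every honest party eventually collects $n-f$ valid \vote messages for $v$ and commits $v$. This argument also yields the good-case latency of $2$ rounds directly: one round for the proposal, one round for the votes to be collected.

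Next I would prove Agreement. Suppose honest parties $i$ and $j$ commit values $v$ and $v'$ respectively. Each must have collected $n-f$ signed \vote messages, so by quorum intersection some honest party $k$ signed a vote in both quorums. But an honest party votes at most once (only for the first valid proposal it sees), so $v=v'$. For Termination, I would exploit the forwarding step in Rule~\ref{rb1:step:commit}: whenever an honest party commits $v$, it first relays the $n-f$ \vote messages certifying $v$ to all other parties. Under asynchrony every such message is eventually delivered, so every other honest party eventually collects $n-f$ \vote messages for $v$, possibly via this relay, and commits. Together with the Theorem~\ref{thm:lb:async:2} lower bound, this shows $2$ rounds is optimal.

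I do not anticipate a real obstacle here; the only subtlety is ensuring the forwarding in the commit rule is enough to guarantee Termination in the worst case where only one honest party happens to see the original $n-f$ votes (e.g., because Byzantine voters send their \vote messages to just one honest recipient), and this is exactly what the relay of the certificate handles.
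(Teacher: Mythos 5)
Your proposal is correct and follows essentially the same route as the paper's proof: quorum intersection ($2(n-f)-n\geq f+1$) for Agreement, the honest broadcaster causing all $n-f$ honest parties to vote for the same value for Validity and the $2$-round latency, and the relay of the $n-f$ \vote certificate at commit time for Termination. The only detail worth making explicit (which the paper does) is that in the good case no honest party can commit a value $v'\neq v$ \emph{before} collecting the quorum for $v$, because only the $f<n-f$ Byzantine parties could vote for $v'$ and hence no quorum for $v'$ ever forms; this is implied by your Agreement argument but deserves a sentence in Validity.
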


\begin{proof}
    {\bf Agreement.}
    If any two honest parties commit different values at Step~\ref{rb1:step:commit}, then by standard quorum intersection argument, two sets of $n-f$ \vote messages must intersect at $\geq 2(n-f)-n\geq f+1$ parties, which implies some honest party sends \vote for different values, a contradiction.

    {\bf Validity and Good-case Latency.}
    If the broadcaster is honest, it sends the same proposal of value $v$ to all parties. Then all $n-f$ honest parties will multicast the \vote message for $v$. The Byzantine parties cannot make any honest party to commit a different value since $f<n-f$. All honest parties will eventually commit $v$ after receiving $n-f$ \vote messages at Step~\ref{rb1:step:commit} and terminate. The commit latency is $2$ rounds if the broadcaster is honest.
    
    {\bf Termination.}
    Suppose an honest party $h$ commits $v$ and terminates, its forwarded $n-f$ \vote messages for $v$ will eventually lead all honest parties to commit and terminate. 
\end{proof}

\section{Partially Synchronous Byzantine Fault-tolerant Broadcast}

In this section, we will present {\em tight} lower and upper bound results on the \latency of partially synchronous \psyncbblong under different resilience guarantees.
All the lower bound results are for authenticated \psyncbblong, and all the upper bound results are for authenticated \vbblong.

\subsection{$n\geq 5f-1$,  Matching Lower and Upper Bounds of $2$ Rounds}
\label{sec:psync:1}

\paragraph{\psyncbbshort lower bound $2$ rounds under $f>0$.}
Similar to the $2$-round lower bound for asynchronous BRB, this bound is also trivial and can be implied by a similar proof of Theorem~\ref{thm:lb:async:2}, which we will 
omit for brevity.

\begin{theorem}\label{thm:lb:psync:2}
    Any \psyncbblong protocol that is resilient to $f>0$ faults must have a \latency of at least $2$ rounds under partial synchrony.
\end{theorem}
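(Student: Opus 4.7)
The plan is to mirror the three-execution indistinguishability argument used in Theorem~\ref{thm:lb:async:2} almost verbatim, since the only way in which partial synchrony differs from asynchrony for this argument is that the adversary is allowed to delay honest-to-honest messages freely before GST, which is exactly what the argument needs. Assume for contradiction that there is a \psyncbbshort protocol $\Pi$ with good-case latency of $1$ synchronous round. Fix the broadcaster $s$, and partition the remaining $n-1$ parties into two nonempty groups $A$ and $B$ (this is possible since a solvable regime must have $n\geq 3f+1\geq 4$, so $n-1\geq 3$).

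First I would introduce two ``control'' executions $E_1$ and $E_2$ with $\text{GST}=0$ in which $s$ is honest and inputs $0$ and $1$ respectively. By the validity clause of Definition~\ref{def:psyncbb} and the assumed $1$-round good-case latency of $\Pi$, every honest party commits $0$ in $E_1$ and $1$ in $E_2$ after processing the round-$0$ messages (and before any round-$1$ message has been delivered). Next I would define the ``adversarial'' execution $E_3$ in which $s$ is Byzantine and sends $\propose(0)$ to every party in $A$ and $\propose(1)$ to every party in $B$ in round $0$; the adversary picks $\text{GST}$ to be arbitrarily large and delays every message that crosses the $A$--$B$ cut until after both groups have finished their first round (permissible under partial synchrony before GST).

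Then I would close the argument with the standard indistinguishability step: a party in $A$ sees the same round-$0$ messages in $E_3$ as in $E_1$ (the broadcaster supplies the value $0$ and intra-$A$ round-$0$ messages depend only on initial states and the value received from $s$), and by construction no message from $B$ arrives at $A$ before commitment, so $E_3|_a=E_1|_a$ for every $a\in A$ up to the commit point, forcing $a$ to commit $0$ in $E_3$. Symmetrically every party in $B$ commits $1$ in $E_3$, violating agreement. I do not expect a genuine obstacle; the only subtlety is verifying that the good-case latency hypothesis (which requires $\text{GST}=0$ and an honest broadcaster) actually applies to $E_1$ and $E_2$---which it does by construction---and that this behavior transfers to $E_3$ purely through indistinguishability of local histories, despite $s$ being Byzantine there.
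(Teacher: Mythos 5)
Your overall strategy is the right one --- the paper itself gives no separate proof of this theorem and simply states that the argument of Theorem~\ref{thm:lb:async:2} carries over --- but your construction of $E_3$ deviates from that argument in a way that breaks the key indistinguishability claim. In $E_1$ (where $GST=0$ and all delays are at most $\Delta$), a party $a\in A$ receives the round-$0$ messages of \emph{every} party, including those in $B$, by the end of the first synchronous round; the $1$-round latency hypothesis only guarantees that $a$ commits before processing any round-$1$ message, not before receiving $B$'s round-$0$ messages. A protocol is perfectly entitled to wait for round-$0$ messages from all $n$ (or $n-f$) parties before committing and still have good-case latency $1$. In your $E_3$ you delay \emph{every} message crossing the $A$--$B$ cut, including $B$'s round-$0$ messages, so $a$'s local history in $E_3$ is missing messages that are present in its history in $E_1$ at the commit point. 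The claim $E_3|_a=E_1|_a$ is therefore false as stated, and against a protocol that waits for a quorum of round-$0$ messages, $a$ simply does not commit in your $E_3$ before GST and no contradiction is reached.

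The fix is to do what the asynchronous proof does: deliver all round-$0$ messages in $E_3$ normally (no delaying is needed at all). Since the parties in $B$ are not the broadcaster, their round-$0$ messages depend only on their initial states and are therefore bit-for-bit identical in $E_1$ and $E_3$; likewise for $A$'s round-$0$ messages across $E_2$ and $E_3$. The $1$-round latency hypothesis already guarantees that commitment in $E_1$ and $E_2$ happens before any round-$1$ message is processed, so indistinguishability up to the commit point holds without the adversary suppressing anything, and the agreement violation in $E_3$ follows. With that one change your proof matches the paper's intended argument.
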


\paragraph{\vbbshort upper bound $2$ rounds under $n\geq 5f-1$.}
\label{sec:psync:ub:2r}
In this section, we present an authenticated \vbblong protocol with \latency of $2$ rounds and only requires $n\geq 5f-1$ parties, shown in Figure~\ref{fig:psyncbb}.
The $(5f-1)$-\vbbshort protocol is leader-based and follows the standard PBFT framework~\cite{castro1999practical}.
As mentioned, it directly solves $2$-round single-shot BFT, and an extension to BFT SMR can be found in our complementary paper~\cite{abraham2021fast}.
Our protocol is also optimal in terms of resilience, as we can show that any \psyncbbshort protocol with $n\leq 5f-2$ will have a \latency of at least $3$ rounds (Theorem~\ref{thm:psync:lb:3}).

\textbf{Relation to the previous work~\cite{martin2006fast}.}
\label{sec:previouswork}
The authors of FaB~\cite{martin2006fast} propose a $2$-round PBFT with $n\geq 5f+1$ and claim the resilience is optimal by proving a lower bound that any Byzantine agreement protocol with $n\leq 5f$ cannot always commit within $2$ round.
However, their lower bound assumes a family of Paxos-like protocol that separates proposers from acceptors. In our protocol, parties act as both proposers and acceptors, and therefore we are able to circumvent the lower bound and improve the resilience to $n\geq 5f-1$.

\textbf{Intuition.}
Before presenting our protocol, it is helpful to briefly explain how FaB achieves $2$-round commit with $n=5f+1$ parties. 
In FaB, the \latency of $2$ rounds consists $1$ round of proposing and $1$ round of voting, thus reducing $1$ round of voting compared with PBFT~\cite{castro1999practical}. A value $v$ is safe to be committed if it is voted by $n-f=4f+1$ parties, among which at least $3f+1$ must be honest. Then, during the view-change, any set of $4f+1$ view-change messages must contain at least $(4f+1)+(3f+1)-n=2f+1$ messages from those honest parties that voted for $v$. Since $2f+1$ is a majority of $4f+1$, the next leader can re-propose the majority value to ensure safety across different views. If we reduce the number of parties, i.e., $n=5f$, then the set of $4f$ view-change messages may contain two disjoint sets of $2f$ messages supporting two different values respectively, and the next leader is unable to break the tie.

\textsl{Main observation: detecting leader equivocation with authentication.}
Our protocol has \latency of $2$ rounds, consisting $1$ round of proposing and $1$ round of voting.
The main observation is that parties can {\em detect the malicious behavior of the leader with authentication} and thus further reduce the number of parties to $n=5f-1$. 
More specifically, leader equivocation can be detected by honest parties when they receive more than one value signed by the leader. Then, {\em if any honest party detects that the leader is Byzantine, it can wait for one more view-change message from parties other than the broadcaster}. Therefore, the set of $n-f=4f-1$ view-change messages under leader equivocation contains at most $f-1$ messages from the Byzantine parties and thus at least $3f$ messages from the honest parties.
When any honest party commits $v$ by receiving $n-f=4f-1$ votes for $v$, at least $3f-1$ honest parties have voted for $v$. During view-change, honest party either receives $4f-1$ view-change messages containing $\geq (4f-1)-f-f=2f-1$ messages for $v$ and no message for other values, or detect leader equivocation. 
For the latter case, the set of view-change messages must contain at least $3f+(3f-1)-(4f-1)=2f$ messages from the honest parties who voted for $v$, which is the majority and any honest party can thus lock on $v$ during view change.

\begin{figure}[t]
    \centering
    \begin{mybox}
    
    The $(5f-1)$-\vbbshort protocol uses the following rule for certificate check. $L_w$ is the leader for view $w$.

{\bf Certificate Check.} 
    $\fc$ is a valid certificate of view $w$ iff it contains $\geq 4f-1$ signed messages from distinct parties, where each message from party $j$ is either 
    $\langle \bot, w \rangle_{j}$,
    or 
    $\langle v, w \rangle_{L_w, j}$ where $\ff(v)=true$.
    Moreover,
    $\fc$ {\em locks} a value $v\neq\bot$ iff
    (1) 
    it contains $\geq 2f-1$ $\langle v, w \rangle_{L_w, j}$ for any $j\in[n]$, and no $\langle v', w \rangle_{L_w, j}$ for any $v'\neq v, j\in[n]$% signed by $L_{w}$ and the party, and no other value signed by $L_{w}$
    , or 
    (2) 
    it contains $\geq 2f$ $\langle v, w \rangle_{L_w,j}$ where $j\neq L_w$.
    % signed by $L_{w}$ and parties other than $L_w$.

    For initialization, $\emptyset$ is also a valid certificate of view $0$, locking any externally valid value $v\neq\bot$.
    Note that by definition, if $\fc$ locks $v\neq\bot$, it does not lock on any $v'\neq v$.
    $\fc$ ranks higher with a higher view number.

\end{mybox}
    \caption{Certificate Check for $(5f-1)$-\vbbshort}
    \vspace{-1em}
    \label{fig:certcheck}
\end{figure}

\begin{figure}[t!]
    \centering
    \begin{mybox}
    
    The protocol proceeds in view $w=1,2,...$ each with a leader $L_{w}$, and the first leader $L_1$ being the designated broadcaster.
    Each party keeps the highest certificate $\fc_h$ received, initialized as $\emptyset$.
    The parties will ignore any message for value $v\neq \bot$ that is not externally valid.
    
\begin{enumerate}%[itemsep=0pt,topsep=0pt]
    
    \item\label{pbft:step:propose} \textbf{Propose.} 
    The leader $L_w$ sends $\langle \texttt{propose}, \langle v, w\rangle_{L_w},\fs \rangle_{L_w}$ to all parties.
    If $w=1$, then $v$ is the input of the leader and $\fs=\bot$; otherwise $v,\fs$ are specified in the {\em Status} step.
    
    \item\label{pbft:step:vote} \textbf{Vote.} 
    Upon receiving the first proposal in the form of $\langle \texttt{propose}, \langle v, w\rangle_{L_w}, \fs \rangle_{L_w}$ from the leader $L_w$,
    if 
    \begin{itemize}[itemsep=0pt,topsep=0pt]
        \item $w=1$, or
        \item $\fs$ is a valid certificate of view $w-1$ that locks $v$, or
        \item $\fs$ contains $4f-1$ valid \status messages of view $w-1$ each with a valid certificate of view $\leq w-1$ that locks some non-$\bot$ value, and the highest certificate in $\fs$ locks $v$,
    \end{itemize}
    multicast a \vote message in the form of $\langle \vote, \langle v, w\rangle_{L_w,i} \rangle_i$.

    \item\label{pbft:step:commit} \textbf{Commit.}
    When receiving $4f-1$ distinct signed \vote messages of view $w$ for the same value $v$, forward these $4f-1$ \vote messages to all other parties, and commit $v$.
    
    \item\label{pbft:step:timeout} \textbf{Timeout.}
    If not committed within $4\Delta$ after entering view $w$, timeout view $w$, which means
    (1) if timeout before voting, stop voting for view $w$ and multicast $\langle \timeout, \langle \bot ,w\rangle_{i} \rangle_i$;
    (2) if timeout after voting for value $v$, multicast $\langle \timeout, \langle v,w\rangle_{L_w,i} \rangle_{i}$.
    
    \item\label{pbft:step:newview} \textbf{New View.} Upon receiving $4f-1$ valid \timeout messages of view $w-1$ from distinct parties that contains only one non-$\bot$ value signed by $L_{w-1}$ and the party, or receiving $4f-1$ valid \timeout messages from parties other than $L_{w-1}$, perform the following.
    
    Forward these \timeout messages, update $\fc_h$ if $4f-1$ signatures form a valid certificate of view $w-1$ that locks any $v\neq \bot$, timeout view $w-1$ if haven't, and enter view $w$.
    Send a \status message in the form of $\langle \status, w-1, \fc_h \rangle_i $ to the leader $L_w$.
    
    \item\label{pbft:step:status} \textbf{Status.}
    After entering view $w$ and receiving $4f-1$ valid \status messages of view $w-1$ each with a valid certificate $\fc$ of view $\leq w-1$ that locks some non-$\bot$ value,
    the leader $L_w$ sets a proof $\fs$ and a proposal value $v$ as follows:
    \begin{itemize}[itemsep=0pt,topsep=0pt]
        
        \item If the \status messages contain a valid certificate $\fc$ of view $w-1$, set $\fs=\fc$ and the proposal to the value $v$ that $\fc$ locks.
        
        \item Otherwise, set $\fs$ to be the set of $4f-1$ valid \status messages of view $w-1$ received, and set proposal to be the value $v$ that the highest valid certificate in $\fs$ locks.
    \end{itemize}
\end{enumerate}
    \end{mybox}
    \caption{$(5f-1)$-\vbbshort Protocol with \latency of $2$ rounds}
    \vspace{-1em}
    \label{fig:psyncbb}
\end{figure}

\textbf{Certificate check.}
The observation above partly explains the intuition of the {\em certificate check} in Figure~\ref{fig:certcheck}, which defines a valid certificate that {\em locks} a value for view-change. 
When a value $v$ is committed at any honest party, to ensure agreement, we want all honest parties to lock $v$ after receiving a valid certificate and only vote for $v$ in any future views.
A valid certificate $\fc$ of view $w$ consists of at least $4f-1$ signed tuple, consisting a value (can be $\bot$) and a view number $w$, from different parties. For brevity, we will just call them signed values. The signed values are from the \vote messages (Step~\ref{pbft:step:vote}) or \timeout messages (Step~\ref{pbft:step:timeout}) of the $(5f-1)$-\vbbshort protocol.
A valid certificate $\fc$ {\em locks} a value $v\neq \bot$ when there may be some honest party that already commits $v$, and thus any honest party that receives such $\fc$ should lock on value $v$ for agreement.
Similar to the earlier argument, if any value $v$ is committed, there must be $\geq 3f-1$ honest parties that voted for $v$ and signed $v$. If $\fc$ contains only one value signed by the leader $L_w$ of view $w$, then $\fc$ must contain at least $2f-1$ signed $\langle v,w \rangle$.
Otherwise, if $\fc$ contains more than one value signed by the leader $L_w$, then $L_w$ is Byzantine as it equivocated, and $\fc$ should contain $4f-1$ signed values from parties other than $L_w$, and thus at least $2f$ signed $\langle v,w \rangle$, which is the majority.
Therefore, any honest party will lock on $v$ when receiving a valid certificate $\fc$ that locks $v$, since $v$ may have been committed by some honest parties.

\textbf{Protocol description.}
The protocol proceeds in views starting from $1$, each view has a designated leader (by round-robin for instance), and the first leader is the designated broadcaster.
Each party locally keeps the highest certificate $\fc$ it has ever seen, which is $\emptyset$ initially.
Except for view $1$, each party enters a new view $w$ in Step~\ref{pbft:step:newview} by gathering $4f-1$ \timeout messages of view $w-1$ (see Step~\ref{pbft:step:timeout}) satisfying the conditions described in Step~\ref{pbft:step:newview}. Here the purpose of gathering \timeout messages is to ensure that all honest parties will lock on the value $v$ if any honest party has committed $v$.
Thus, if detect equivocation of the previous leader, the party will wait for one more \timeout message from other parties. 
Then, if the received $4f-1$ \timeout messages form a valid certificate that locks any value $v\neq \bot$, the party updates its highest certificate $\fc$. The party also timeouts the old view (as defined in Step~\ref{pbft:step:timeout}), enters the new view and sends a \status message with $\fc$.
The broadcaster, who is also the leader of view $1$, can just propose its input as the proposal.
For any other leader of view $\geq 2$, it can propose a value and a proof based on the set of $4f-1$ \status messages received.
If its highest certificate $\fc$ is updated in Step~\ref{pbft:step:newview}, meaning possibly some honest party has committed a value in the previous view, the leader proposes the same value and its $\fc$ to ensure agreement.
Otherwise, the leader sets the proposal to be the value locked by the highest certificate among the \status messages received, and attaches all these \status messages as a proof.
After receiving the proposal, each party will check if the proposed value and the proof are produced according to the above steps, and multicast a \vote message for the value in Step~\ref{pbft:step:vote} if the check passes. 
When receiving $4f-1$ votes for the same value $v$, the party forwards these votes and commit $v$.
Otherwise, if the party does not commit within $4\Delta$ time after entering view $w$, which means either the network is bad or the leader is Byzantine, it will timeout the current view by sending a \timeout message. If the party timeouts after voting for some value $v$, it sends \timeout message with the value $v$, otherwise it sends \timeout message with $\bot$.
When enough \timeout messages are collected, the party enters the new view $w+1$, as in Step~\ref{pbft:step:newview}.

\textbf{Proof of Correctness.} The correctness proof can be found in Appendix~\ref{sec:psyncbb:proof}.

\subsection{$3f+1\leq n\leq 5f-2$,  Matching Lower and Upper Bounds of $3$ Rounds}
\label{sec:psync:2}

Recall that \psyncbbshort solves a single-shot of BFT SMR under partial synchrony, hence \psyncbbshort is solvable if and only if $n\geq 3f+1$ as BFT SMR is solvable if and only if $n\geq 3f+1$ under partial synchrony~\cite{dwork1988consensus}.
\daniel{double check}
For the remaining case of $ 3f+1\leq n\leq 5f-2$, we show a lower bound of $3$ rounds on its \latency using standard indistinguishability arguments. 
This is tight given the PBFT protocol~\cite{castro1999practical} solves \vbbshort with \latency of $3$ rounds and $n\geq 3f+1$.

\begin{theorem}\label{thm:psync:lb:3}
    Any authenticated \psyncbblong that is resilient to $f\geq (n+2)/5$ faults must have a \latency of at least $3$ rounds under partial synchrony.
\end{theorem}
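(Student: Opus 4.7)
The plan is to prove this lower bound by contradiction using a standard indistinguishability argument, exploiting partial asynchrony before GST together with a Byzantine budget split between the broadcaster and helper parties. Suppose toward contradiction that $\Pi$ is a deterministic authenticated \psyncbbshort protocol with $n \leq 5f-2$ and good-case latency of $2$ rounds. By definition, whenever the broadcaster is honest and $\mathrm{GST}=0$, every honest party commits a value within $2$ synchronous rounds based solely on its received round-$0$ proposal and round-$1$ vote messages.

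First, I would partition the parties into the broadcaster $L$ and four disjoint non-empty groups $A_1, A_2, A_3, A_4$ of honest parties, together with a Byzantine helper set $F$. The sizes will be chosen so that $|F|+1 \leq f$ (so that $L$ together with $F$ is a legitimate Byzantine set of size $\leq f$), and also so that for $i \in \{1,4\}$ the set $(A_2 \cup A_3)$ or $(A_3 \cup A_2)$ of parties whose votes must be ``hidden'' from $A_i$ can be absorbed into the adversary's delay-and-simulation budget under a plausible good-case execution. The crucial numerical fact, which is where $n \leq 5f-2$ enters, is that with at most $5f-2$ parties there is room to pick such groups so that no honest party in $A_1$ or $A_4$ can rule out being in a good-case execution with an $f$-sized Byzantine set located elsewhere.

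Next, I would construct two canonical good-case executions $E_0$ and $E_1$. In $E_v$, the broadcaster is honest and proposes $v$, $\mathrm{GST}=0$, and the Byzantine parties are placed in $F \cup \{L\}$'s complement in a specific way; by the good-case latency hypothesis, every honest party commits $v$ by the end of round~$1$. Then I would construct the attack execution $E^*$ in which $L$ and $F$ are Byzantine: $L$ equivocates by sending $0$ to $A_1 \cup A_2$ and $1$ to $A_3 \cup A_4$; parties in $F$ send signed round-$1$ votes for $0$ to the first half and for $1$ to the second half, mimicking honest behavior on each side; and the adversary delays every round-$1$ message between $A_1 \cup A_2$ and $A_3 \cup A_4$ until after round~$2$, which is allowed since $\mathrm{GST}$ is placed after round~$2$. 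The core claim is that the local history of every party in $A_1$ during the first two rounds in $E^*$ is identical to its local history in $E_0$, and similarly $A_4$ matches $E_1$. By determinism and the good-case commit rule, parties in $A_1$ commit $0$ and parties in $A_4$ commit $1$ in $E^*$, violating agreement.

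The main obstacle is the bookkeeping in Step~3: I must verify that the adversary can simultaneously (i) simulate all ``missing'' cross-group messages with at most $|F| = f-1$ helpers, (ii) hide the equivocation from $A_1$ and $A_4$ (so no honest party ever holds two $L$-signed proposals with different values within $2$ rounds), and (iii) match the exact set of signed votes a commit rule might demand. In particular, any party in $A_1$ expects in $E_0$ to have received a commit-quorum of round-$1$ votes for $0$; in $E^*$ this quorum must be assembled from $A_1 \cup A_2 \cup F \cup \{L\}$ alone, and similarly for $A_4$ from $A_3 \cup A_4 \cup F \cup \{L\}$. Writing out the two resulting inequalities and adding them gives a bound of the form $2(n-f) \leq n + |F| + 1 \leq n + f$, i.e., $n \leq 3f$ for a single quorum, but carefully accounting for signed-value forwarding and equivocation detection refines this to the exact threshold $n \geq 5f-1$ for a $2$-round protocol; whenever $n \leq 5f-2$, the attack goes through and yields the contradiction.
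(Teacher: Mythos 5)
There is a genuine gap, and it is precisely at the point you flag as the ``main obstacle.'' Your plan is a single split-brain execution $E^*$ in which $A_1$ commits $0$ and $A_4$ commits $1$, each within $2$ rounds, by matching their views to the two good-case executions $E_0$ and $E_1$. But this direct attack cannot succeed in the regime $3f+1\leq n\leq 5f-2$: a $2$-round protocol may require a commit quorum of $n-f$ round-$1$ votes, and two such quorums for conflicting values must intersect in $n-2f\geq f+1$ parties, hence in at least one honest party that would have to vote for both values in the same execution. Your own arithmetic exposes this --- you derive $n\leq 3f$, which contradicts the standing assumption $n\geq 3f+1$ --- and the closing claim that ``signed-value forwarding and equivocation detection refines this to the exact threshold $n\geq 5f-1$'' is asserted without any argument. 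It is not a refinement of the same attack; no choice of group sizes makes two disjoint-enough conflicting quorums coexist in one execution with only $f$ corruptions when $n\geq 3f+1$.

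The paper's proof gets around this obstruction with a structurally different argument: a chain of \emph{five} executions over groups $s, A, B, C, D, E$ (with $A,C,D$ of size $\leq f-1$ and $B,E$ of size $f$, which is exactly where $n\leq 5f-2$ is spent). Only \emph{one} group per attack execution commits fast: in Execution 2, $A$ cannot distinguish its first two rounds from the good-case Execution 1 and so commits $0$ within $2$ rounds; then \emph{agreement together with the termination-after-GST property of \psyncbbshort} forces $B,E$ to eventually commit $0$ in Execution 2 as well. A separate indistinguishability claim shows $B,E$ cannot tell Execution 2 from a hub Execution 3, so they eventually commit $0$ there; the symmetric chain through Executions 4 and 5 forces them to eventually commit $1$ in Execution 3, giving the contradiction. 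Note that the contradiction lives in Execution 3, where nobody commits within $2$ rounds at all --- the $2$-round hypothesis is only used at the ends of the chain. To repair your proposal you would need to abandon the single-execution split-brain and introduce this kind of intermediate execution, using asynchrony before GST to delay the cross-group messages and the eventual-termination property to propagate commitments along the chain.
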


% \section{Proof of Theorem~\ref{thm:lb}}
% \label{sec:lower-bound-proof}

\begin{proof}

The proof is illustrated in Figure~\ref{fig:psyncbb-lb}.
Suppose there exists a \psyncbbshort protocol $\Pi$ that has \latency of $2$ rounds under $n\leq 5f-2$. 
By definition, when the broadcaster is honest and the network is synchronous $(GST=0)$, $\Pi$ ensures all honest parties commit after delivering all \clone and \cltwo messages, even if any \clthree message is not delivered yet. 
Divide $n\leq 5f-2$ parties into one broadcaster $s$, and five disjoint groups $A,B,C,D,E$ where $A,C,D$ have size $\leq f-1$ and $B,E$ have size $f$.
For brevity, we will often use a group to refer to all the parties in that group.
We will construct the following $5$ executions.
In all constructed executions, all messages are delivered by the recipient after $\Delta$ time by default, and we will explicitly specify the messages that are delayed by the adversary due to asynchrony.
Also, we focus on the messages between different groups, and assume by default the party sends and delivers any message within its group as well.
\begin{itemize}[leftmargin=*]
    \item Execution  1. The network is synchronous ($GST=0$).
    The broadcaster is honest and proposes $0$ to all parties. The $f$ parties in $E$ are Byzantine that only send \clone messages faithfully to $B,C,D$ and no message to $A$.
    Since the broadcaster is honest, by the \latency guarantee and validity, $A,B,C,D$ commit $0$ within $2$ rounds after delivering all \clone and \cltwo messages.
    
    \item Execution  5. Symmetric to Execution 1, the network is synchronous ($GST=0$), the broadcaster is honest and proposes $1$ to all parties. The $f$ parties in $B$ are Byzantine that only send \clone messages faithfully to $C,D,E$ and no message to $A$.
    By the \latency guarantee and validity, $A,C,D,E$ commit $1$ within $2$ rounds after delivering all \clone and \cltwo messages.
    
    \item Execution  3. The network is synchronous ($GST=0$).
    The broadcaster is Byzantine; it sends $0$ to $B,C$, and $1$ to $D,E$. 
    Then the broadcaster behaves to $B,C$ the same way as the broadcaster to $B,C$ in Execution 1, and behaves to $D,E$ the same way as the broadcaster to $D,E$ in Execution 5.
    The $f-1$ parties in $A$ are Byzantine and only send \clone messages faithfully to $B,C,D,E$.
    By termination, $B,C,D,E$ eventually commit some value.

    \item Execution  2. The network is asynchronous before $B,E$ commit ($GST$ comes after $B,E$ commit).
    The broadcaster is Byzantine; it sends $0$ to $A,B,C$, and $1$ to $E$. Then the broadcaster behaves to $A,B,C$ the same way as the broadcaster to $A,B,C$ in Execution 1, and behaves to $E$ the same way as the broadcaster to $E$ in Execution 5.
    The $f-1$ parties in $D$ are Byzantine; they behave to $A$ the same way as $D$ to $A$ in Execution 1, and behave to the others the same way as $D$ to the others in Execution 3.
    Any message other than \clone messages from $A$ to the rest of the parties is delayed indefinitely until $GST$. 
    Any message from $E$ to $A$ is delayed indefinitely until $GST$.
    
    \item Execution  4. 
    Symmetric to Execution 2, the network is asynchronous before $B,E$ commit ($GST$ comes after $B,E$ commit); the broadcaster is Byzantine; it sends $0$ to $B$, and $1$ to $A,D,E$. 
    Then the broadcaster behaves to $A,D,E$ the same way as the broadcaster to $A,D,E$ in Execution 5, and behaves to $B$ the same way as the broadcaster to $B$ in Execution 1.
    The $f-1$ parties in $C$ are Byzantine; they behave to $A$ the same way as $C$ to $A$ in Execution 5, and to the others the same way as $C$ to the others in Execution 3.
    Any message other than \clone messages from $A$ to rest of the parties is delayed indefinitely until $GST$.
    Any message from $B$ to $A$ is delayed indefinitely until $GST$.
\end{itemize}

\begin{figure*}[tb]
    \centering
    \includegraphics[width=\textwidth]{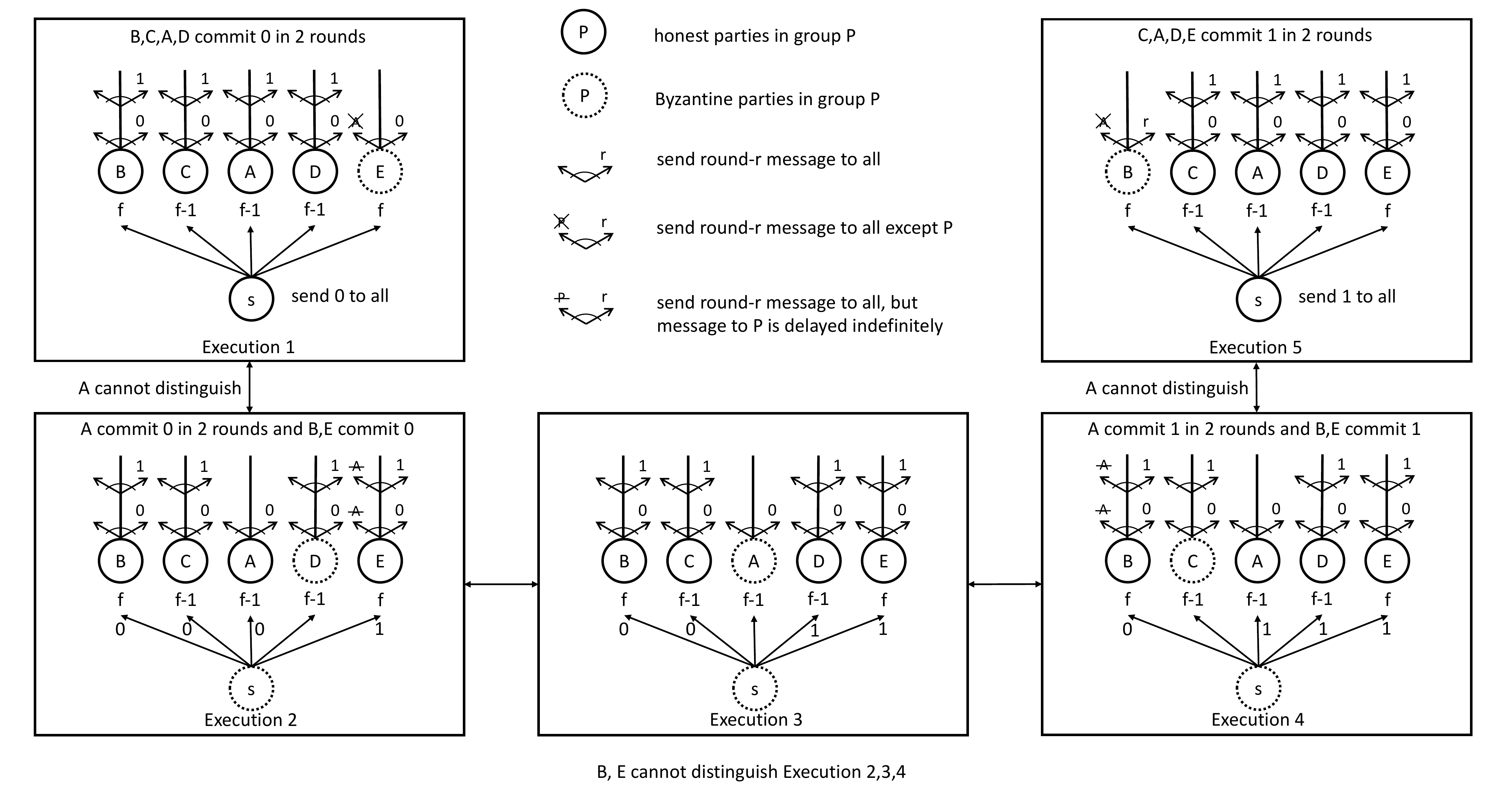}
    \caption{\psyncbbshort Lower Bound: $3$ rounds with $n=5f-2$. 
    Dotted circles denote Byzantine parties. In Execution 1 (5), the network is synchronous ($GST=0$), $E$ ($B$) are Byzantine and only send \clone messages faithfully to rest of the parties except $A$. 
    In Execution 3, the network is synchronous ($GST=0$), $A$ are Byzantine and only send \clone messages faithfully to rest of the parties.
    In Execution 2 (4), the network is asynchronous before $B,E$ commits, messages from $A$ after round 0 are delayed indefinitely until $GST$, and messages from $E$ ($B$) to $A$ are delayed indefinitely until $GST$.
    }
    \label{fig:psyncbb-lb}
    \vspace{-1em}
\end{figure*}

We show the following indistinguishability and contradiction. 
\begin{itemize}[leftmargin=*]
    \item $A$ cannot distinguish Execution 1 and 2 after delivering all \clone and \cltwo messages but before delivering any \clthree message. 
    \begin{itemize}
        \item  First we show that any \clone or \cltwo message from $B,C$ is identical in both executions. 
        Any \clone message only depends on the party's initial state, thus is identical since $B,C$ are honest.
        Any \cltwo message depends on the \clone messages the party delivers. In the two executions, the broadcaster sends the same value of $0$ to $B,C$, and $A,D,E$ all follow the protocol to send the same \clone messages to $B,C$, so $B,C$ send the same \cltwo messages to $A$.
        \item For $D$ and the broadcaster, in Execution 2, the Byzantine parties in $D$ and the broadcaster behave identically to $A$ as in Execution 1. For $E$, no message from $E$ is delivered by $A$ in both executions. 
    \end{itemize}
    Since $A$ is honest and delivers identical \clone and \cltwo messages in Execution 1 and 2, $A$ cannot distinguish Execution 1 and 2 before delivering any \clthree message. 
    Since $A$ commits $0$ in Execution 1, $A$ also commits $0$ in Execution 2. Then by agreement and termination, $B,C,E$ also commit $0$ in Execution 2.
    
    \item Similar to the argument above, 
    $A$ cannot distinguish Execution 4 and 5 after delivering all \clone and \cltwo messages but before delivering any \clthree message. 
    Therefore, $A,B,D,E$ also commit $1$ in Execution 4.

    \item $B,E$ cannot distinguish Execution 2 and 3.
    For $A$, in both executions any party not in $A$ only delivers the same \clone messages from $A$. For $D$ and broadcaster, in Execution 2, they behave identically as in Execution 3. Since $B,C,E$ are honest, they will behave identically in both executions.
    Therefore, $B,E$ cannot distinguish Execution 2 and 3, and will eventually commit both executions to satisfy termination. 
    Since $A$ commit $0$ in Execution 2, $B,E$ will eventually commit $0$ in Execution 2 by agreement, and thus commit $0$ in Execution 3 as well.
    
    \item Similar to the argument above, $B,E$ cannot distinguish Execution 4 and 3, and will eventually commit $1$ in Execution 3.
\end{itemize}
We proved that $B,E$ commit both $0$ and $1$ in Execution 3, contradiction. Hence such a protocol $\Pi$ cannot exist. \qedhere
\end{proof}

\section{Synchronous Byzantine Fault-tolerant Broadcast}

In this section, we present lower and upper bound results for broadcast under synchrony.
To strengthen the results, we prove all the lower bounds for Byzantine reliable broadcast (thus also apply to Byzantine broadcast), and all the upper bounds for Byzantine broadcast (thus also apply to Byzantine reliable broadcast).
Furthermore, all lower bound results assume synchronized start, and all upper bound results assume unsynchronized start, except for the case of $n/3<f<n/2$ where the tight bounds depend on the assumption.

The situations of $f<n/3$ and $f=n/3$ are relatively easy and often modified from known results. Hence, we deferred their details to appendices. 
We will instead focus on the harder and more surprising case of $n/3 < f < n/2$.

\subsection{$0<f<n/3$, Matching Lower and Upper Bounds of $2\delta$}
\label{sec:sync:1}

\begin{theorem}\label{thm:lb:sync:2d}
    Any Byzantine reliable broadcast protocol that is resilient to $f> 0$ faults must have a \latency at least $2\delta$, even with synchronized start.
\end{theorem}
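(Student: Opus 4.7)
The plan is to adapt the standard indistinguishability argument used for Theorem~\ref{thm:lb:async:2} to the synchronous setting, exploiting the separation between the actual delay $\delta$ and the conservative bound $\Delta$. The key physical intuition is that within any time interval strictly shorter than $2\delta$, no ``two hop'' information about the broadcaster's value can reach an honest party: the broadcaster's message itself may take up to $\delta$ to arrive, and any message sent \emph{in response} to it may take up to another $\delta$. Hence by choosing an execution in which the adversary sets every message delay to exactly $\delta$, every message an honest party has delivered by time $T < 2\delta$ was either sent by the broadcaster directly, or was sent strictly before the sender had heard from the broadcaster.

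Assume toward contradiction a BRB protocol $\Pi$ with good-case latency $T < 2\delta$. Let $s$ be the broadcaster and partition the remaining $n-1 \geq 2$ parties into two non-empty groups $A$ and $B$. I will consider three executions, all with synchronized start and with every message between honest parties delivered exactly $\delta$ time units after being sent:
\begin{enumerate}
    \item Execution 1: $s$ is honest with input $0$. By validity and the assumed latency, every honest party commits $0$ by time $T$.
    \item Execution 2: $s$ is honest with input $1$. Every honest party commits $1$ by time $T$.
    \item Execution 3: $s$ is Byzantine, sends $0$ to $A$ and $1$ to $B$ at time $0$, and then on each point-to-point link behaves toward the recipient exactly as $s$ behaves in Execution 1 (for parties in $A$) or Execution 2 (for parties in $B$).
\end{enumerate}

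The core step is the indistinguishability argument. Consider any honest party $i \in A$ and its view up to time $T$ in Execution 3. Any message $i$ has delivered by time $T$ was sent at some time $t \leq T - \delta < \delta$, because all delays are exactly $\delta$. But no non-broadcaster party has received $s$'s message before time $\delta$, so each such outgoing message is a deterministic function of initial states and of messages sent strictly earlier, and inducting on send-time shows it is independent of $s$'s input and identical to the corresponding message in Execution 1. Together with the fact that $s$ sends $0$ to $i$ at time $0$ in both Execution 1 and Execution 3, this gives $E_1|_i = E_3|_i$ up to time $T$, so $i$ commits $0$ in Execution 3. A symmetric argument with Execution 2 shows any honest $j \in B$ commits $1$ in Execution 3, violating agreement.

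The main subtlety I expect is formalizing the claim that ``messages an honest party delivers by time $T$ were all sent before time $\delta$.'' This must be carried out by a short induction on send-time, showing that the behavior of every non-broadcaster party during $[0, \delta)$ agrees across Executions 1 and 3 even though the adversary's scheduling is identical. Beyond that, one should also verify that $A$ and $B$ can be chosen so that each contains at least one honest party in Execution 3; since $s$ is the only Byzantine party ($f \geq 1$ suffices), any partition of the remaining $n - 1 \geq 2$ parties into two non-empty groups works, which is why the bound holds whenever $f > 0$ and the problem is nontrivial.
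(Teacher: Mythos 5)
Your proposal is correct and matches the paper's own proof essentially step for step: the same three executions with all delays fixed at $\delta$, the same observation that any message delivered before time $2\delta$ was sent before time $\delta$ and hence before its sender could have heard from the broadcaster, and the same agreement violation in the third execution. The explicit induction on send-time you flag is exactly the formalization the paper leaves implicit.
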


For the lower bound, early-stopping result~\cite{dolev1990early} implies any BB protocol must have \latency of $2\delta$, and here we slightly strengthen the result for BRB.
Intuitively, 
it takes at least $\delta$ to receive from the broadcaster and another $\delta$ for parties to exchange the proposal received from the broadcaster, otherwise parties may commit in different values under a Byzantine broadcaster if the \latency is $<2\delta$.

For a matching upper bound, the protocol relies on the standard quorum intersection technique. The parties vote for the first proposal they received, and commit within $2\delta$ time if they receive $n-f$ votes on the same value. By quorum intersection, there cannot be $n-f$ votes on different values. Otherwise, a BA is used to guarantee agreement when the broadcaster is Byzantine.
Details can be found in Appendix~\ref{sec:sync<1/3}.

\subsection{$f=n/3$, Matching Lower and Upper Bounds of $\Delta+\delta$}
\label{sec:sync:2}

\paragraph{BRB lower bound $\Delta+\delta$ under synchronized start and $f\geq n/3$.}
\label{sec:sync:lb:syncstart}
Based on the $\Delta$ lower bound proof for BB in~\cite{synchotstuff}, we show a more accurate lower bound of $\Delta+\delta$ on the \latency for BRB using similar proof techniques. 
Intuitively, since $f\geq n/3$, the adversary can split the honest parties into two disjoint groups, each with a different proposed value from the Byzantine broadcaster. 
If a protocol can commit before $\Delta+\delta$, the two groups will commit conflicting values before $\Delta+\delta$, which is before they communicate any information about the broadcaster's proposed value. Details can be found in Appendix~\ref{sec:sync:lb:syncstart}.

\begin{theorem}\label{thm:lb:sync:D+d}
    Any Byzantine reliable broadcast protocol that is resilient to $f\geq n/3$ faults must have a \latency at least $\Delta+\delta$, even with synchronized start. 
\end{theorem}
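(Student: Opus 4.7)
My plan is to adapt the $\Delta$-lower-bound argument of \cite{synchotstuff} via a three-execution indistinguishability argument, choosing delays so that the ``$\delta$-gap'' between the broadcaster's message and an honest cross-group message piles on top of the $\Delta$ delay. Fix any $\delta \le \Delta$ and suppose, toward contradiction, that a BRB protocol achieves good-case latency $T < \Delta + \delta$. Since $f \ge n/3$, I can partition the $n$ parties into three groups $G_1, G_2, G_3$, each of size at most $f$, with the broadcaster $p_1 \in G_1$; each group is then eligible to serve as the Byzantine coalition in some execution.

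Next I would define three executions. In $E_1$, group $G_3$ is Byzantine, the broadcaster is honest and proposes $0$, and all honest-to-honest delays are set to $\delta$; by validity and the latency assumption, $G_1 \cup G_2$ commit $0$ by time $T$. In $E_2'$, group $G_2$ is Byzantine, the broadcaster is honest and proposes $1$, and the execution's actual delay bound is set to $\Delta$ so that the broadcaster's message may legitimately reach $G_3$ at time $\Delta$; by validity, $G_3$ commits $1$ at some finite time. In $E_3$, all of $G_1$ (including the broadcaster) is Byzantine: the broadcaster equivocates, sending $0$ to $G_2$ with delay $\delta$ and $1$ to $G_3$ with delay $\Delta$; the remaining Byzantine parties in $G_1$ play a two-faced role, imitating honest $E_1$-behavior toward $G_2$ and honest $E_2'$-behavior toward $G_3$; and honest-to-honest messages between $G_2$ and $G_3$ are maximally delayed by $\delta$.

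The heart of the argument is to show that $G_2$'s local view in $E_3$ coincides with its view in $E_1$ throughout $[0, T)$, and that $G_3$'s local view in $E_3$ coincides with its view in $E_2'$ throughout the entire run. For $G_2$: messages from the broadcaster and from $G_1 \setminus \{p_1\}$ match by the adversarial simulation; $G_3$, honest in $E_3$, has no information about the broadcaster's value until time $\Delta$, so every value-$1$-dependent message it sends reaches $G_2$ no earlier than $\Delta + \delta > T$, while its pre-time-$\Delta$ messages depend only on its initial state and on $G_2$'s (identical) messages and can be faithfully replayed by Byzantine $G_3$ in $E_1$. For $G_3$: by a symmetric construction in which Byzantine $G_2$ in $E_2'$ simulates honest $G_2$ of $E_3$, its view matches throughout. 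By determinism, $G_2$ commits $0$ and $G_3$ commits $1$ in $E_3$, contradicting agreement.

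The main obstacle is the asymmetric setup of $E_2'$: matching $G_3$'s broadcaster-arrival time $\Delta$ requires $E_2'$ to admit a $\Delta$-delay between the honest broadcaster and $G_3$, which is legitimate only if $E_2'$'s actual delay bound is $\Delta$. The observation that resolves this is that the protocol must work correctly for every execution with actual delay in $[0, \Delta]$, so $E_2'$ is a valid execution and validity forces $G_3$ to commit $1$; I do not need to invoke the fine-grained $T < \Delta + \delta$ bound for $E_2'$ --- only eventual commit is required. The remaining routine step is verifying that, even though $G_2$'s post-commit behavior in $E_3$ may diverge from $E_1$ once $G_3$'s tardy value-$1$ messages arrive, Byzantine $G_2$ in $E_2'$ can still faithfully mirror $G_2$'s entire trajectory via a deterministic simulation of honest $G_2$'s execution in $E_3$.
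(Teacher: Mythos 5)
Your proof is correct and rests on the same underlying technique as the paper's --- a three-way partition into groups of size at most $f$ (possible exactly because $n\leq 3f$), two honest-broadcaster executions with values $0$ and $1$, one equivocating execution, and indistinguishability --- but the delays are routed differently, and the difference has consequences. The paper sends the broadcaster's value to group $B$ with delay $\delta$ and puts the $\Delta$ delay on the $A$--$B$ link, so the critical path is broadcaster $\to B$ ($\delta$) $\to A$ ($\Delta$); both good executions then have actual delay bound $\delta$, the \latency hypothesis is applied symmetrically to both target groups, and indistinguishability is only needed up to time $\Delta+\delta$ on each side. You route the path as broadcaster $\to G_3$ ($\Delta$) $\to G_2$ ($\delta$), which forces your execution $E_2'$ to have actual delay bound $\Delta$; since the \latency hypothesis is parameterized by the execution's own delay bound, it gives nothing sharp there, and you correctly fall back on validity/termination to obtain only an eventual commit for $G_3$. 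The price of this asymmetry is that the $E_2'$-versus-$E_3$ indistinguishability must hold for the entire run rather than a bounded prefix, so Byzantine $G_2$ in $E_2'$ must simulate honest $G_2$'s full trajectory including its commit and termination; you flag this, and it does go through because that trajectory is determined by messages $G_2$ either receives from the real honest $G_3$ or can compute internally. Both arguments are sound and yield the same $\Delta+\delta$ bound; the paper's symmetric arrangement is a bit tighter to write down because it never needs an unbounded simulation, while yours isolates more cleanly where the $\Delta$ and the $\delta$ each enter the bound.
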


\paragraph{BB upper bound $\Delta+\delta$ under unsynchronized start and $f=n/3$.}
To show that $\Delta+\delta$ is the tight \latency bound for the case of $f=n/3$, we show such a BB protocol in Figure~\ref{fig:sync:3f}, with correctness proof in Appendix~\ref{sec:sync=1/3}.

Each party starts the protocol at most $\clockskew$ time apart with a local clock starting at $0$, and it is guaranteed that $\clockskew\leq \delta$ by any clock synchronization protocol~\cite{dolev1995dynamic, abraham2019synchronous}.
Since the value of $\delta$ is unknown to the protocol designer or any party, all parties set the parameter $\clockskew=\Delta$ in the protocol.
The broadcaster first multicasts its proposed value, and any party that receives the first valid proposal will vote for the proposal. Meanwhile, the party also starts a timer to wait for $\Delta$ time for equivocation detection. 
Recall that we say a party detects equivocation if it receives messages containing different values signed by the broadcaster.
If the party detects no different value signed by the broadcaster during the above $\Delta$ waiting period, it may commit earlier as follows.
When the party receives $n-f$ votes for the same value $v$, it forwards these votes, and if the local time is $\leq 2\Delta+\clockskew$ when receiving the votes, the party commits and locks the value, and multicasts a commit message.
At time $3\Delta+2\clockskew$, each party checks the set of votes it received. If the party receives $n-f$ votes for one value, it locks that value. Otherwise, if there exist two sets of $n-f$ votes on different values, the parties in the intersection of the two sets must be all Byzantine since they voted for two values. Hence, an honest party can identify all the Byzantine parties and commit and lock the same value from honest parties.
Then, all the parties participate in an instance of BA with input $\lock$, and commit the output if they haven't committed.

Observe that when $f=n/3$, all $f$ Byzantine parties will expose themselves if they try to double vote and make different honest parties lock on different values.
As we will show next in Section~\ref{sec:sync:syncstart} and \ref{sec:sync:unsyncstart}, if the number of faults exceeds $n/3$, the tight bound of \latency becomes dependent on the clock synchronization assumption.  

\begin{figure}[tb]
    \centering
    \begin{mybox}
    Initially, every party $i$ starts the protocol at most $\delta$ time apart with a local clock and sets $\lock=\bot$, $\clockskew=\Delta$.
\begin{enumerate}%[itemsep=0pt,topsep=0pt]
    \item\label{bb2:step:propose} \textbf{Propose.} The designated broadcaster $L$  with input $v$ sends $\langle \texttt{propose}, v\rangle_L$ to all  parties.

    \item\label{bb2:step:vote} \textbf{Vote.} 
    When receiving the first valid proposal $\langle \texttt{propose}, v\rangle_L$ from the broadcaster, 
    send a vote to all parties in the form of $\langle \texttt{vote}, \langle \texttt{propose}, v\rangle_L \rangle_i$ where $v$ is the value of the proposal. Set $\texttt{vote-timer}$ to $\Delta$ and start counting down.
    
    \item\label{bb2:step:commit} \textbf{Commit.}
    When $\texttt{vote-timer}$ reaches $0$,
    if the party detects no equivocation, it does the following: When receiving $n-f$ signed votes for $v$, forward these $n-f$ votes to all other parties. 
    If the $n-f$ votes for $v$ are received before local time $2\Delta+\clockskew$, commit $v$, set $\lock=v$ and send $\langle \texttt{commit}, v \rangle_i$ to all parties.
    
    \item\label{bb2:step:ba} \textbf{Lock and Byzantine agreement.}
    At local time $3\Delta+2\clockskew$, if the party receives $n-f$ signed votes for only one value $v$, it sets $\lock=v$. 
    Otherwise, if two sets of $n-f$ signed votes for different values are received, let $\ff$ be the set of parties in the intersection of these two vote sets.
    If the party receives $\langle \texttt{commit}, v \rangle_j$ from any party that is not in $\ff$, it commits $v$ 
    and set $\lock=v$.
    
    Then, invoke an instance of Byzantine agreement with $\lock$ as the input. 
    If not committed, commit on the output of the Byzantine agreement. 
    Terminate.
\end{enumerate}
    \end{mybox}
    \caption{$(\Delta+\delta)$-$n/3$-BB Protocol with $f\leq n/3$}
    \label{fig:sync:3f}
    \vspace{-1em}
\end{figure}

\subsection{$n/3<f<n/2$ and Synchronized Start, Matching Lower and Upper Bounds of $\Delta+\delta$}
\label{sec:sync:syncstart}

For $n/3<f<n/2$, the tight bound of \latency depends on the assumption on clock synchronization.
If all honest parties start the protocol at the same time and have synchronized clocks ($\clockskew=0$), then the tight bound is $\Delta+\delta$ by Theorem~\ref{thm:lb:sync:D+d} from the previous section and Theorem \ref{thm:ub:sync:syncstart} in this section.
Otherwise, if there exists a clock skew of $\clockskew\geq 0.5\delta$, the tight bound becomes $\Delta+1.5\delta$ by Theorem~\ref{thm:lb:sync:D+1.5d} and \ref{thm:ub:sync:unsyncstart} later in Section~\ref{sec:sync:unsyncstart}.

\paragraph{BRB lower bound $\Delta+\delta$ under synchronized start and $f \geq n/3$.}
See Theorem~\ref{thm:lb:sync:D+d}.

\paragraph{BB upper bound $\Delta+\delta$ under synchronized start and $f<n/2$.}
Now we present a protocol $(\Delta+\delta)$-BB that works under $f<n/2$ and synchronized start, and has a optimal \latency of $\Delta+\delta$.
The protocol is presented in Figure~\ref{fig:sync:D+d}.
Every party locally sets its $\lock$ to be some default value $\bot$ and its $\rank=\Delta+1$, and starts the protocol simultaneously at time $0$.
The broadcaster first multicasts its proposal, and any party that receives the first valid proposal at time $d\leq \Delta$ will multicast a \vote for the proposal containing the time $d$.
When receiving $f+1$ votes for the same value $v$, if there exists a $t\in[0,\Delta]$ such that the party detects no equivocation  within time $t+\Delta$ and all $f+1$ votes contain time $d\leq t$, the party can commit $v$ and forward these votes.
Otherwise, the party updates its lock if receiving $f+1$ votes of higher rank.
Finally, all parties participate in an instance of BA at time $4\Delta$ with input $\lock$, and commit the output of the BA if have not committed.
The correctness proof can be found in Appendix~\ref{sec:ubproof:1+1}.

\begin{figure}[tb]
    \centering
    \begin{mybox}
    Initially, every party $i$ starts the protocol at the same time from $0$, sets $\lock=\bot$ and $\rank=\Delta+1$.
\begin{enumerate}%[itemsep=0pt,topsep=0pt]
    \item\label{bb4:step:propose} \textbf{Propose.} The designated broadcaster $L$  with input $v$ sends $\langle \propose, v\rangle_L$ to all  parties.
    
    \item\label{bb4:step:vote} \textbf{Vote.}
    When receiving the first valid proposal $\langle \propose, v\rangle_L$ from the broadcaster at time $d$ where $d\leq \Delta$,  send a \vote in the form of $\langle \vote, d, \langle \propose, v\rangle_L \rangle_i$ to all parties.

    \item\label{bb4:step:commit} \textbf{Commit and Lock.}
    For any $t\in [0,\Delta]$,
    if party $i$ detects no equivocation within time $t+\Delta$, and receives $f+1$ signed \vote messages for the same value $v$ each with $d\leq t$, party $i$ commits $v$ and forward these $f+1$ \vote messages. 

    For any $t\in [0,\Delta]$,
    within time $2\Delta+t$, if receive $f+1$ signed \vote messages for the same value $v$ each with $d\leq t$, 
    and $\rank>t$, update $\lock=v$ and $\rank=t$.

    \item\label{bb4:step:ba} \textbf{Byzantine agreement.}
    At time $4\Delta$, invoke an instance of Byzantine agreement with $\lock$ as the input. 
    If not committed, commit on the output of the Byzantine agreement. 
    Terminate.

\end{enumerate}
    \end{mybox}
    \caption{$(\Delta+\delta)$-BB Protocol with $f<n/2$ and synchronized start}
    \vspace{-1em}
    \label{fig:sync:D+d}
\end{figure}

\subsection{$n/3<f<n/2$ and Unsynchronized Start, Matching Lower and Upper Bounds of $\Delta+1.5\delta$}
\label{sec:sync:unsyncstart}

Interestingly, when the clocks at each party are not perfectly synchronized and the parties therefore do not start the protocol at the same time, the tight bound for \latency increases when the number of faults is $n/3<f<n/2$.
The tight bound for \latency is $\Delta+1.5\delta$ under unsynchronized start, which consists of a $1.5\delta$ term which is not an integer multiple of $\delta$.
This is perhaps the most interesting and surprising result of this paper, as it involves a protocol that is very different from conventional ones whose latency have always been an integer multiple of the message delay.

\paragraph{BRB lower bound $\Delta+1.5\delta$ under unsynchronized start and $ f>n/3$.}
We first present the lower bound result that shows no BRB protocol can have \latency less than $\Delta+1.5\delta$ under unsynchronized start and $f>n/3$ faults.
The lower bound proof is based on the standard indistinguishablility argument, where the proof constructs multiple executions that are indistinguishable to certain honest parties, to derive a safety violation in some execution.

\begin{theorem}\label{thm:lb:sync:D+1.5d}
    Any Byzantine reliable broadcast protocol with unsynchronized start and is resilient to $f>n/3$ faults must have a \latency at least $\Delta+1.5\delta$.
\end{theorem}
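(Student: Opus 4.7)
The plan is to argue by contradiction via an indistinguishability chain, exploiting the maximal clock skew $\clockskew = 0.5\delta$ permitted under unsynchronized start. Assume toward contradiction there exists a BRB protocol $\Pi$ with good-case latency $T < \Delta + 1.5\delta$, tolerating $f$ faults with $n/3 < f < n/2$. The first step is to partition the $n$ parties into the broadcaster $s$, two nonempty disjoint groups $A$ and $B$ of honest parties each of size at most $f$, and a Byzantine group $F$ of size $f$. The inequalities $n/3 < f < n/2$ ensure such a partition exists and that $|F|$ is large enough to ``impersonate'' either $A$ or $B$ in the view of the other honest group when simulating conflicting honest-$s$ executions, which is the structural prerequisite for the coming indistinguishability argument.

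Next I would construct three executions $E_1, E_2, E_3$. In $E_1$, $s$ is honest with input $0$; clocks are arranged so that $s$ and $A$ start at real time $0$ while $B$ starts at real time $0.5\delta$, and honest-to-honest delays are pushed to their maximum value $\delta$ on the cross-group and $s\to B$ edges. $E_2$ is the symmetric execution with $s$ honest with input $1$, obtained by swapping the roles of $A$ and $B$ (including the direction of the clock skew). Validity and the latency bound force every honest party to commit the broadcaster's value by real time $T$ since $s$'s start. In the central execution $E_3$, $s$ is Byzantine and equivocates, behaving toward $A$ exactly as in $E_1$ and toward $B$ exactly as in $E_2$; the Byzantine set $F$ likewise mirrors its $E_1$ behavior toward $A$ and its $E_2$ behavior toward $B$, which is internally consistent because $F$ is Byzantine in all three executions.

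I would then argue the crucial indistinguishability step: the only messages in $E_3$ that could distinguish $A$'s view from $E_1$ are honest messages from $B$, and symmetrically for $B$ vs $E_2$. Using the $0.5\delta$ clock skew, $B$ in $E_3$ first receives the broadcaster's value at real time at least $0.5\delta$ (since $B$'s clock starts then), and any subsequent $B$-response reaches $A$ no earlier than real time $0.5\delta + \delta = 1.5\delta$ past $s$'s start. Combined with the adversary's ability to drive the actual delay $\delta$ up to $\Delta$ so that the protocol's $\Delta$-scale equivocation timeout becomes tight (the protocol cannot distinguish an execution with actual delay close to $\Delta$ from one with delay $\delta$), the assumption $T < \Delta + 1.5\delta$ implies $A$'s commit strictly precedes arrival of any distinguishing $B$-message. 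The symmetric statement holds for $B$. Hence $A$ commits $0$ and $B$ commits $1$ in $E_3$, contradicting agreement.

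The main obstacle is the delicate real-time bookkeeping needed to justify all three executions simultaneously: every honest-to-honest delay must stay in $[0,\delta]$, every pair of start times must differ by at most $0.5\delta$, and $F$'s simulated behavior must be internally consistent in each execution (a Byzantine party may present different views to different recipients but cannot send inconsistent messages to the same recipient). The $1.5\delta$ summand, perhaps the most surprising feature of the bound, emerges precisely because the $0.5\delta$ clock-skew shift of $B$'s start combines additively with the $\delta$ honest-to-honest delay on the $B \to A$ edge, a combination unavailable under synchronized start where the lower bound collapses back to $\Delta + \delta$ as in Theorem~\ref{thm:lb:sync:D+d}. Verifying that this additive combination is forced, rather than avoidable by some alternate protocol structure, is where the technical heart of the proof lies.
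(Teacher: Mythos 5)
There is a genuine gap: your three-execution structure cannot be made to work, for two reasons. First, in your $E_1$ the broadcaster is honest and the execution's actual delay bound must be $\delta$ for the latency guarantee $T<\Delta+1.5\delta$ to apply; since $A$ and $B$ are both honest in $E_1$, every $B\to A$ message is delivered within $\delta$, so the messages $B$ sends after processing the proposal reach $A$ by roughly $2.5\delta$ — far before $\Delta+1.5\delta$ — and indistinguishability of $E_1$ and $E_3$ for $A$ fails at that point. Your patch of ``driving the actual delay up to $\Delta$'' is circular: if you inflate the honest-to-honest delays in $E_1$, the good-case guarantee for $E_1$ is stated in terms of that larger delay and the bound you are contradicting evaporates. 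The standard fix (used in the paper's $\Delta+\delta$ proof, Theorem~\ref{thm:lb:sync:D+d}) is to make the group receiving the conflicting value \emph{Byzantine} in the honest-broadcaster execution, so it can simulate a $\Delta$ delay while the real delays stay at $\delta$; your single static Byzantine set $F$ does not play this role. Second, even after that fix, the clock-skew trick breaks in the middle execution: to protect $A$ you need $B$ to start $0.5\delta$ late relative to $A$, and to protect $B$ you need $A$ to start $0.5\delta$ late relative to $B$ — both cannot hold in one execution, because skew is relative.

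The paper's proof is built precisely around this second obstruction. It uses five groups ($A,B,C$ of size at most $f-1$ plus two \emph{single} parties $g,h$) and \emph{four} executions: two honest-broadcaster executions $E_1,E_4$ anchoring the commits of $g$ and $h$, and two distinct middle executions $E_2,E_3$ — in $E_2$ the group $C$ starts $0.5\delta$ late, in $E_3$ the group $A$ does. Claim~1 shows $g$ (respectively $h$) cannot distinguish $E_1$ from $E_2$ (resp.\ $E_3$ from $E_4$) before time $\Delta+1.5\delta$, because the equivocating group both starts late and has delay $\Delta$ to the single far party; agreement then forces $A$ to commit $0$ in $E_2$ and $C$ to commit $1$ in $E_3$. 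Claim~2 — the technical heart — shows $A$ and $C$ cannot distinguish $E_2$ from $E_3$ because the adversary chooses asymmetric delays between them ($\Delta$ one way, $\Delta-\delta$ the other) that exactly compensate the $0.5\delta$ difference in start times, so neither side can tell which of them is the late one. That compensating-delay argument, the doubling of the middle execution, and the role of the single anchor parties $g,h$ are all absent from your plan, and without them the contradiction cannot be reached.
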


\begin{figure*}[tb]
    \centering
    \includegraphics[width=0.8\textwidth]{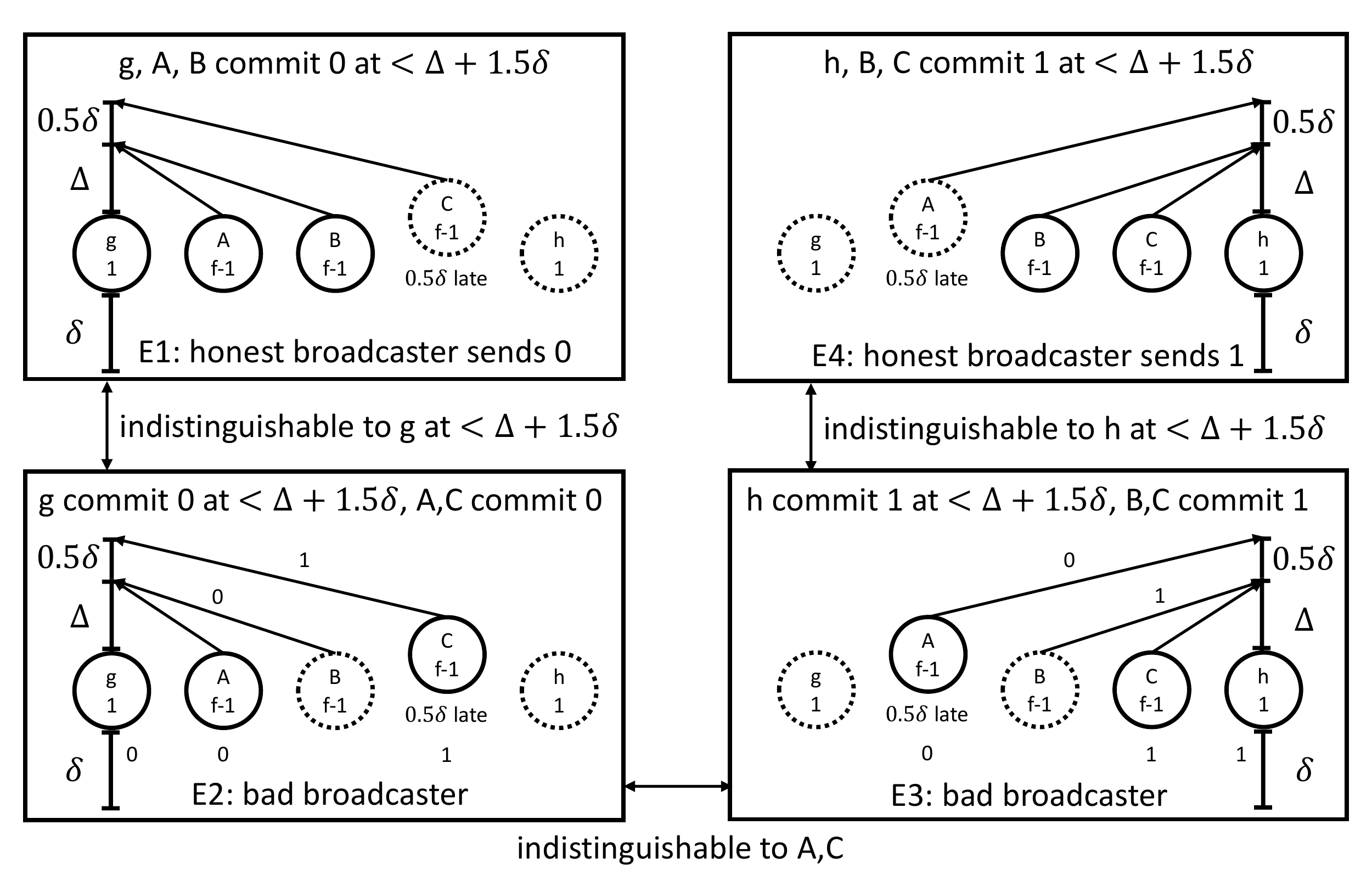}
    \caption{Illustration of the proof sketch for the $\Delta+1.5\delta$ lower bound.}
    \vspace{-1em}
    \label{fig:BB-lb-intuition}
\end{figure*}

\begin{proof}[Proof Sketch]
    Illustrated in Figure~\ref{fig:BB-lb-intuition}, and the complete proof can be found in Appendix~\ref{sec:lbproof:1+1.5}.
    Suppose there exists a BRB protocol that has \latency $<\Delta+1.5\delta$ under $n<3f$.
    As mentioned in Section~\ref{sec:prelim}, we assume the clock skew $\clockskew= 0.5\delta$ due to the lower bound for clock skew~\cite{attiya2004distributed}.
    We can construct $4$ executions as follows, and use the standard indistinguishability argument to derive contradictions.
    We divide the parties into $5$ groups $A,B,C$ of size $\leq f-1$ each, and $g,h$ of size $1$ each. The Byzantine parties are denoted by the dotted circles in the figure. The broadcaster is some party in $B$.
    
    In execution E1 with actual message delay bound $\delta$, the broadcaster is honest and sends $0$ at global time $0$, which is received by all parties at global time $\delta$. By assumption, honest parties in $g,A,B$ commit $0$ before global time $\Delta+1.5\delta$. Byzantine parties in $C$ pretend that it starts the protocol $0.5\delta$ time later.

    In execution E2 with actual message delay bound $\Delta$, the broadcaster is Byzantine, and sends $0$ to $g,A$ and $1$ to $C$.
    The honest parties in $C$ start the protocol $0.5\delta$ time later due to the clock skew, and receive $1$ from the broadcaster at local time $\delta$ and global time $1.5\delta$. Moreover, any message from $C$ to $g$ has delay $\Delta$. Therefore, {\em before global time $\Delta+1.5\delta$, $g$ cannot learn the fact that $C$ receive $1$ from the broadcaster}. We can carefully construct the executions such that $g$ cannot distinguish E1 and E2 before global time $\Delta+1.5\delta$, and will commit $0$ in E2 as well.
    
    Executions E3 and E4 are the symmetric case of executions E2 and E1, respectively.
    In E4, broadcaster is honest and sends $1$, and $h,B,C$ commits $1$ before global time $\Delta+1.5\delta$.
    In E3, broadcaster is Byzantine, but $h$ cannot learn that before commit since the equivocating message from $A$ reaches $h$ at global time $\Delta+1.5\delta$, as $A$ starts the protocol $0.5\delta$ time later due to clock skew.
    Similarly, $h$ cannot distinguish E3 and E4 before global time $\Delta+1.5\delta$, and will commit $1$ in E3 as well.
    
    The last step to complete the proof is to show that $A,C$ cannot distinguish E2 and E3.
    The intuition is that, with the $f-1$ Byzantine parties in $B$ equivocating to $A$ and $C$, the honest parties in $A,C$ cannot decide between $g$ and $h$ which is the honest party that actually commits.
    Also, the message delays between $A,C$, and all pairs of honest parties are controlled by the adversary, such that the honest parties in $A,C$ cannot tell which of them start the protocol $0.5\delta$ time later than the rest. For instance, the adversary can make the delay from $C$ to $A$ to be $\Delta-\delta$ in E2 and $\Delta$ in E3. Then, the differences in the delays compensate the differences in when $A,C$ start their protocol.
    Once we proved that $A,C$ cannot distinguish E2 and E3, the contradiction is obvious, as they have to commit $0$ in E2 and commit $1$ in E3.
\end{proof} 
    
    $\Delta+1.5\delta$ is a tight lower bound due to the matching (and surprising) upper bound.
    But we also provide some intuition below on why $\Delta+1.5\delta$ is the best lower bound we can prove using the above approach.
    Suppose we try to use the above construction to prove a lower bound of $\Delta+1.6\delta$, then in E1 and E2, $C$ have to start the protocol $0.6\delta$ time late; otherwise $g$ is able to distinguish E1 and E2 before its commit at time $\Delta+1.6\delta$. 
    Similarly in E4 and E3, $A$ have to start the protocol $0.6\delta$ time late.
    Then, in order to have E2 and E3 indistinguishable to $A$ and $C$, the message delays between $A,C$ must compensate for the $0.6\delta$ clock skew. 
    Since the message delay $l_3$ from $C$ to $A$ must be $\leq \Delta$ in E3, the message delay $l_2$ from $C$ to $A$ in E2 must be $l_2=l_3-2\times (0.6\delta)\leq \Delta-1.2\delta$. 
    The message delay from $A$ to $g$ in E2 is $\leq \delta$ in both E1 and E2. 
    Then, $g$ in E2 can learn that $C$ received $1$ from the broadcaster via the forwarded messages from $A$, before time $1.5\delta+l_2+\delta\leq \Delta+1.3\delta$. Thus, E1 and E2 are no longer indistinguishable to $g$.

\paragraph{BB upper bound $\Delta+1.5\delta$ under unsynchronized start and $f<n/2$.}
Now, we show the bound $\Delta+1.5\delta$ is tight under unsynchronized start, by presenting the protocol $(\Delta+1.5\delta)$-BB in Figure~\ref{fig:sync:D+1.5d}.

\textbf{Intuition.}
Before presenting the details of our protocol, we would like to provide the intuition of the state-of-the-art BB protocol with \latency $\Delta+2\delta$ from~\cite{abraham2020optimal}, and how our protocol improves the result to the optimal $\Delta+1.5\delta$, as illustrated in Figure~\ref{fig:intuition}.
The key insight of the $(\Delta+2\delta)$-BB is to use a $\Delta$ time waiting window for equivocation detection before voting, so that no two honest parties will vote for different values. More specifically, when receiving the proposed value from the broadcaster, the party forwards the proposal and waits for a time of $\Delta$. If no conflicting value is received during the $\Delta$ period, the party votes for the value. 
As shown in Figure~\ref{fig:intuition}, the forwarded value $v$ from the first honest party will reach all other honest parties within their $\Delta$ period and thus prevents them from voting for a different value $v'$. Since no two honest parties vote for different values, there can be at most one value with a certificate ($f+1$ votes), ensuring that all honest parties lock on the committed value.  
Further improving the \latency in this paradigm, however, is nontrivial. 
If we allow the parties to vote before the $\Delta$ period ends, there may be honest parties voting for different values before they detect equivocation. Then certificates for different values will be formed since $f$ Byzantine parties can double vote, and honest parties cannot tell which is the value that has been actually committed.

\begin{figure*}[t]
    \centering
    \resizebox{0.75\textwidth}{!}{
        \tikzset{every picture/.style={line width=0.75pt}} %set default line width to 0.75pt        

\begin{tikzpicture}[x=0.75pt,y=0.75pt,yscale=-1,xscale=1]
%uncomment if require: \path (0,478); %set diagram left start at 0, and has height of 478

%Straight Lines [id:da2925560805287091] 
\draw    (300.38,60.19) -- (557.33,60.19) ;
\draw [shift={(560.33,60.19)}, rotate = 180] [fill={rgb, 255:red, 0; green, 0; blue, 0 }  ][line width=0.08]  [draw opacity=0] (8.93,-4.29) -- (0,0) -- (8.93,4.29) -- cycle    ;
%Straight Lines [id:da597443269967258] 
\draw [line width=2.25]    (320.31,55.26) -- (320.31,60.4) ;
%Straight Lines [id:da1303353251506294] 
\draw [line width=2.25]    (360.64,55.6) -- (360.64,60.74) ;
%Straight Lines [id:da7805913423184134] 
\draw [line width=2.25]    (440.31,55.6) -- (440.31,60.74) ;
%Straight Lines [id:da1288207184021899] 
\draw [line width=2.25]    (520.31,55.6) -- (520.31,60.74) ;
%Straight Lines [id:da17352002424135882] 
\draw    (300.33,149.83) -- (557,149.83) ;
\draw [shift={(560,149.83)}, rotate = 180] [fill={rgb, 255:red, 0; green, 0; blue, 0 }  ][line width=0.08]  [draw opacity=0] (8.93,-4.29) -- (0,0) -- (8.93,4.29) -- cycle    ;
%Straight Lines [id:da4218899905242286] 
\draw [line width=2.25]    (320.38,145.13) -- (320.38,150.27) ;
%Straight Lines [id:da11993507988339758] 
\draw [line width=2.25]    (401.84,145.2) -- (401.84,150.34) ;
%Straight Lines [id:da8354311003972761] 
\draw [line width=2.25]    (481.51,145.2) -- (481.51,150.34) ;
%Straight Lines [id:da8494504045732152] 
\draw    (401.84,145.2) -- (518.68,61.9) ;
\draw [shift={(520.31,60.74)}, rotate = 504.51] [color={rgb, 255:red, 0; green, 0; blue, 0 }  ][line width=0.75]    (10.93,-3.29) .. controls (6.95,-1.4) and (3.31,-0.3) .. (0,0) .. controls (3.31,0.3) and (6.95,1.4) .. (10.93,3.29)   ;
%Straight Lines [id:da20868036602647266] 
\draw    (360.64,60.74) -- (466.79,148.23) ;
\draw [shift={(468.33,149.5)}, rotate = 219.5] [color={rgb, 255:red, 0; green, 0; blue, 0 }  ][line width=0.75]    (10.93,-3.29) .. controls (6.95,-1.4) and (3.31,-0.3) .. (0,0) .. controls (3.31,0.3) and (6.95,1.4) .. (10.93,3.29)   ;
%Straight Lines [id:da09946381481309907] 
\draw    (9.95,61.02) -- (237.33,61.02) ;
\draw [shift={(240.33,61.02)}, rotate = 180] [fill={rgb, 255:red, 0; green, 0; blue, 0 }  ][line width=0.08]  [draw opacity=0] (8.93,-4.29) -- (0,0) -- (8.93,4.29) -- cycle    ;
%Straight Lines [id:da22030669103428546] 
\draw [line width=2.25]    (29.88,56.1) -- (29.88,61.24) ;
%Straight Lines [id:da4394984625152545] 
\draw [line width=2.25]    (70.21,56.43) -- (70.21,61.57) ;
%Straight Lines [id:da5246634253791145] 
\draw [line width=2.25]    (189.88,56.43) -- (189.88,61.57) ;
%Straight Lines [id:da49929594119931453] 
\draw    (10.33,150.62) -- (237.33,150.62) ;
\draw [shift={(240.33,150.62)}, rotate = 180] [fill={rgb, 255:red, 0; green, 0; blue, 0 }  ][line width=0.08]  [draw opacity=0] (8.93,-4.29) -- (0,0) -- (8.93,4.29) -- cycle    ;
%Straight Lines [id:da16095577899769653] 
\draw [line width=2.25]    (30.28,145.3) -- (30.28,150.44) ;
%Straight Lines [id:da21313266475460835] 
\draw [line width=2.25]    (93.41,146.03) -- (93.41,151.17) ;
%Straight Lines [id:da16949727613174037] 
\draw [line width=2.25]    (213.08,146.03) -- (213.08,151.17) ;
%Straight Lines [id:da3649004087270362] 
\draw    (93.41,146.03) -- (214.02,62.97) ;
\draw [shift={(215.67,61.83)}, rotate = 505.45] [color={rgb, 255:red, 0; green, 0; blue, 0 }  ][line width=0.75]    (10.93,-3.29) .. controls (6.95,-1.4) and (3.31,-0.3) .. (0,0) .. controls (3.31,0.3) and (6.95,1.4) .. (10.93,3.29)   ;
%Straight Lines [id:da7682424085775041] 
\draw    (189.88,61.57) -- (227.72,89.32) ;
\draw [shift={(229.33,90.5)}, rotate = 216.25] [color={rgb, 255:red, 0; green, 0; blue, 0 }  ][line width=0.75]    (10.93,-3.29) .. controls (6.95,-1.4) and (3.31,-0.3) .. (0,0) .. controls (3.31,0.3) and (6.95,1.4) .. (10.93,3.29)   ;
%Straight Lines [id:da2102815405241225] 
\draw    (70.21,61.57) -- (172.15,149.2) ;
\draw [shift={(173.67,150.5)}, rotate = 220.68] [color={rgb, 255:red, 0; green, 0; blue, 0 }  ][line width=0.75]    (10.93,-3.29) .. controls (6.95,-1.4) and (3.31,-0.3) .. (0,0) .. controls (3.31,0.3) and (6.95,1.4) .. (10.93,3.29)   ;
%Straight Lines [id:da7999412675772444] 
\draw    (440.31,60.74) -- (467.65,78.41) ;
\draw [shift={(469.33,79.5)}, rotate = 212.88] [color={rgb, 255:red, 0; green, 0; blue, 0 }  ][line width=0.75]    (10.93,-3.29) .. controls (6.95,-1.4) and (3.31,-0.3) .. (0,0) .. controls (3.31,0.3) and (6.95,1.4) .. (10.93,3.29)   ;

% Text Node
\draw (357.17,41.5) node [anchor=north west][inner sep=0.75pt]  [font=\large] [align=left] {$\displaystyle t$};
% Text Node
\draw (409,42.17) node [anchor=north west][inner sep=0.75pt]  [font=\large] [align=left] {$\displaystyle t+\Delta -0.5d$};
% Text Node
\draw (495.33,42.33) node [anchor=north west][inner sep=0.75pt]  [font=\large] [align=left] {$\displaystyle t+\Delta +0.5d$};
% Text Node
\draw (315.87,41.7) node [anchor=north west][inner sep=0.75pt]  [font=\large] [align=left] {$\displaystyle 0$};
% Text Node
\draw (397.37,153.3) node [anchor=north west][inner sep=0.75pt]  [font=\large] [align=left] {$\displaystyle t'$};
% Text Node
\draw (453.1,152.77) node [anchor=north west][inner sep=0.75pt]  [font=\large] [align=left] {$\displaystyle t+\Delta -0.5d'$};
% Text Node
\draw (315.43,152.63) node [anchor=north west][inner sep=0.75pt]  [font=\large] [align=left] {$\displaystyle 0$};
% Text Node
\draw (501.3,71.5) node [anchor=north west][inner sep=0.75pt]  [font=\large] [align=left] {$\displaystyle \Delta $};
% Text Node
\draw (434.7,130.3) node [anchor=north west][inner sep=0.75pt]  [font=\large] [align=left] {$\displaystyle \Delta $};
% Text Node
\draw (179.07,43.33) node [anchor=north west][inner sep=0.75pt]  [font=\large] [align=left] {$\displaystyle t+\Delta $};
% Text Node
\draw (66.73,42.83) node [anchor=north west][inner sep=0.75pt]  [font=\large] [align=left] {$\displaystyle t$};
% Text Node
\draw (25.93,42.03) node [anchor=north west][inner sep=0.75pt]  [font=\large] [align=left] {$\displaystyle 0$};
% Text Node
\draw (198.17,153.93) node [anchor=north west][inner sep=0.75pt]  [font=\large] [align=left] {$\displaystyle t'+\Delta $};
% Text Node
\draw (87.93,153.63) node [anchor=north west][inner sep=0.75pt]  [font=\large] [align=left] {$\displaystyle t'$};
% Text Node
\draw (25.33,153.63) node [anchor=north west][inner sep=0.75pt]  [font=\large] [align=left] {$\displaystyle 0$};
% Text Node
\draw (173.93,72.47) node [anchor=north west][inner sep=0.75pt]  [font=\large] [align=left] {$\displaystyle \Delta $};
% Text Node
\draw (45,125) node [anchor=north west][inner sep=0.75pt]   [align=left] {{\large forward $\displaystyle v'$}};
% Text Node
\draw (229,72) node [anchor=north west][inner sep=0.75pt]   [align=left] {{\large vote $\displaystyle v$}};
% Text Node
\draw (32.88,72.74) node [anchor=north west][inner sep=0.75pt]   [align=left] {{\large forward $\displaystyle v$}};
% Text Node
\draw (158.27,122.13) node [anchor=north west][inner sep=0.75pt]  [font=\large] [align=left] {$\displaystyle \Delta $};
% Text Node
\draw (187.83,124.17) node [anchor=north west][inner sep=0.75pt]   [align=left] {{\large will not vote $\displaystyle v'$}};
% Text Node
\draw (322.81,71.9) node [anchor=north west][inner sep=0.75pt]   [align=left] {{\large forward $\displaystyle v$}};
% Text Node
\draw (355,124.67) node [anchor=north west][inner sep=0.75pt]   [align=left] {{\large forward $\displaystyle v'$}};
% Text Node
\draw (425.17,72.67) node [anchor=north west][inner sep=0.75pt]   [align=left] {{\large vote $\displaystyle v$}};
% Text Node
\draw (470,124.67) node [anchor=north west][inner sep=0.75pt]   [align=left] {{\large will not vote $\displaystyle v'$}};
% Text Node
\draw (81.5,183) node [anchor=north west][inner sep=0.75pt]   [align=left] {$\displaystyle ( \Delta +2\delta )$-BB~\cite{abraham2020optimal}};
% Text Node
\draw (382,183) node [anchor=north west][inner sep=0.75pt]   [align=left] {$\displaystyle ( \Delta +1.5\delta )$-BB};

\end{tikzpicture}
    }
    \vspace{-1em}
    \caption{Intuition of the $(\Delta+1.5\delta)$-BB Protocol.}
    \vspace{-1em}
    \label{fig:intuition}
\end{figure*}

One novelty of our $(\Delta+1.5\delta)$-BB is to break such indistinguishability, by allowing parties to ``early vote'' with a parameter $d$ that ``guesses'' the value of $\delta$, and ranking the certificates by the value of $d$ (a smaller $d$ ranks higher).
So in our protocol, though honest parties may vote for different values, only the one with the highest rank will win, and we will guarantee that the certificate for any committed value always has the highest rank.
More specifically, for any value $d\in [0,\Delta]$, after $\Delta-0.5d$ time since receiving the proposed value $v$, parties send a vote containing $d$ and $v$, if no equivocation has been detected so far. Then, if $f+1$ votes with the same parameter $d$ and the same value $v$ are received, and no equivocation is detected for $\Delta+0.5d$ time since receiving $v$, a party can commit $v$. 
Our protocol guarantees that no honest party can vote for any other value $v'\neq v$ with a parameter $d'\leq d$ (Lemma~\ref{lem:bb3:1}). 
The intuition is that, as shown in Figure~\ref{fig:intuition}, if the second honest party receives the proposal $v'$ no later than some time threshold, its forwarded proposal will stop the first honest party from committing $v$. But if the second honest party receives $v'$ later than the time threshold, the forwarded proposal of $v$ from the first honest party will stop it from sending any votes with parameter $d'\leq d$ due to detecting equivocation.
Our construction guarantees a \latency of $\Delta+1.5\delta$. 
When the broadcaster is honest, all honest parties receive the value within time $\delta$, send vote with $d=\delta$ within time $\delta+\Delta-0.5\delta$, and receive $f+1$ votes from honest parties and commit within time $\delta+\Delta-0.5\delta+\delta=\Delta+1.5\delta$.
For votes with $d>\delta$, parties can only commit at time $\delta+\Delta+0.5d$ by our protocol, which leads to a latency $>\Delta+1.5\delta$.
For votes with $d<\delta$, $f+1$ such votes sent at time $\Delta-0.5d$ may not be received by all honest parties at time $\Delta+0.5d$, as the message delay $\delta>d$. 
% Hence, the best \latency our protocol can guarantee is $\Delta+1.5\delta$, which is optimal due to Theorem~\ref{thm:lb:sync:D+1.5d}.
%
It should be noted even if parties ``guess'' the value of $\delta$ wrong, the protocol always guarantees agreement, termination, validity, and the optimal \latency of $\Delta+1.5\delta$.

\begin{figure}[tb]
    \centering
    \begin{mybox}
    Initially, every party $i$ starts the protocol at most $\delta$ time apart with a local clock from $0$, sets $\texttt{direct-rcv}=false$, $\lock=\bot$, $\clockskew=\Delta$ and $\rank=\Delta+1$.

\begin{enumerate}
    \item\label{bb3:step:propose} \textbf{Propose.} The designated broadcaster $L$  with input $v$ sends $\langle \propose, v\rangle_L$ to all  parties.

    \item\label{bb3:step:forward} \textbf{Forward.} 
    When receiving the first valid proposal $\langle \propose, v\rangle_L$ from party $j$ at local time $t_{prop}$, forward the proposal to all parties.
    If $j=L$ and $t_{prop}\leq \Delta+\clockskew$, set $\texttt{direct-rcv}=true$.
    
    \item\label{bb3:step:vote} \textbf{Vote.}
    For every $d\in [0,\Delta]$~\footnote{The protocol is purely theoretical as the message complexity is unbounded. Its purpose is just to show the tightness of the  $\Delta+1.5\delta$ bound. 
    }
    , at local time $t_{prop}+\Delta-0.5d$, if the party detects no equivocation, it sends a \vote in the form of $\langle \vote, d, \langle \propose, v\rangle_L \rangle_i$ to all parties.

    \item\label{bb3:step:commit} \textbf{Commit and Lock.}
    When receiving $f+1$ signed \vote messages of the same $d$ and $v$ at local time $t_{votes}$, forwards these $f+1$ \vote messages and performs the following:
    \begin{enumerate}
        \item\label{bb3:step:commit:1} If $t_{votes}-t_{prop}\leq \Delta+1.5d$, detect no equivocation until local time $t_{prop}+\Delta+0.5d$, and $\texttt{direct-rcv}=true$, commit $v$.
        \item\label{bb3:step:commit:3} If $t_{votes}-t_{prop}\leq 4.5\Delta$ and $\rank > d$, update $\lock=v$ and $\rank=d$.
    \end{enumerate}

    \item\label{bb3:step:ba} \textbf{Byzantine agreement.}
    At local time $6.5\Delta+2\clockskew$, invoke an instance of Byzantine agreement with $\lock$ as the input. 
    If not committed, commit on the output of the Byzantine agreement. 
    Terminate.

\end{enumerate}
    \end{mybox}
    \caption{$(\Delta+1.5\delta)$-BB Protocol with $n/3 \leq f<n/2$ and unsynchronized start. }
    \vspace{-1em}
    \label{fig:sync:D+1.5d}
\end{figure}

\textbf{Protocol description.}
Each party starts the protocol at most $\clockskew$ time apart with a local clock starting at $0$, and it is guaranteed that $\clockskew\leq \delta$ by any clock synchronization protocol~\cite{dolev1995dynamic, abraham2019synchronous}.
Since the value of $\delta$ is unknown to the protocol designer or any party, all parties set the parameter $\clockskew=\Delta$ in the protocol.
Initially, each party sets $\lock=\bot$, $\rank=\Delta+1$, a flag $\texttt{direct-rcv}=false$, and starts the protocol at time at most $\clockskew=\Delta$ apart with its local clock starting from $0$.
The broadcaster first multicasts its proposed value, and all parties forward the first valid proposal received.
If the party receives the first valid proposal from the broadcaster at local time $t_{prop}$ and $t_{prop}\leq \Delta+\clockskew$, it sets the flag $\texttt{direct-rcv}=true$.
For every $d\in[0,\Delta]$, after $\Delta-0.5d$ time since the proposal is received, the party multicasts a \vote with parameter $d$ if no equivocation is detected so far.
When receiving $f+1$ \vote with the same value of $d$ 
and the same value $v$ at time $t_{votes}$, the party forwards these \vote messages and checks the following to commit or lock.
If the time between receiving the votes and the proposal is $\leq \Delta+1.5d$, and no equivocation is detected until local time $t_{prop}+\Delta+0.5d$, and the party receives the proposal from the broadcaster ($\texttt{direct-rcv}=true$), the party commits $v$.
If the time between receiving the votes and the proposal is $\leq 4.5\Delta$, and its $\rank>d$, the party updates its lock and rank.
Finally, at local time $6.5\Delta+2\clockskew$, the parties participate in an instance of BA with input $\lock$, and commit the output if have not committed.

\textbf{Correctness of the $(\Delta+1.5\delta)$-BB Protocol.}
\label{sec:ubproof:1+1.5}
% The proof of Lemma~\ref{lem:bb3:1} can be found in Appendix~\ref{sec:ubproof:1+1.5}.
In the proof, we use {\em local time} to refer the time indicated by the local clock at each party, and {\em global time} to refer the time indicated by some global clock.

\begin{lemma}\label{lem:bb3:1}
    If an honest party commits some value $v$ at Step~\ref{bb3:step:commit:1} by receiving $f+1$ \vote messages of the same value of $d\in[0,\Delta]$ and $v$, then 
    (1) no honest party sends \vote with $d'\leq d$ for any $v'\neq v$,
    (2) no honest party commits $v'\neq v$ at Step~\ref{bb3:step:commit}, and
    (3) all honest parties have $\lock=v$ when invoking the BA at Step~\ref{bb3:step:ba}.
    
\end{lemma}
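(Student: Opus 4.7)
The plan is to work in global time throughout, writing $T_i$ for honest party $i$'s global start time so that local time $t$ at $i$ corresponds to global time $T_i+t$, with $|T_i-T_j|\leq\clockskew\leq\delta$ for honest $i,j$. Three facts about the committing party $h$ will be central: (a) since $\texttt{direct-rcv}=true$, $h$'s first valid proposal comes directly from $L$ at local time $t_{prop}\leq\Delta+\clockskew$; (b) $h$ detects no equivocation through local time $t_{prop}+\Delta+0.5d$; and (c) the $f{+}1$ matching votes for $(d,v)$ arrive at $h$ by local time $t_{prop}+\Delta+1.5d$. The key protocol lever is Step~\ref{bb3:step:forward}: every honest party forwards its first valid proposal, so any equivocating proposal propagates between honest parties within actual delay $\delta$.

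For part~(1) I would proceed by contradiction. Suppose an honest $h'$ sends $\langle\vote,d',\langle\propose,v'\rangle_L\rangle$ with $v'\neq v$ and $d'\leq d$. Then $h$'s forward of $v$, emitted at global time $T_h+t_{prop}$, must not arrive at $h'$ by $h'$'s voting instant $T_{h'}+t_{prop}'+\Delta-0.5d'$ (otherwise $h'$ would see equivocation and abstain); using $\leq\delta$ honest-link delay, this forces
\[
T_{h'}+t_{prop}' \;<\; T_h+t_{prop}+\delta-\Delta+0.5d'.
\]
Symmetrically, $h'$'s forward of $v'$ at global time $T_{h'}+t_{prop}'$ must not reach $h$ by $h$'s no-equivocation deadline $T_h+t_{prop}+\Delta+0.5d$, which gives
\[
T_{h'}+t_{prop}' \;>\; T_h+t_{prop}+\Delta+0.5d-\delta.
\]
Chaining the two inequalities cancels $T_h,T_{h'},t_{prop},t_{prop}'$ and collapses to $2(\Delta-\delta)<0.5(d'-d)\leq 0$, contradicting $\delta\leq\Delta$.

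Part~(2) then falls out as a short corollary: if a second honest commit of $v_2\neq v$ at Step~\ref{bb3:step:commit:1} uses parameter $d_2$, it is backed by some honest voter for $(d_2,v_2)$; applying part~(1) at $h$'s commit forces $d_2>d$, while applying part~(1) at the second committer (using the honest voter inside $h$'s quorum for $(d,v)$) forces $d>d_2$, a contradiction.

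For part~(3), fix any honest $p$. Since $h$ forwards the $f{+}1$ votes at Step~\ref{bb3:step:commit}, $p$ receives them within $\delta$, so by (a) and (c) we get $t_{votes}^p\leq t_{votes}+\delta+\clockskew\leq 4.5\Delta+2\delta\leq 6.5\Delta$, safely before $p$'s BA at local time $6.5\Delta+2\clockskew$. The subtle step is lower-bounding $p$'s first-proposal global time $G_{prop}^p$ so that Step~\ref{bb3:step:commit:3}'s constraint $t_{votes}^p-t_{prop}^p\leq 4.5\Delta$ is satisfied, which is delicate because a Byzantine $L$ can deliver a valid proposal to $p$ arbitrarily early. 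Here I use that $p$ forwards its first valid proposal at $G_{prop}^p$, arriving at $h$ by $G_{prop}^p+\delta$; since $h$'s first valid at $T_h+t_{prop}$ is no later than any valid arrival at $h$, we obtain $G_{prop}^p\geq T_h+t_{prop}-\delta$, and combined with $G_{votes}^p\leq T_h+t_{votes}+\delta$ this yields
\[
t_{votes}^p-t_{prop}^p \;=\; G_{votes}^p-G_{prop}^p \;\leq\; (t_{votes}-t_{prop})+2\delta \;\leq\; \Delta+1.5d+2\delta \;\leq\; 4.5\Delta.
\]
Thus the Step~\ref{bb3:step:commit:3} update triggers at $p$. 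By part~(1), no $f{+}1$ quorum on any $(d'',v'')$ with $v''\neq v,\,d''\leq d$ can form (such a quorum needs an honest voter), so every rank-$\leq d$ update sets $\lock=v$; once $p$'s $\rank$ reaches $d$ no later update can move $\lock$ off $v$, so $\lock=v$ holds when $p$ enters the BA at Step~\ref{bb3:step:ba}. The main obstacle throughout is this careful interplay between the honest-link delay $\delta$ and the $\Delta$-wide equivocation window used in the commit deadline, which must be squeezed just tightly enough to rule out competing votes while leaving room for forward propagation of the committed quorum.
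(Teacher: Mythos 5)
Your proof is correct and follows essentially the same route as the paper's: part~(1) is the same forwarding-plus-equivocation-window timing contradiction (you chain two inequalities on the global proposal-receipt times of $h$ and $h'$ where the paper anchors everything to $h$'s commit time, but the content is identical), and part~(3) establishes the same $4.5\Delta$ and $6.5\Delta+2\sigma$ bounds and then invokes part~(1) to rule out any competing $f+1$ quorum of rank $\leq d$. The only real divergence is part~(2), which you obtain as a short corollary of part~(1) by locating an honest voter inside each committing quorum and deriving the contradictory pair $d_2>d$ and $d>d_2$, whereas the paper reruns the direct timing argument; your version is shorter and equally valid.
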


\begin{proof}%[Proof of Lemma~\ref{lem:bb3:1}]
    {\bf Part (1).}
    Suppose on the contrary that an honest party $h$ receives $f+1$ \vote messages of the same $d,v$ and then commits $v$ at Step~\ref{bb3:step:commit:1} at global time $t$, and some honest party $h'$ sends \vote  with $d'\leq d$ for some value $v'\neq v$.
    If $h'$ receives the proposal of $v'$ at global time $\leq t-\Delta$, then its forwarded proposal reaches $h$ at global time $\leq t$, and will stop $h$ from committing $v$ due to detecting equivocation. Hence $h'$ receives the proposal of $v'$ at global time $>t-\Delta$. 
    Since $h$ commits $v$ at local time $\geq t_{prop}+ \Delta+0.5d$ and at global time $t$, it receives the proposal of $v$ and forwards it to all parties at local time $t_{prop}$ and at global time $\leq t-\Delta-0.5d$. 
    Thus, the forwarded proposal of $v$ will reach $h'$ at global time $\leq t-\Delta-0.5d+\Delta=t-0.5d$.
    Since $h'$ receives the proposal of $v'$ at global time $> t-\Delta$ and local time $t_{prop}'$,  when $h'$ receives the forwarded proposal of $v$ from $h$ at global time $\leq t-0.5d$, its local time should be $<t_{prop}'+(t-0.5d)-(t-\Delta)=t_{prop}'+\Delta-0.5d\leq t_{prop}'+\Delta-0.5d'$ where $d'\leq d$. Therefore, $h'$ will not send any \vote with $d'\leq d$ for any value $v'\neq v$, since it detects equivocation within local time $t_{prop}'+\Delta-0.5d'$. This is a contradiction and thus no honest party sends \vote with $d'\leq d$ for any $v'\neq v$.
    
    {\bf Part (2).} 
    Suppose on the contrary that
    an honest party $h$ commits $v$ at Step~\ref{bb3:step:commit:1} at global time $t$, and some honest party $h'$ commits $v'\neq v$ at Step~\ref{bb3:step:commit:1}. Similar to Part (1), $h'$ must receive the proposal of $v'$ at global time $>t-\Delta$, otherwise $h$ will receive the proposal of $v'$ and not commit $v$. Also, the forwarded proposal of $v$ from $h$ will reach $h'$ at global time $\leq t-\Delta-0.5d+\Delta=t-0.5d$, thus the local time at $h'$ when receiving the proposal of $v$ is $<t_{prop}'+(t-0.5d)-(t-\Delta)=t_{prop}'+\Delta-0.5d\leq t_{prop}'+\Delta+0.5d'$ for any $d'\in[0,\Delta]$. Hence, $h'$ will not commit $v'$ at Step~\ref{bb3:step:commit} due to the detection of equivocation. This is a contradiction and thus no honest party commits $v'\neq v$ at Step~\ref{bb3:step:commit}.
    
    {\bf Part (3).}
    Any two honest parties receive the first valid proposal at most $\Delta$ time apart, since the first honest party that receives the proposal will forward it to all other parties, and the forwarded proposal will arrive within time $\Delta$.
    After the honest party $h$ commits some value $v$ at Step~\ref{bb3:step:commit:1} when $t_{votes}-t_{prop}\leq \Delta+1.5d$, all honest parties receives the $f+1$ \vote with $d$ messages forwarded by this honest party within $\Delta$ time.  
    Therefore, when receiving the $f+1$ \vote messages with $d$, any party has $t_{votes}-t_{prop}\leq \Delta+1.5d+2\Delta \leq 4.5\Delta$.
    
    Since an honest party $h$ commits $v$ at Step~\ref{bb3:step:commit:1}, it has $\texttt{direct-rcv}=true$ and receives the proposal from the broadcaster at local time $\leq \Delta+\clockskew$.
    Then, its forwarded proposal reaches all other parties at their local time $\leq 2\Delta+2\clockskew$, since the message delay is bounded by $\Delta$ and local clocks at any two parties have skew $\leq \clockskew$. 
    Since $t_{prop}\leq 2\Delta+2\clockskew$ at any party, any party has $t_{votes}\leq t_{prop}+4.5\Delta\leq 6.5\Delta+2\clockskew$, and thus will not invoke the BA at Step~\ref{bb3:step:ba} before setting the \lock.
    Moreover, by Part (1), no honest party sends any \vote with $d'\leq d$ for any $v'\neq v$, there exists no $f+1$ \vote with $d'\leq d$ for any $v'\neq v$. Hence all honest parties set $\lock=v$ at Step~\ref{bb3:step:commit:3} and will not change the $\lock$.
\end{proof}
\begin{theorem}\label{thm:ub:sync:unsyncstart}
    $(\Delta+1.5\delta)$-BB protocol solves Byzantine broadcast under $n/3<f<n/2$ faults in the synchronous authenticated setting, and has optimal \latency of  $\Delta+1.5\delta$.
\end{theorem}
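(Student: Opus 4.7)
The plan is to establish the four required properties of the broadcast -- agreement, validity, termination, and the tight $\Delta+1.5\delta$ good-case latency -- by leaning on Lemma~\ref{lem:bb3:1} for all safety-flavored arguments and doing an explicit clock-reconciliation calculation for the honest-broadcaster case.

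For \textbf{agreement}, I would split on whether any honest party commits at Step~\ref{bb3:step:commit:1}. If none does, every honest commit occurs through the Byzantine agreement of Step~\ref{bb3:step:ba}, so agreement follows from the BA's agreement property. If some honest party commits $v$ at Step~\ref{bb3:step:commit:1}, Lemma~\ref{lem:bb3:1}(2) rules out any other honest party committing $v'\neq v$ at Step~\ref{bb3:step:commit:1}, while Lemma~\ref{lem:bb3:1}(3) forces every honest party to invoke the BA with $\lock=v$; BA validity then outputs $v$, so any commit via Step~\ref{bb3:step:ba} is also for $v$. For \textbf{termination}, every honest party reaches Step~\ref{bb3:step:ba} by local time $6.5\Delta+2\sigma$ regardless of earlier events, and the BA primitive (assumed in Section~\ref{sec:prelim} to tolerate clock skew up to $\sigma$) guarantees all honest parties terminate after producing an output.

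For \textbf{validity} and the \textbf{$\Delta+1.5\delta$ good-case latency}, assume the broadcaster is honest and, without loss of generality, sends its proposal at global time $0$. Since the broadcaster is honest, no valid conflicting signed proposal ever exists, so no honest party can detect equivocation. Every honest party $i$ receives the proposal directly from the broadcaster by global time at most $\delta$; its local time at reception is therefore at most $\delta+\sigma\leq \Delta+\sigma$, so $\texttt{direct-rcv}=true$. Among the parameters $d\in[0,\Delta]$ that the protocol cycles over, I would single out $d=\delta$: each honest party sends such a $\vote$ for the broadcaster's value $v$ at local time $t_{prop}+\Delta-0.5\delta$, and within an additional $\delta$ of global time these votes reach every other honest party. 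A direct global-time computation then shows that at every honest party $j$ the conditions $t_{votes}^{(j)}-t_{prop}^{(j)}\leq \Delta+1.5\delta = \Delta+1.5d$ and ``no equivocation by local time $t_{prop}^{(j)}+\Delta+0.5\delta$'' both hold, so all three checks of Step~\ref{bb3:step:commit:1} are satisfied at $d=\delta$, and the global commit time is at most $\Delta+1.5\delta$. Lemma~\ref{lem:bb3:1}(1) guarantees that no $f+1$-vote certificate for any $v'\neq v$ with $d'\leq \delta$ can arise, so no competing commit interferes, establishing both validity and the claimed latency.

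The main bookkeeping obstacle is the careful conversion between local and global clocks in the timing argument: I must verify that the $0.5\delta$ slack on the commit window introduced by choosing $d=\delta$ exactly absorbs both the propagation time from the broadcaster and the unsynchronized-start skew $\sigma\leq\delta$, which is precisely why the upper bound matches the lower bound of Theorem~\ref{thm:lb:sync:D+1.5d}. Since Lemma~\ref{lem:bb3:1} already handles the hard safety arguments, the remainder of the proof is a careful but otherwise routine accounting.
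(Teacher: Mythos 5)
Your proposal is correct and follows essentially the same route as the paper: agreement and termination via the case split on Step~\ref{bb3:step:commit:1} together with Lemma~\ref{lem:bb3:1}(2)--(3) and the BA properties, and validity plus the $\Delta+1.5\delta$ latency by instantiating $d=\delta$ and computing that votes sent at local time $t_{prop}+\Delta-0.5\delta$ arrive by global time $\delta+(\Delta-0.5\delta)+\delta=\Delta+1.5\delta$. The only cosmetic difference is that for validity the paper argues directly that an honest broadcaster signs only one value (so no conflicting vote can exist), whereas you invoke Lemma~\ref{lem:bb3:1}(1), which is unnecessary but harmless.
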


\begin{proof}
    {\bf Agreement.}
    If all honest parties commit at Step~\ref{bb3:step:ba}, all honest parties commit on the same value due to the agreement property of the BA.
    Otherwise, there must be some honest party that commits at Step~\ref{bb3:step:commit}. By Lemma~\ref{lem:bb3:1}, no honest party commits $v'\neq v$ at Step~\ref{bb3:step:commit} and all honest parties set $\lock=v$ at Step~\ref{bb3:step:ba}. Since all honest parties input $v$ to the BA, by the validity of BA, the output of BA is $v$, so any honest party that has not committed will commit $v$.
    
    {\bf Termination.}
    According to the protocol, honest parties invoke a BA instance at local time $6.5\Delta+2\clockskew$, and terminate at Step~\ref{bb3:step:ba}. The parties commit a value before termination at Step~\ref{bb3:step:ba} or \ref{bb3:step:commit}.

    {\bf Validity.}
    If the broadcaster is honest, it sends the same proposal of value $v$ to all parties, and all honest parties receive the proposal at local time $\leq \Delta+\clockskew$ and set $\texttt{direct-rcv}=true$. 
    All $n-f\geq f+1$ honest parties will send \vote with $d$ for the proposal at Step~\ref{bb3:step:vote}, and there exists no \vote with $d$ for any $v'\neq v$. 
    Then at Step~\ref{bb3:step:commit}, all honest parties detect no equivocation and receive $f+1$ signed \vote with $d$ for $v$ from honest parties, thus commit $v$.
    
    {\bf Good-case latency.}
    In the good case, the broadcaster is honest and sends the same proposal of value $v$ at global time $0$. The proposal reaches all parties by global time $\delta$ and all honest parties set $\texttt{direct-rcv}=true$.
    Then, by global time $\delta+(\Delta-0.5\delta)$ all $n-f\geq f+1$ honest parties send \vote with $\delta$ 
    for $v$, and the above $n-f\geq f+1$ \vote with $\delta$ reach all honest parties at global time $\leq \delta+(\Delta-0.5\delta)+\delta=\Delta+1.5\delta$. 
    Since any honest party receives the proposal at global time $\geq 0$, we have $t_{votes}-t_{prop}\leq \Delta+1.5\delta$ at all honest parties. Moreover, honest parties detect no equivocation and have $\texttt{direct-rcv}=true$. 
    Thus, they all commit on $v$ at Step~\ref{bb3:step:commit:1} within time $\Delta+1.5\delta$ after the broadcaster sends the proposal. Therefore, the \latency of the protocol is $\Delta+1.5\delta$.
\end{proof}

\textbf{Tradeoff between communication complexity and good-case latency.}
The $(\Delta+1.5\delta)$-BB protocol has unbounded communication complexity to achieve the optimal good-case latency. In practice, we can bound the cost while achieving near-optimal good-case latency. More specifically, in Step~\ref{bb3:step:vote} of $(\Delta+1.5\delta)$-BB, we can discretely sample $m$ values for $d$ from the interval uniformly, to obtain near-optimal good-case latency $(1+\frac{1}{2m})\Delta+1.5\delta$ with communication cost $O(mn^2)$. Another practical protocol with good-case latency $\Delta+2\delta$ and cost $O(n^2)$ was proposed in~\cite{abraham2020optimal}.

\subsection{$n/2<f<n$, Lower Bound of $(\lfloor \frac{n}{n-f} \rfloor -1)\Delta$ and Upper Bound of $O(\frac{n}{n-f})\Delta$}
\label{sec:sync:5}

For the dishonest majority case, we prove a lower bound of $(\lfloor \frac{n}{n-f} \rfloor -1)\Delta$ on the \latency, and an upper bound of $O(\frac{n}{n-f})\Delta$ can be implied by the BB protocol in~\cite{constantroundBB}.
We defer the results to Appendix~\ref{sec:sync>1/2} for brevity.

\section{Related Work}

\paragraph{Improving worst-case latency for broadcast protocols.}
Byzantine fault tolerant broadcast, first proposed by Lamport et al. \cite{lamport1982byzantine}, have received a significant amount of attention for several decades.
For synchronous protocols, the deterministic Dolev-Strong protocol \cite{dolev1983authenticated} solves BB in worst-case $f+1$ rounds, matching a lower bound~\cite{fischer1982lower}.
For protocols with early stopping~\cite{dolev1990early}, a lower bound of $f'+2$ rounds exists for $f'<f$ actual faults. 
The classic asynchronous Byzantine reliable broadcast by Bracha~\cite{bracha1987asynchronous} has a worst-case latency of $3$ rounds.
A sequence of effort has been made on reducing the round complexity to expected constant through randomization \cite{ben1983another, rabin1983randomized,feldman1988optimal,katz2006expected,abraham2019synchronous}.

\paragraph{Improving \latency for BFT protocols.}
Decades of research on BFT state machine replication protocols focus on improving the performance of the protocol when an honest leader is in charge, which is what we formally defined as the \latency.
Under synchrony, Sync HotStuff \cite{synchotstuff} is a recent synchronous BFT SMR protocol that has a \latency of $2\Delta$. 
Later work~\cite{abraham2020optimal, abraham2020brief} improves the \latency to $\Delta+2\delta$ with a lower bound of $\Delta$ for the $n/3<f<n/2$ case. Our work closes the gaps, and gives a complete categorization of the \latency for synchronous and asynchronous broadcast protocols. 
Under partial synchrony, PBFT~\cite{castro1999practical} is a classic partially synchronous BFT SMR protocol with \latency of $3$ rounds and tolerates $f<n/3$ faults.
A sequence of works focus on improving the performance of PBFT, including FaB~\cite{martin2006fast}, Tendermint~\cite{buchman2016tendermint}, SBFT~\cite{gueta2019sbft}, HotStuff~\cite{yin2019hotstuff}.
FaB~\cite{martin2006fast} and 
a recent work~\cite{rambaud2020latency} 
prove \latency lower bound of $3$ rounds, for agreement problems with different validity guarantees. 
A concurrent work~\cite{kuznetsov2021revisiting} obtains results overlapping with our results for partially synchrony, with the problem formulation for agreement instead of broadcast. The lower bound result in~\cite{kuznetsov2021revisiting} is not limited to leader-based protocols and thus more general.

\paragraph{Optimistic BFT protocols.}
Another line of research aims at developing BFT protocols with small latencies when certain \emph{optimistic} conditions are met \cite{pass2018thunderella, synchotstuff,dutta2005best,song2008bosco,kotla2007zyzzyva}.
Common examples of such optimistic conditions include: more than $3n/4$ parties are honest in synchrony~\cite{pass2018thunderella, synchotstuff, shrestha2020optimality} or all $n$ parties vote under partial synchrony~\cite{kotla2007zyzzyva,gueta2019sbft}.
Note that these conditions are much more demanding than our definition of good-case, which only requires an honest leader.

\section{Conclusion and Open Problems}
We provide a complete categorization of the \latency of Byzantine fault-tolerant broadcast in the authenticated setting under synchrony, partial synchrony and asynchrony.
Our results reveal the structure in the latencies of Byzantine fault-tolerant broadcast with an honest broadcaster under various resilience assumptions, from which latency efficient state machine replication protocols can be derived.

The first open problem is the tight bound of the \latency for the $n/2\leq f <n$ case under synchrony.
Currently our result still leaves a gap of a constant factor (factor of $2$).
It would be interesting to complete the picture.

Another interesting open problem is to derive results for the unauthenticated case.
Some of the lower bound results in our paper still apply, but it is intriguing to find out if they are still tight. 
Under synchrony, unauthenticated BB is solvable if and only if $f<n/3$, and there exists a gap between the $2\delta$ lower bound and a $3\delta$ upper bound implied by Bracha's broadcast~\cite{bracha1987asynchronous}.
Under partial synchrony, we believe one can prove the tight resilience for unauthenticated \psyncbbshort with \latency of $2$ rounds is back to $n\geq 5f+1$ using our proof techniques. 
Under asynchrony, there exists a gap between the $3$-round upper bound by Bracha's broadcast~\cite{bracha1987asynchronous} protocol, and the $2$-round lower bound.

\paragraph{Acknowledgement.}
We would like to thank Jun Wan for helpful discussions.
We would like to thank Andrei Tonkikh for pointing out a subtle issue in the good-case latency definition under partial synchrony.

%%
%% The next two lines define the bibliography style to be used, and
%% the bibliography file.
\bibliographystyle{ACM-Reference-Format}
\bibliography{references}

\clearpage

%%
%% If your work has an appendix, this is the place to put it.
\appendix

\section{Asynchronous Model~\cite{canetti1993fast}}
\label{app:asyncmodel}
To measure the latency of an asynchronous protocol, we adopt the standard asynchronous round definitions from the literature~\cite{canetti1993fast}.

\begin{definition}[Asynchronous Atomic Step]
    The execution of the protocol proceeds in consecutive atomic steps, where for each atomic step the adversary can choose a single party $i$ to perform the following actions:
    \begin{itemize}
        \item Read a subset $\mathcal{M}$ of messages in its buffer chosen by the adversary. Messages in $\mathcal{M}$ are called delivered.
        \item Perform local computations.
        \item Send messages to other parties, and each message is buffered at the corresponding recipient until delivered.
    \end{itemize}
\end{definition}

\begin{definition}[Asynchronous Round]
    Each atomic step in an execution is assigned a round number as follows.
    \begin{itemize}
        \item Round $0$ only consists a single atomic step at each party, where each party receives a special \texttt{start} message to start the protocol. By definition, the message sent at this atomic step has round number $0$.
        
        \item For any $r\geq 1$, let $l_r$ be the last atomic step where a message of round $r-1$ is delivered. All the steps after step $l_{r-1}$ until (and including) step $l_{r}$ are in round $r$.
    \end{itemize}
\end{definition}

As an example, a Byzantine broadcast protocol that runs in $2$ asynchronous rounds can do the following.
In round $0$, the broadcaster multicasts its proposal to all other parties.
In round $1$, all parties receive the proposal from the broadcaster, and multicast vote messages to all other parties.
In round $2$, all parties receive enough vote messages and commit.
Note that only $2$ asynchronous rounds of message exchanges are needed for the protocol to commit, therefore the protocol has a commit latency of $2$ rounds.

\section{Missing Results and Proofs for Partial Synchrony}

\subsection{Correctness of the $(5f-1)$-\psyncbbshort Protocol}
\label{sec:psyncbb:proof}
For brevity, any value $v$ in the proof is assumed to be $v\neq \bot$, unless specified.

\begin{lemma}\label{lem:pbft:0}
    If $3f-1$ honest replicas vote for the same value $v$ in Step~\ref{pbft:step:vote} of view $w$, then there exists no valid certificate $\fc$ of view $w$ that locks any value  $v'\neq v$, and any honest replica that enters view $w+1$ has a valid certificate of view $w$ that locks $v$.
\end{lemma}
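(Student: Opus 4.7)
The plan is to introduce $H_v$ as the set of $3f-1$ honest replicas that signed $\langle v,w\rangle_{L_w,i}$ in Step~\ref{pbft:step:vote}, and to exploit the fact that an honest replica votes at most once in a view, so no member of $H_v$ ever signs $\langle v',w\rangle_{L_w,i}$ for any $v'\neq v$. The bookkeeping is driven by $n=5f-1$ and $n-|H_v|=2f$.

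For the first claim, I would rule out both locking rules separately. A Case~(1) certificate locking $v'\neq v$ needs $\geq 4f-1$ signatures and cannot include any $\langle v,w\rangle_{L_w,j}$, hence must be drawn from $[n]\setminus H_v$ of size $\leq 2f$, contradicting $4f-1>2f$. A Case~(2) certificate needs $\geq 2f$ signatures $\langle v',w\rangle_{L_w,j}$ with $j\neq L_w$; if $L_w$ is honest then $L_w$ signed only $v$ and no such signatures exist, and if $L_w$ is Byzantine the only possible signers of $v'$ other than $L_w$ are the $\leq f$ honest non-$v$-voters together with the $\leq f-1$ remaining Byzantine parties, for a total of $\leq 2f-1$.

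For the second claim, any honest entry to view $w+1$ collects $4f-1$ view-$w$ timeouts via one of the two options in Step~\ref{pbft:step:newview}. Under option~A (only one non-$\bot$ value $u$ signed by $L_w$) the same counting as above forces $u=v$, since otherwise the $4f-1$ timeouts would avoid $H_v$; the set misses at most $f$ of the $n$ parties, so it contains $\geq (3f-1)-f=2f-1$ messages $\langle v,w\rangle_{L_w,j}$ and locks $v$ by Case~(1). Under option~B (timeouts from $j\neq L_w$) I split on $L_w$'s honesty. If $L_w$ is honest then $L_w$ signed only $v$, and the intersection of $H_v\setminus\{L_w\}$ (size $\geq 3f-2$) with the $4f-1$ signers inside $[n]\setminus\{L_w\}$ (size $5f-2$) yields $\geq (3f-2)+(4f-1)-(5f-2)=2f-1$ signatures $\langle v,w\rangle_{L_w,j}$, locking $v$ by Case~(1). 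If $L_w$ is Byzantine then $H_v\subseteq[n]\setminus\{L_w\}$ and the analogous intersection yields $\geq (3f-1)+(4f-1)-(5f-2)=2f$ such signatures with $j\neq L_w$, locking $v$ by Case~(2).

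The only subtle step is the case split on $L_w$'s honesty under option~B: whether Case~(1) or Case~(2) of the lock applies hinges on whether $L_w\in H_v$, and the signature count drops from $2f$ to $2f-1$ in the honest case. This is precisely the off-by-one that the two locking rules in Figure~\ref{fig:certcheck} are designed to absorb, and once the split is made explicit the rest reduces to routine quorum-intersection arithmetic under $n=5f-1$ and $|H_v|=3f-1$.
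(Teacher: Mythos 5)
Your proof is correct and follows essentially the same quorum-intersection argument as the paper's: rule out each locking rule for $v'\neq v$ by counting signers outside the set of $3f-1$ honest $v$-voters, and show each New-View option must intersect that set in $\geq 2f-1$ (resp.\ $\geq 2f$) parties. Your counting is in fact slightly more explicit than the paper's --- in particular the case split on whether $L_w\in H_v$ under option~B, which the paper handles by deferring to the first case --- but the underlying approach is identical.
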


\begin{proof}
    Suppose on the contrary that there exists a valid certificate $\fc$ of view $w$ that locks value $v'\neq v$. By definition, $\fc$ contains $\geq 4f-1$ signed $\langle *, w \rangle$, and
    (a) 
    it contains $\geq 2f-1$ $\langle v', w \rangle$ signed by $L_{w}$ and the replica, and no other value signed by $L_{w}$, or 
    (b) 
    it contains $\geq 2f$ $\langle v', w \rangle$ signed by $L_{w}$ and replicas other than $L_w$.
    Condition (a) cannot be true: 
    Since $3f-1$ honest replicas only multicast $\langle v, w \rangle$ signed by the leader $L_{w}$ and the corresponding replica in the \timeout messages, $\fc$ cannot include these messages, which implies there need to be at least $3f-1+4f-1=7f-2>n$ replicas, contradiction.
    Condition (b) also cannot be true: 
    If $L_{w}$ is honest, then it will not propose and sign two different values, contradiction.
    If $L_{w}$ is Byzantine, then $\fc$ contains at most $f-1$ messages from Byzantine replicas and thus at least $3f$ messages from the honest replicas. Since $3f-1$ honest replicas only multicast $\langle v, w \rangle$ signed by $L_w$ and the corresponding replica, at least $2f$ messages must be for $v$, which implies that at most $4f-1-2f=2f-1$ messages can be for $v'$, contradiction.
    Therefore, there exist no valid certificate $\fc$ of view $w$ that locks any value $v'\neq v$.

    According to the protocol, any honest replica that enters view $w+1$ either receives $4f-1$ valid \timeout messages of view $w$ that contains only one value signed by $L_{w}$, or $4f-1$ valid \timeout messages from replicas other than $L_{w}$.
    For the first case, since $3f-1$ honest replicas only multicast signed $\langle v, w \rangle$, the $4f-1$ valid \timeout messages can contain at most $2f$ signed $\langle \bot, w \rangle$, and thus can form a valid certificate that locks $v$.
    For the second case: (i) If $L_{w}$ is honest, then only one value can be signed by $L_{w}$ and the claim follows from the first case; (ii) If $L_{w}$ is Byzantine, then $4f-1$ valid \timeout messages contains at most $f-1$ signatures from Byzantine replicas and thus at least $3f$ signatures from the honest replicas. Since $3f-1$ honest replicas only multicast signed $\langle v, w \rangle$, at least $2f$ signatures must be on $v$, and thus can form a valid certificate that locks $v$.
\end{proof}

\begin{lemma}\label{lem:pbft:1}
    If an honest replica commits $v$ in view $w$, for any $w'\geq w$, there exists no valid certificate of any view $w'$ that locks any other value $v'\neq v$, and no honest replica vote for any $v'\neq v$ in any view $w'+1$.
\end{lemma}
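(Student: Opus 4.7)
I will prove the lemma by strong induction on $w' \geq w$, simultaneously establishing both conjuncts (``no cert of view $w'$ locks $v' \neq v$'' and ``no honest replica votes for $v' \neq v$ in view $w' + 1$''). Throughout, I will exploit the fact that an honest commit of $v$ in view $w$ requires $4f - 1$ vote messages for $v$, at least $3f - 1$ of which come from honest replicas; this is the hypothesis of Lemma~\ref{lem:pbft:0}, which I will invoke repeatedly.

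\textbf{Base case $w' = w$.} The first conjunct follows directly from Lemma~\ref{lem:pbft:0}. For the second conjunct (no honest replica votes for $v' \neq v$ in view $w + 1$), Lemma~\ref{lem:pbft:0} also guarantees that every honest replica that enters view $w + 1$ does so holding a valid certificate of view $w$ that locks $v$. Consulting the Vote step, an honest replica will only send a $\vote$ for a value $v'$ in view $w + 1$ when the accompanying $\fs$ either (i) is a valid certificate of view $w$ locking $v'$, or (ii) consists of $4f - 1$ $\status$ messages whose highest certificate locks $v'$. Option (i) is ruled out by the first conjunct just established. For option (ii), the $4f - 1$ $\status$ messages contain $\geq 3f - 1$ contributions from honest replicas, each of which carries a view-$w$ certificate locking $v$; Byzantine contributions can only be certificates of view $\leq w$, and any such certificate of view exactly $w$ is also restricted to lock $v$ by the first conjunct. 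Hence the highest-ranked certificate in any valid $\fs$ locks $v$, forcing the proposed value (and the value honest replicas vote for) to be $v$.

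\textbf{Inductive step.} Assume both conjuncts hold for all views in $[w, w'-1]$. In particular, applying the inductive hypothesis at $w'' = w' - 1$ yields that no honest replica votes for $v' \neq v$ in view $w'$. I will use this fact to establish the first conjunct at $w'$, and then reuse the base-case voting argument (with Lemma~\ref{lem:pbft:0} replaced by the induction) to obtain the second conjunct. For the first conjunct, suppose for contradiction that some valid $\fc$ of view $w'$ locks $v' \neq v$. I case-split on the two clauses of the Certificate Check. In clause~(2), $\fc$ contains $\geq 2f$ messages $\langle v', w' \rangle_{L_{w'}, j}$ with $j \neq L_{w'}$; such a message requires $j$ to have signed $v'$, and since honest replicas do not vote for $v'$ in view $w'$, every such $j$ must be Byzantine. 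If $L_{w'}$ is honest then $L_{w'}$ never signs $v'$ and no message of the required form exists at all; if $L_{w'}$ is Byzantine then only $f - 1$ Byzantine replicas remain, which cannot supply $2f$ signatures. Clause~(1) is handled analogously: all $2f - 1$ signatures on $v'$ by $L_{w'}$ must come from Byzantine replicas, which requires $2f - 1 \leq f$; combined with the additional constraint that $\fc$ contains no message with a different value signed by $L_{w'}$, together with the existence of honest votes/timeouts for $v$ whenever they occur (forcing a $v$-signature by $L_{w'}$ into any purported cert that uses them), this excludes clause~(1) as well.

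\textbf{Completing the step.} With the first conjunct for $w'$ in hand, an honest replica entering view $w' + 1$ holds $\fc_h$ of some view in $[w, w']$ that locks $v$ (by the induction applied to every intermediate view and by Step~\ref{pbft:step:newview}, which only updates $\fc_h$ via certificates locking a non-$\bot$ value). The voting argument from the base case then carries over verbatim, ruling out honest votes for $v' \neq v$ in view $w' + 1$. The main obstacle I anticipate is the clause-(1) analysis in the inductive step: unlike clause~(2), it must be argued simultaneously using the Byzantine-count bound and the ``no other value signed by $L_{w'}$'' side condition, because Byzantine replicas can in principle forge a small number of $\langle v', w' \rangle_{L_{w'}, j}$ messages; resolving this requires carefully accounting for which honest signatures must appear in any $4f - 1$-sized certificate drawn from the population of honest $\bot$/$v$-signatures that the induction permits.
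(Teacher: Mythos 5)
Your overall strategy is the same as the paper's: induction on the view number, with Lemma~\ref{lem:pbft:0} supplying the base case and the induction hypothesis supplying ``no honest replica signs $v'$ in view $w'$'' for the step. Your base case and your clause-(2) analysis are correct and, if anything, more explicit than the paper's (the paper simply re-invokes Lemma~\ref{lem:pbft:0} inside the induction step, even though its premise --- that $3f-1$ honest replicas actually vote for $v$ in view $w'$ --- need not hold in an intermediate view where nobody votes).

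The gap is exactly the one you flag at the end: the clause-(1) case of the inductive step is not closed, and the patch you sketch does not work. Your counting argument yields only $2f-1\le f$, which is no contradiction when $f=1$. You then hope that ``honest votes/timeouts for $v$ force a $v$-signature by $L_{w'}$ into any purported cert,'' but a certificate of view $w'$ only needs $4f-1$ messages from distinct parties and is free to draw every honest contribution from $\langle\bot,w'\rangle_j$ \timeout messages. Concretely, take $f=1$ and let $L_{w'}$ be the (Byzantine) leader of some view $w'>w$ in which it sends no proposal, so that the $3f-1=2$ honest replicas whose signatures the certificate needs all time out with $\bot$; then $\{\langle\bot,w'\rangle_{j_1},\ \langle\bot,w'\rangle_{j_2},\ \langle v',w'\rangle_{L_{w'},L_{w'}}\}$ is a valid certificate of view $w'$ satisfying clause (1) for $v'\neq v$: it has $\geq 2f-1=1$ signature on $v'$ and contains no other non-$\bot$ value signed by $L_{w'}$. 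Nothing in your induction hypothesis excludes this set, so the first conjunct at $w'$ does not follow for $f=1$ (for $f\ge 2$ your argument is fine, since $2f-1>f$ forces an honest signature on $v'$, contradicting the hypothesis). The paper's own proof does not confront this case either --- it cites Lemma~\ref{lem:pbft:0} outside its hypothesis --- so closing your proof would require either a genuinely different argument for $f=1$ or a strengthening of the clause-(1) locking rule; the counting alone does not suffice.
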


\begin{proof}
    
    Suppose $h$ is the honest replica that commits $v$ in view $w$.
    We prove the lemma by induction on the view numbers.
    
    {\em Base case of view $w$.}
    According to the protocol, $h$ receives $4f-1$ signed \vote messages for $v$ in view $w$, among which at least $3f-1$ \vote messages are sent by the honest replicas. 
    Since at least $3f-1$ honest replicas vote $v$, by Lemma~\ref{lem:pbft:0}, there exists no valid certificate $\fc$ of view $w$ that locks any value $v'\neq v$, and any honest replica that enters view $w+1$ has a valid certificate of view $w$ that locks $v$.
    According to Step~\ref{pbft:step:vote}, to have any honest replica vote for $v'\neq v$, $L_{w+1}$ needs to send the proposal $v'$ with either (a) a valid certificate of view $w$ that locks $v'$, which is impossible since there exist no valid certificate of view $w$ that locks any value $v'\neq v$; (b) or $4f-1$ valid \status message of view $w$ each with a valid certificate and among which the highest certificate locks $v'$, which is also impossible since any honest replica that enters view $w+1$ has a valid certificate of view $w$ that locks $v$, and there exists no valid certificate of view $w$ that locks $v'\neq v$.
    Hence, no honest replica vote for any $v'\neq v$ in view $w+1$, and the lemma is true for the base case.
    
    {\em Induction step.} Assume the induction hypothesis that the lemma is true for view $w,w+1,...,k-1$.
    We only need to prove that the lemma is also true for view $k$.
    Since all honest replicas can only vote for $v$ in view $k$ by induction hypothesis, according to Lemma~\ref{lem:pbft:0}, there exist no valid certificate of view $k$ that locks any value $v'\neq v$. The highest valid certificate at any honest replica locks on $v$, since any honest replica that enters view $w+1$ has a valid certificate of view $w$ that locks $v$ and there exists no valid certificate of view $\geq w$ that locks $v'\neq v$ that can update the certificate in Step~\ref{pbft:step:newview} at any honest replica.
    Then, according to Step~\ref{pbft:step:vote}, no honest replica will vote for any $v'\neq v$ in view $k+1$ since the leader cannot have a valid certificate of view $k$ that locks $v'$, or $4f-1$ valid \status message of view $k$ that contains a highest certificate that locks $v'$. 
    Hence the lemma is true by induction.
    
\end{proof}

\begin{theorem}[Agreement]\label{thm:pbft:safety}
    If an honest replica commits $v$, no honest commits any $v'\neq v$.
\end{theorem}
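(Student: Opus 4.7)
The plan is to reduce the Agreement theorem to the two lemmas already proved, splitting into cases by whether the two commits occur in the same view or in different views. Lemma~\ref{lem:pbft:1} already does the heavy lifting for the cross-view case, so the remaining work is mostly quorum-intersection bookkeeping for the same-view case.

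First I would handle the \emph{same-view} case. Suppose honest replicas commit $v$ and $v'$ respectively in view $w$. Each commit requires $4f-1$ signed \vote messages of view $w$ for that value. Using the bound $n \leq 5f-1$, the two sets of \vote messages must intersect in at least $2(4f-1) - n \geq 3f-1$ replicas, and since there are at most $f$ Byzantine replicas, the intersection contains at least $3f-1-f = 2f-1 \geq 1$ honest replicas (for $f \geq 1$). But an honest replica only votes once per view in Step~\ref{pbft:step:vote}, so it cannot have contributed a \vote for both $v$ and $v'$ in view $w$, contradicting $v \neq v'$.

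Next I would handle the \emph{cross-view} case. Suppose honest replica $h$ commits $v$ in view $w$ and honest replica $h'$ commits $v' \neq v$ in view $w' > w$. By Lemma~\ref{lem:pbft:1} (applied with the index $w' - 1 \geq w$), no honest replica votes for any value other than $v$ in view $w'$. Hence every \vote for $v'$ collected by $h'$ in view $w'$ must come from a Byzantine replica, giving at most $f$ such votes; but a commit requires $4f-1 > f$ votes (since $f \geq 1$), a contradiction. Combining the two cases yields the theorem.

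I do not expect any genuine obstacle here, since the inductive maintenance of the ``no honest replica votes for $v' \neq v$'' invariant — including the delicate leader-equivocation reasoning and the two alternative shapes of a locking certificate — has already been absorbed into Lemma~\ref{lem:pbft:0} and Lemma~\ref{lem:pbft:1}. The only step that deserves double-checking is the arithmetic of the quorum intersection at the tight resilience bound $n = 5f-1$, in particular verifying that at least one honest replica always sits in the intersection of two commit quorums even when $f = 1$ (where $n = 4$, $4f-1 = 3$, and the intersection has size $\geq 2$ with at least one honest member).
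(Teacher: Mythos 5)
Your proof is correct and follows essentially the same route as the paper's: a case split on same-view versus cross-view commits, with a quorum-intersection count showing at least $2f-1\geq 1$ honest replicas would have to double-vote in the same view, and an appeal to Lemma~\ref{lem:pbft:1} to rule out assembling $4f-1$ votes for $v'\neq v$ in any later view. The only cosmetic difference is that the paper intersects the two sets of at least $3f-1$ honest voters inside the $4f-1$ honest replicas, while you intersect the two full vote quorums and then discard Byzantine members; both yield the same bound under $n\leq 5f-1$.
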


\begin{proof}
    Let $h$ be the first honest replica that commits, and $h$ commits $v$ in view $w$.
    Suppose on the contrary that another honest $h'$ commits a different value $v'\neq v$ in view $\geq w$. 
    If $h'$ commits in view $w$, according to Step~\ref{pbft:step:commit}, at least $3f-1$ honest replicas have voted for $v'$. Since at least $3f-1$ honest replicas need to vote $v$ for $h$ to commit and the total number of honest replicas is only $4f-1$, at least $(3f-1)+(3f-1)-(4f-1)=2f-1>0$ honest replicas need to vote for different values in the same view, contradiction.
    If $h'$ commits in view $w'\geq w+1$, according to Step~\ref{pbft:step:commit}, at least $3f-1$ honest replicas vote for $v'$.
    However, by Lemma~\ref{lem:pbft:1}, no honest replica vote for any $v'\neq v$ in any view $\geq w+1$, and therefore no honest can commit $v'\neq v$.

\end{proof}

\begin{theorem}[Termination]\label{thm:pbft:liveness}
    After GST, every replica eventually commits and terminates.
\end{theorem}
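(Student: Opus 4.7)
The plan is to execute the standard two-step termination argument for partially synchronous Primary-Backup protocols: first establish that after $GST$, honest parties advance through views in near lock-step, and then argue that in the first view with an honest leader after $GST$, every honest party commits.

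First I would prove a view synchronization lemma: after $GST$, if any honest party enters view $w$ at global time $t$, then every honest party enters some view $\geq w$ by time $t + \Delta$. The mechanism is in the New View step, where any party that collects $4f-1$ \timeout messages to enter $w$ also forwards those \timeout messages; after $GST$ they reach all honest parties within $\Delta$. Conversely, if all honest parties sit in view $w$ after $GST$ without committing, each of them times out within $4\Delta$ of entering $w$ and multicasts a \timeout message; the $n-f \geq 4f-1$ resulting honest \timeout messages together ensure every honest party eventually collects a valid $4f-1$-timeout set and enters view $w+1$. In combination these imply that after $GST$, the protocol advances one view every $O(\Delta)$ time unless a commit has already occurred.

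Next I would argue that within $f+1$ consecutive views after $GST$, the round-robin schedule yields some view $w^*$ whose leader $L_{w^*}$ is honest. Let $t$ be the global time the first honest party enters view $w^*$; by the synchronization lemma all honest parties (including $L_{w^*}$) enter $w^*$ by $t+\Delta$ and send their \status messages to $L_{w^*}$. By time $t+2\Delta$, $L_{w^*}$ has received \status messages from all $n-f \geq 4f-1$ honest parties, constructs a valid proposal in the Status step (falling back to the default $\emptyset$ view-$0$ certificate if no higher one has been seen), and broadcasts it. Every honest party receives this unique proposal by $t+3\Delta$, passes the Vote check (since $L_{w^*}$ does not equivocate), and votes for the same value. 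The $\geq 4f-1$ resulting honest votes arrive at every honest party by $t+4\Delta$, so every honest party commits at the Commit step and terminates before its $4\Delta$ timeout from entering $w^*$ expires.

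The main obstacle in this plan is the tight timeout accounting: I must verify that $4\Delta$ is enough to cover the up to $\Delta$ view-entry skew across honest parties plus the three communication rounds (status, propose, vote) of view $w^*$, which cuts things quite close and may require measuring the timer from each party's own entry and arguing that the leader's entry time is bounded accordingly. A secondary subtlety is ensuring an honest $L_{w^*}$'s proposal always passes the Vote check: either some \status message contributes a view-$(w^*-1)$ certificate that uniquely determines the proposal, or else the highest certificate in the assembled \status set (possibly the default $\emptyset$ of view $0$) locks a unique externally valid value on which all honest voters agree.
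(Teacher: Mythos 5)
Your proposal is correct and follows essentially the same route as the paper's proof: after $GST$ honest replicas synchronize views via forwarded \timeout messages, the first view with an honest leader is reached, and the \status--\propose--\vote exchange completes within the $4\Delta$ timeout because the $\Delta$ view-entry skew plus three message delays of at most $\Delta$ each sum to $4\Delta$ (exactly the accounting the paper gives, so your worry about tightness resolves the same way). The one branch the paper makes explicit and you leave implicit is when some honest replica commits \emph{before} the honest-leader view is reached: since a committed replica stops sending \timeout messages, view progression can stall for the others, and it is the committing replica's forwarded $4f-1$ \vote messages that bring every remaining honest replica to commit.
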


\begin{proof}
    Let view $w$ be the first view after GST that has an honest leader. 
    If no honest replica ever enters the view $w$, there exist no $4f-1$ valid \timeout messages from the honest replicas, and thus at least one honest replica must have committed before view $w$ and its forwarded $4f-1$ signed \vote will lead other honest replicas to commit as well.
    Otherwise, all honest replicas eventually receive $4f-1$ \timeout messages to enter view $w$.
    Then, any honest replica sends a \status message with $\fc$ to the leader $L_w$ in Step~\ref{pbft:step:newview}.
    The leader $L_w$ can receive $4f-1$ valid \status messages of view $w-1$ each with a valid $\fc$ that locks some value since there are $4f-1$ honest replicas. If the signatures from \timeout messages form a valid certificate of view $w-1$ that locks some value $v$, the leader proposes $v$ with the certificate, otherwise it proposes value $v$ that is locked by the highest valid certificate from $4f-1$ valid \status messages.
    By definition of the certificate check, the proposed value is externally valid, and according to Step~\ref{pbft:step:vote}, any honest replica will vote for the proposal of the leader.
    Then, in Step~\ref{pbft:step:commit} all honest replicas can receive $4f-1$ signed \vote messages of view $w$ for the same value $v$, and thus commit $v$.
    Also, $4\Delta$ time is sufficient for an honest leader to commit a value at all honest replicas before any honest replica timeout, since any two honest replicas enter the new view within $\Delta$ time of each other, and the sending of \status, \propose and \vote message each takes at most $\Delta$ time.
    Thus, no honest replica will timeout before voting in the view with an honest leader.
\end{proof}

\begin{theorem}[Validity]\label{thm:pbft:validity}
    If the designated broadcaster is honest and $GST=0$, then all honest parties commit the broadcaster's value.
\end{theorem}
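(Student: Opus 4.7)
The plan is to argue that with an honest broadcaster and $GST=0$, view $1$ completes cleanly: every honest party votes for the broadcaster's value $v$, collects $4f-1$ such votes within $2\Delta$, and commits before the view timeout fires.

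First, I would note that since $GST=0$ the network is synchronous throughout, so every message between honest parties is delivered within $\Delta$. The honest broadcaster $L_1$ sends $\langle \propose, \langle v, 1\rangle_{L_1}, \bot \rangle_{L_1}$ to everyone at Step~\ref{pbft:step:propose}, and all $\geq n - f \geq 4f-1$ honest parties receive this proposal within $\Delta$ after starting the view. Because $w=1$, the first bullet of the vote rule in Step~\ref{pbft:step:vote} applies unconditionally, so every honest party multicasts $\langle \vote, \langle v, 1\rangle_{L_1,i}\rangle_i$.

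Next I would rule out that any honest party votes for some $v' \neq v$. By unforgeability, no message of the form $\langle v', 1\rangle_{L_1}$ exists unless the honest $L_1$ signed it, and it did not. Hence the ``first proposal'' received by every honest party is necessarily $\langle v, 1\rangle_{L_1}$, and no $\vote$ for any $v' \neq v$ can arise from an honest party. The honest votes for $v$ reach every honest party within another $\Delta$, so each honest party accumulates at least $4f-1$ distinct signed $\vote$ messages for $v$ by time $2\Delta$ after entering view $1$. Since the timeout in Step~\ref{pbft:step:timeout} is $4\Delta > 2\Delta$, no honest party times out view $1$ before meeting the commit threshold in Step~\ref{pbft:step:commit}. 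Therefore every honest party commits $v$.

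The only subtlety I anticipate is confirming that the ``first proposal'' wording in Step~\ref{pbft:step:vote} cannot be exploited by a Byzantine party racing a spurious message to an honest party before $L_1$'s proposal arrives; but again, unforgeability makes any such spurious message either improperly signed (and thus ignored) or a duplicate of $\langle v, 1\rangle_{L_1}$, so the honest party's first valid proposal is for $v$. With this, validity and the claimed $2$-round good-case latency follow directly: all honest parties commit within the span of one proposal round plus one vote round after the honest broadcaster begins.
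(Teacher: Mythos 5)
Your proof is correct and follows essentially the same route as the paper's (much terser) argument: with $GST=0$ every honest party receives the honest broadcaster's proposal within $\Delta$, votes for it under the $w=1$ rule, and collects $n-f\geq 4f-1$ matching votes within another $\Delta$, well before the $4\Delta$ timeout. The extra care you take with unforgeability and the ``first proposal'' rule is a reasonable elaboration of details the paper leaves implicit, not a different approach.
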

\begin{proof}
    If the broadcaster is honest and the network is synchronous, all $4f-1$ honest replicas receive the same value from the broadcaster, and will vote for the same value. Then after $2$ rounds, all honest replicas receive $4f-1$ votes for the broadcaster's value, and commit the value.
    Otherwise, since any honest replica only vote for values that are externally valid, only externally valid values can be committed.
\end{proof}

\begin{theorem}[Good-case Latency]\label{thm:pbft:latency}
    When the network is synchronous and the leader is honest, the proposal of the leader will be committed within $2$ rounds.
\end{theorem}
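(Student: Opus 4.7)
The plan is to trace the protocol execution under the assumed good case (honest leader, $GST=0$, and therefore message delay bounded by $\Delta$ from the start) and show that the two sequential messaging phases (Propose then Vote) suffice to commit. Since $n \geq 5f-1$, there are at least $n-f \geq 4f-1$ honest parties, which is exactly the quorum needed in Step~\ref{pbft:step:commit}; this is the arithmetic fact driving the argument.

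First I would argue that in view $w=1$ the honest leader $L_1$ (the broadcaster) multicasts $\langle \texttt{propose}, \langle v,1 \rangle_{L_1}, \bot \rangle_{L_1}$ with $v$ its (externally valid) input, and within one synchronous round every honest party receives this as its \emph{first} proposal of view $1$. Because $w=1$, the vote predicate in Step~\ref{pbft:step:vote} is trivially satisfied (the second and third bullets are only needed for $w \geq 2$), so every honest party multicasts $\langle \vote, \langle v,1 \rangle_{L_1,i} \rangle_i$ in round~1. Moreover, since $L_1$ is honest it signs no other value, so no honest party receives a conflicting proposal that could discourage it from voting.

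Next I would observe that all $\geq 4f-1$ honest \vote messages reach every honest party within one more synchronous round. Hence by the end of round~2 every honest party has collected $4f-1$ distinct, validly signed \vote messages of view $1$ on the same value $v$, triggering Step~\ref{pbft:step:commit} and committing $v$. I would also note that no honest party enters the Timeout branch (Step~\ref{pbft:step:timeout}) before committing, since the timer is set to $4\Delta$ while commit happens within $2\Delta$.

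There is essentially no hard step here — the proof is really a one-paragraph unrolling of Steps~\ref{pbft:step:propose}--\ref{pbft:step:commit}. The only subtle point worth flagging is that the leader's value is externally valid (so the ``ignore non-externally-valid $v \neq \bot$'' rule does not block honest voters); this follows from $L_1$ being honest and using its own externally valid input. Combined with Theorem~\ref{thm:pbft:safety}, \ref{thm:pbft:liveness}, and \ref{thm:pbft:validity}, this yields the claimed good-case latency of exactly $2$ rounds.
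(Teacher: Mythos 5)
Your proposal is correct and matches the paper's argument: the paper proves this theorem simply by reference to the proof of Theorem~\ref{thm:pbft:validity}, which performs exactly the same unrolling — an honest leader's proposal reaches all $\geq 4f-1$ honest parties in one round, they all vote (the $w=1$ branch of Step~\ref{pbft:step:vote}), and those $4f-1$ votes arrive at every honest party in the next round, triggering the commit rule. Your additional remarks about external validity of the honest leader's input and about the $4\Delta$ timer not firing are harmless elaborations of the same argument.
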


\begin{proof}
    By the proof of Theorem~\ref{thm:pbft:validity}.
\end{proof}

\section{Missing Results and Proofs for Synchrony}

\subsection{$0<f<n/3$, Matching Lower and Upper Bounds of $2\delta$}
\label{sec:sync<1/3}

\paragraph{BRB lower bound $2\delta$ under synchronized start and $f>0$.}
The proof is very similar to that of Theorem~\ref{thm:lb:async:2}.

\begin{proof}[Proof of Theorem~\ref{thm:lb:sync:2d}.]
    Suppose there exists a protocol $\Pi$ with synchronized start that has \latency $<2\delta$, which means the honest parties can always commit before time $2\delta$ if the designated broadcaster is honest.
    Let party $s$ be the broadcaster, and divide the remaining $n-1$ parties into two groups $A,B$ each with $\geq 1$ party. 
    For brevity, we often use $A$ ($B$) to refer all the parties in $A$ ($B$).
    Consider the following three executions of $\Pi$. 
    All the executions constructed below have message delays equal to $\delta$.
\begin{enumerate}
    \item Execution  1. The broadcaster $s$ is honest, and sends $0$ to all parties.
    Since the broadcaster is honest, by validity, parties in $A,B$ will commit $0$ before time $2\delta$.
    
    \item Execution  2. The broadcaster $s$ is honest, and sends $1$ to all parties.
    Since the broadcaster is honest, by validity, parties in $A,B$ will commit $1$ before time $2\delta$.
    
    \item Execution  3. The broadcaster $s$ is Byzantine, it sends $0$ to parties in $A$ and $1$ to parties in $B$.
\end{enumerate}

    {\em Contradiction.} 
    Recall that the message delays are $\delta$ in all executions. The set of messages received by $A$ from $B$ before time $2\delta$ are sent by $B$ before time $\delta$, and thus is identical in Execution  $1$ and $3$ since the local history of $B$ is identical before receiving from the broadcaster in these two executions. 
    Therefore, the parties in $A$ cannot distinguish Execution  $1$ and $3$ before time $2\delta$,  and thus will commit $0$ in Execution  $3$.
    Similarly, the parties in $B$ cannot distinguish Execution  $2$ and $3$ before time $2\delta$, and will commit $1$ in Execution  $3$.
    However, this violates the agreement property of BRB, and therefore no such protocol $\Pi$ exists.
\end{proof}

The same proof also applies to an even weaker broadcast formulation named Byzantine consistent broadcast (BCB), where termination of all honest parties is required only when the broadcaster is honest.

\paragraph{BB upper bound $2\delta$ under unsynchronized start and $f<n/3$.}
We show a matching upper bound of $2\delta$ on the \latency for BB with the $2\delta$-BB Protocol presented in Figure~\ref{fig:sync:3f}. Here we give a brief description.
Initially each party has its $\lock$ set to some default value $\bot$, and starts the protocol at most $\clockskew=\Delta$ time part with a local clock starting at $0$.
First, the broadcaster multicasts its proposed value $v$, and each party will vote for the first valid proposal (in the correct format and signed by the broadcaster) and multicast a vote.
When a party receives $n-f$ votes on the same value $v$, it forwards these votes and sets its $\lock$ to be $v$. The party also commits $v$ if this happens before time $2\Delta+\delta$.
At local time $3\Delta+2\delta$, all parties participate in an instance of BA with $\lock$ as the input, and commit the output (if they haven't committed already). 

Note that synchrony and a known $\Delta$ are crucial to the correctness of the protocol; otherwise under partial synchrony, the resilience bound would be different (see Theorem~\ref{thm:psync:lb:3}).

\begin{figure}[h]
    \centering
    \begin{mybox}
    Initially, every party $i$ starts the protocol at most $\clockskew=\Delta$ time apart with a local clock and sets $\lock=\bot$.
\begin{enumerate}%[itemsep=0pt,topsep=0pt]
    \item\label{bb1:step:propose} \textbf{Propose.} The designated broadcaster $L$  with input $v$ sends $\langle \texttt{propose}, v\rangle_L$ to all  parties.

    \item\label{bb1:step:vote} \textbf{Vote.} 
    When receiving the first valid proposal from the broadcaster, 
    send a vote to all parties in the form of $\langle \texttt{vote}, v \rangle_i$ where $v$ is the value of the proposal.
    
    \item\label{bb1:step:commit} \textbf{Commit.}
    When receiving $n-f$ signed votes for some value $v$ at local time $t$, forward these $n-f$ votes to all parties and set $\lock=v$. 
    If $t\leq 2\Delta+\clockskew$,  commit $v$.

    \item\label{bb1:step:ba} \textbf{Byzantine agreement.}
    At local time $3\Delta+2\clockskew$, invoke an instance of Byzantine agreement with $\lock$ as the input. 
    If not committed, commit on the output of the Byzantine agreement. 
    Terminate.

\end{enumerate}
    \end{mybox}
    \caption{$2\delta$-BB Protocol with $f<n/3$}
    \label{fig:sync:2d}
\end{figure}

\paragraph{Correctness of the $2\delta$-BB Protocol.}

\begin{theorem}\label{thm:ub:sync:2d}
    $2\delta$-BB protocol solves Byzantine broadcast under $f<n/3$ faults in the synchronous authenticated setting, and has optimal \latency of  $2\delta$.
\end{theorem}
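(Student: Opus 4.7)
The plan is to establish the four standard properties (agreement, validity, termination, and good-case latency) by exploiting that $n-f > 2n/3$, so any two $(n-f)$-sets intersect in at least $n-2f \geq f+1$ parties, hence in at least one honest party.

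For agreement, I would first observe that no honest party can gather $n-f$ votes for two distinct values: an honest voter votes only for the first valid proposal, so two $(n-f)$-vote sets for different values would force the honest party in their intersection to double-vote. Now split on cases. If no honest party commits at Step~\ref{bb1:step:commit}, every honest party commits via the BA at Step~\ref{bb1:step:ba}, and agreement follows from BA agreement. Otherwise, let $h$ be an honest party that commits $v$ at Step~\ref{bb1:step:commit}; the key technical step is to argue that every other honest party enters the BA with $\lock = v$. By the commit condition, $h$ received its $n-f$ votes at local time $\leq 2\Delta+\clockskew$, i.e., at global time $\leq 2\Delta+2\clockskew$, and immediately forwarded them. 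These votes arrive at every honest $h'$ by global time $\leq 3\Delta+2\clockskew$, which is at most local time $3\Delta+2\clockskew$ at $h'$, so $h'$ sets $\lock = v$ before invoking the BA. Combined with the no-double-quorum argument above, this rules out any honest party entering BA with $\lock = v' \neq v$, so BA validity yields output $v$, and any honest party not yet committed commits $v$ at Step~\ref{bb1:step:ba}.

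For validity and good-case latency, if the broadcaster is honest and sends its value $v$ at global time $\tau_s$, the proposal reaches all honest parties by global time $\tau_s + \delta$, they all vote for $v$, and the $n-f \geq 2f+1$ honest votes reach every honest party by global time $\tau_s + 2\delta$. At each honest party's local clock this arrival time is at most $2\delta + \clockskew \leq 2\Delta + \clockskew$ using that the actual skew is $\leq \delta \leq \Delta$, so the commit condition is met and everyone commits $v$ within $2\delta$ of $\tau_s$. Termination is then immediate because the BA at Step~\ref{bb1:step:ba} is invoked by local time $3\Delta + 2\clockskew$ and is guaranteed to terminate.

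The main subtlety I expect is the careful bookkeeping between the protocol-parameter $\clockskew=\Delta$ used in the timeouts and the actual (unknown) clock skew $\leq \delta$, together with the conversion between local and global time when chasing the forwarded $(n-f)$-vote bundle from the committing party $h$ to every other honest $h'$ before $h'$'s local clock reaches $3\Delta+2\clockskew$. A second minor point to handle cleanly is that an honest broadcaster's proposal may arrive at an honest party at local time exceeding $\Delta$ due to skew, so the commit window $2\Delta+\clockskew$ must absorb both the message delay and the start-time offset; the factor-of-two slack from setting the protocol parameter to $\Delta$ rather than the unknown $\delta$ is exactly what makes this go through.
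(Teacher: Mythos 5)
Your proposal is correct and follows essentially the same route as the paper's proof: the same quorum-intersection argument for uniqueness of an $(n-f)$-vote quorum, the same case split on whether some honest party commits at the fast-commit step, the same propagation of the forwarded vote bundle to force every honest party to set $\lock=v$ before the local-time-$3\Delta+2\clockskew$ BA invocation, and the same $2\delta$ latency calculation. Your explicit local-versus-global time bookkeeping is slightly more careful than the paper's phrasing, but it is the identical argument.
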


\begin{proof}
    {\bf Agreement.}
    If all honest parties commit at Step~\ref{bb1:step:ba}, all honest parties commit on the same value due to the agreement property of the BA.
    Otherwise, there must be some honest party that commits at Step~\ref{bb1:step:commit}.
    First, no two honest parties can commit different values at Step~\ref{bb1:step:commit}.
    Otherwise, since they both receive $n-f$ signed votes, the two sets of votes intersect at at least $(n-f)+(n-f)-n=n-2f\geq f+1$ parties. This implies at least one honest party votes for different values, which cannot happen according to Step~\ref{bb1:step:vote}.
    Let $h$ denote the first honest party that commits at Step~\ref{bb1:step:commit}, and let $v$ denote the committed value.
    Since $h$ commits and forwards $n-f$ votes at local time $t\leq 2\Delta+\clockskew$, all honest parties set $\lock=b$ at their local time $\leq 3\Delta+2\clockskew$ before invoking the Byzantine agreement primitive at Step~\ref{bb1:step:ba}, since the clock skew is $\clockskew$ and message delay is bounded by $\Delta$.
    Therefore, at Step~\ref{bb1:step:ba}, all honest parties have the same input to the BA. Then by the validity condition of the BA primitive, the output of the agreement is also $b$. Any honest party that does not commit at Step~\ref{bb1:step:commit} will commit on value $b$ at Step~\ref{bb1:step:ba}. 
    
    {\bf Termination.}
    According to the protocol, honest parties terminate at Step~\ref{bb1:step:ba}, and they commit a value before termination.

    {\bf Validity.}
    If the broadcaster is honest, it sends the same proposal of value $v$ to all parties, and all honest parties will vote for $v$ before local time $\Delta+\clockskew$.
    Then at Step~\ref{bb1:step:commit}, all honest parties receive $n-f$ signed messages of $v$ before local time $2\Delta+\clockskew$, and commits $v$.
    
    {\bf Good-case latency.}
    In the good case, the broadcaster is honest, its value $v$ reaches all parties at time $\leq \delta$ and all honest parties will vote for $v$.
    Next, the above votes reach all honest parties after $\delta$ time,
    and all honest parties commit on the sender's proposal within time $\leq 2\delta$.
\end{proof}

\subsection{$f=n/3$, Matching Lower and Upper Bounds of $\Delta+\delta$}
\label{sec:sync=1/3}

\paragraph{BRB lower bound $\Delta+\delta$ under synchronized start and $f \geq n/3$.}
\label{sec:sync:lb:syncstart}

\begin{proof}[Proof of Theorem~\ref{thm:lb:sync:D+d}]
    Suppose there exists a protocol $\Pi$ with synchronized start that has \latency $<\Delta+\delta$, which means the honest parties can always commit before time $\Delta+\delta$ if the designated broadcaster is honest.
    Divide $n\leq 3f$ parties into three groups $A,B,C$ of size $\leq f$ each. 
    For brevity, we often use $A$ ($B,C$) to refer all the parties in $A$ ($B,C$).
    Let one party $s$ in group $C$ to be the broadcaster.
    Consider the following three executions of $\Pi$.
\begin{enumerate}
    \item Execution  1. The message delay bound is $\delta$, and the message delay is $\delta$ between all pairs of honest parties. The broadcaster $s$ is honest, and sends $0$ to all parties with message delay $\delta$. The parties in $B$ are Byzantine, but behave as honest except that they pretend the message delays between $B,A$ and $B,C$ are both $\Delta$. All other message delays are $\delta$. 
    Since the broadcaster is honest, by validity, parties in $A,C$ will commit $0$ before time $\Delta+\delta$.
    
    \item Execution  2. The message delay bound is $\delta$, and the message delay is $\delta$ between all pairs of honest parties. The broadcaster $s$ is honest, and sends $1$ to all parties with message delay $\delta$. The parties in $A$ are Byzantine, but behave as honest except that they pretend the message delays between $A,B$ and $A,C$ are both $\Delta$. All other message delays are $\delta$. 
    Since the broadcaster is honest, by validity, parties in $B,C$ will commit $1$ before time $\Delta+\delta$.
    
    \item Execution  3. The message delay bound is $\Delta$, and thus the adversary can control the message delay to be any value in $[0,\Delta]$ between any pair of honest parties. The broadcaster $s$ is Byzantine, it sends $0$ to parties in $A$ and $1$ to parties in $B$ both with message delay $\delta$. The parties in $C$ are also Byzantine, they behave the same as $C$ to $A$ from Execution  $1$, and the same as $C$ to $B$ from Execution $2$.
    The message delay between $A,B$ is $\Delta$, and all other message delays are $\delta$. 
\end{enumerate}

    {\em Contradiction.} 
    Notice that in Execution  $1$ and $3$, the message delay is $\delta$ between the broadcaster and $B$, and is $\Delta$ between $A$ and $B$. The set of messages received by $A$ from $B$ before time $\Delta+\delta$ are sent by $B$ before time $\delta$, and thus is identical in Execution  $1$ and $3$ since the state of $B$ is identical before receiving from the broadcaster in two executions. Moreover, $C$ behaves identically to $A$ in both executions.
    Therefore, the parties in $A$ cannot distinguish Execution  $1$ and $3$ before time $\Delta+\delta$,  and thus will commit $0$ in Execution  $3$.
    Similarly, the parties in $B$ cannot distinguish Execution  $2$ and $3$ before time $\Delta+\delta$, since they receive the identical set of messages from the parties in $A$, and thus will commit $1$ in Execution  $3$.
    However, this violates the agreement property of BRB, and therefore no such protocol $\Pi$ exists.
\end{proof}

\paragraph{BB upper bound $\Delta+\delta$ under unsynchronized start and $f= n/3$}

\label{sec:ubproof:f=n/3}

\begin{theorem}\label{thm:bb2}\label{thm:ub:sync:D+d}
    $(\Delta+\delta)$-$n/3$-BB protocol solves Byzantine broadcast under $f\leq n/3$ faults in the synchronous authenticated setting, and has optimal \latency of  $\Delta+\delta$.
\end{theorem}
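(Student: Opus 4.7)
The plan is to follow the same three-part structure as the proof of Theorem~\ref{thm:ub:sync:2d}: agreement, termination, validity, plus the $\Delta+\delta$ good-case latency. Termination is immediate from the BA invocation in Step~\ref{bb2:step:ba}. Validity and the latency bound are the easy direction: if the broadcaster is honest and sends $v$ at global time $0$, every honest party receives the proposal within $\delta$, votes for $v$, detects no equivocation during its $\Delta$ vote-timer (since the broadcaster never signs another value), and receives $n-f$ honest \vote messages for $v$ by global time $\leq 2\delta$, which corresponds to a local time well inside the $2\Delta+\clockskew$ window. Hence all honest parties commit at Step~\ref{bb2:step:commit} within $\Delta+\delta$ of the broadcaster's send.

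The real content is agreement. My first step will be to rule out two honest parties committing distinct values at Step~\ref{bb2:step:commit}. Suppose $h$ commits $v$ and $h'$ commits $v'\neq v$ there, and let $T_h,T_{h'}$ be the global times at which they receive their first valid proposals. Since $h$ detects no equivocation during $[T_h,T_h+\Delta]$, the \vote from $h'$ for $v'$ (which carries the signed proposal of $v'$) must arrive at $h$ only after $T_h+\Delta$; as honest-to-honest delay is $\leq\delta$, this forces $T_{h'}>T_h+\Delta-\delta$. Symmetric reasoning gives $T_h>T_{h'}+\Delta-\delta$, and adding yields $\delta>\Delta$, contradicting $\delta\leq\Delta$.

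My second step is to propagate any Step~\ref{bb2:step:commit} commit to all remaining honest parties in time for Step~\ref{bb2:step:ba}. If $h$ commits $v$ before local time $2\Delta+\clockskew$, it has already forwarded $n-f$ signed votes for $v$ and broadcast $\langle \texttt{commit},v\rangle_h$; both reach every honest party by local time $\leq 3\Delta+2\clockskew$. At Step~\ref{bb2:step:ba}, an honest party $h'$ either (i) has received $n-f$ votes for exactly one value and locks it, or (ii) has received two $(n-f)$-vote sets for distinct values. Case (ii) is only feasible when $f=n/3$: by quorum intersection the overlap $\ff$ has size exactly $n-2f=f$, and since honest parties vote at most once, all of $\ff$ must be Byzantine, exhausting the fault budget. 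Because the committer $h$ voted only for $v$, $h\notin\ff$; so $h'$ receives $\langle\texttt{commit},v\rangle_h$ from a verified-honest source and commits $v$ while setting $\lock=v$. In case (i), the single certificate must be for $v$ (the certificate $h$ forwarded), and $h'$ locks $v$. Either way, all honest parties enter BA with input $v$, and BA validity forces every non-committed honest party to commit $v$.

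I expect the delicate part to be case (ii) of the second step, where one must argue that the intersection-is-Byzantine identification is sound and consistent across observers, and that the forwarding committer $h$ falls outside $\ff$ so that its \texttt{commit} message is trusted; everything else reduces to routine timing bookkeeping using $\clockskew\leq\delta\leq\Delta$.
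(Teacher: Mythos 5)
Your proposal is correct and follows essentially the same route as the paper's proof: the $\Delta$-waiting-window argument (here phrased as two symmetric timing inequalities rather than a WLOG ordering) rules out conflicting Step~\ref{bb2:step:commit} commits, the quorum-intersection set $\ff$ of size $n-2f\geq f$ identifies the entire Byzantine set so that a \texttt{commit} message from outside $\ff$ is trustworthy, and forwarding plus BA validity propagates the lock to everyone by local time $3\Delta+2\clockskew$. The remaining parts (termination, validity, the $\Delta+\delta$ latency accounting) also match the paper's treatment.
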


\begin{proof}
    {\bf Agreement.}
    Honest parties may commit at Step~\ref{bb2:step:commit} or \ref{bb2:step:ba}. 
    
    First we prove that no two honest parties commit different values in Step~\ref{bb2:step:commit}. Suppose two honest $h,h'$ commit value $v,v'$ respectively in Step~\ref{bb2:step:commit}. 
    Without loss of generality, suppose that $h$ receives the proposal of $v$ from the broadcaster no later than $h'$. Then party $h'$ should receive the vote for $v$ from party $h$ during its $\Delta$ waiting period, and will not commit at Step~\ref{bb2:step:commit} due to the detection of conflicting votes. Thus any honest commit the same value in Step~\ref{bb2:step:commit}.
    
    Now consider any honest party $h$ that commits in Step~\ref{bb2:step:ba} before invoking the BA. Then $h$ receives two sets of $n-f$ conflicting votes, and the parties in the intersection of the two sets are Byzantine since they voted for different values. The set $\ff$  contains at least $(n-f)+(n-f)-n=n-2f\geq f$ parties, which means $h$ detects all $f$ Byzantine parties. Therefore, any commit message from any party not in $\ff$ must be from an honest party, and $h$ commits the same value as any honest party committed at Step~\ref{bb2:step:commit}.
    
    If all honest parties commit at Step~\ref{bb2:step:ba} after the BA, all honest parties commit the same value due to the agreement property of the BA.
    Otherwise, there must be some honest party that commits some value $v$ at Step~\ref{bb2:step:commit}.
    Since this honest party receives and forwards $n-f$ votes at local time $\leq 2\Delta+\clockskew$, all honest parties receive these votes at local time $\leq 3\Delta+2\clockskew$. If any honest party receives $n-f$ votes for only one value $v$, it sets $\lock=v$. Otherwise, the honest party detects all Byzantine parties, commits the same value $v$ and sets $\lock=v$ as argued previously.
    Therefore, at Step~\ref{bb2:step:ba}, all honest parties have the same input to the BA. Then by the validity condition of the BA primitive, the output of the agreement is also $v$. Any honest party that does not commit at Step~\ref{bb2:step:commit} will commit on value $v$ at Step~\ref{bb2:step:ba}. 
    
    {\bf Termination.}
    According to the protocol, honest parties terminate at Step~\ref{bb1:step:ba}, and they commit a value before termination.

    {\bf Validity.}
    If the broadcaster is honest, it sends the same proposal of value $v$ to all parties, and all $n-f$ honest parties will vote for the proposal within $\Delta$ time. Moreover, there exists no vote for any other proposal. 
    Then at Step~\ref{bb1:step:commit}, all honest parties detect no conflicting vote and receive $n-f$ signed votes for $v$ from honest parties before time $2\Delta+\clockskew$, and commits $v$.
    
    {\bf Good-case latency.}
    In the good case, the broadcaster is honest, its proposal of value $v$ reaches all parties within time $\leq \delta$ and all honest parties will vote for $v$ and start the $\Delta$ waiting period. 
    The above $n-f$ votes reach all honest parties after $\delta$ time, and meanwhile the honest parties detect no conflicting vote during the $\Delta$ waiting period, thus they will commit on the sender's proposal at time $\leq \Delta+\delta$ in Step~\ref{bb2:step:commit}.
\end{proof}

\subsection{$n/3<f<n/2$ and Synchronized Start, Matching Lower and Upper Bounds of $\Delta+\delta$}

\paragraph{Correctness of the $(\Delta+\delta)$-BB Protocol.}
\label{sec:ubproof:1+1}
Since the protocol assumes synchronized start, local clocks at all honest parties have the same time.

\begin{lemma}\label{lem:bb4:1}
    If an honest party commits some value $v$ at Step~\ref{bb4:step:commit}, then (1) no honest party commits $v'\neq v$ at Step~\ref{bb4:step:commit}, and (2) all honest parties have $\lock=v$ at Step~\ref{bb4:step:ba}.
\end{lemma}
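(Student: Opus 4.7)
The plan is to exploit the equivocation-detection window: the critical observation is that an honest voter's \vote message carries the broadcaster-signed proposal $\langle\propose,v\rangle_L$ piggybacked inside it, so any honest party who votes early enough effectively broadcasts the broadcaster's signature to everyone within one $\Delta$. Combined with the bound $d\le t$ on the vote parameter, this lets us convert ``$h$ received $f+1$ votes with $d\le t$'' into ``a signed proposal for $v$ arrived at every honest party by global time $t+\Delta$''. Since the protocol starts in lock-step (synchronized start, no clock skew), all times can be treated as global.

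First I would prove Part (1). Suppose for contradiction that honest $h$ commits $v$ with parameter $t$ and honest $h'$ commits $v'\neq v$ with parameter $t'$, and WLOG $t\le t'$. Among the $f+1$ votes $h$ collected, at least one comes from an honest party $a$, who therefore received $\langle\propose,v\rangle_L$ at some time $d_a\le t$ and multicast it (as part of its \vote) to all parties, reaching $h'$ by time $d_a+\Delta\le t+\Delta\le t'+\Delta$. But since $h'$ committed $v'$, it also saw $\langle\propose,v'\rangle_L$ by time $t'+\Delta$, so $h'$ detected equivocation within its $(t'+\Delta)$-window and by Step~\ref{bb4:step:commit} must have refused to commit $v'$, a contradiction.

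For Part (2), I would first show the strengthening that no honest party casts a \vote for $v'\neq v$ with parameter $d'\le t$. If some honest $b$ did so, $b$'s \vote would deliver $\langle\propose,v'\rangle_L$ to $h$ by time $d'+\Delta\le t+\Delta$, contradicting $h$'s no-equivocation observation. Consequently, the total number of \vote messages for any $v'\neq v$ carrying $d'\le t$ is at most $f$, so no honest party can ever trigger the Lock rule for a value different from $v$ at rank $\le t$. Hence at time $4\Delta$ every honest party either already has $\lock=v$ with some rank $\le t$, or still has $\rank>t$ and is eligible to be overwritten.

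Finally I would argue that the eligible parties do get overwritten in time. Because each of the $f+1$ votes was sent at local time $\le t$, all of them reach $h$ by time $t+\Delta$, so $h$ forwards them at some time $\tau_h\le t+\Delta$; the forwarded bundle then reaches every honest party by time $\tau_h+\Delta\le 2\Delta+t$, which is exactly the Lock deadline for parameter $t$. Any honest party with $\rank>t$ therefore executes the update in Step~\ref{bb4:step:commit}\ref{bb4:step:commit:3} setting $\lock=v$, and we already showed the others have $\lock=v$, so at Step~\ref{bb4:step:ba} (time $4\Delta$) every honest party has $\lock=v$. The main obstacle, and the place to be careful, is the timing bookkeeping in the last paragraph: verifying that $\tau_h\le t+\Delta$ uses both that the voters send by time $t$ and that messages arrive within $\Delta$, and it is what makes the $\Delta+\delta$ latency budget fit the $2\Delta+t$ lock deadline with no slack.
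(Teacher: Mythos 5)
Your proposal is correct and follows essentially the same route as the paper's proof: Part (1) via an honest voter's forwarded, broadcaster-signed proposal reaching the other committer within its equivocation-detection window (your WLOG on $t\le t'$ versus the paper's on the voters' receipt times $d_j\le d_k$ is an immaterial variation), and Part (2) by showing no honest party votes for $v'\neq v$ with $d'\le t$, so at most $f$ such votes exist, while the forwarded $f{+}1$ votes for $v$ reach everyone by the $2\Delta+t$ lock deadline. No gaps.
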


\begin{proof}
    {\bf Part (1).}
    Suppose an honest party $h$ commits some value $v$ at Step~\ref{bb4:step:commit} at time $\Delta+t$. According to the protocol, $h$ detects no equivocation within time $\Delta+t$ and receives $f+1$ signed \vote messages for the same value $v$ each with $d\leq t$. Then, at least one \vote is from an honest party $j$, who receives the proposal of $v$ and multicasts the \vote at time $d_j$ where $d_j\leq t$.
    Now suppose some honest party $h'$ commits a different value $v'\neq v$ at Step~\ref{bb4:step:commit} at time $\Delta+t'$. 
    Similarly, at least one honest party $k$ receives the proposal of $v'$ and multicasts the \vote at time $d_k$ where $d_k\leq t'$.
    Without loss of generality, assume $d_j\leq d_k$. Then the \vote from party $j$ will reach party $k$ within time $d_j+\Delta\leq d_k+\Delta \leq t'+\Delta$, and prevents $h'$ from committing due to equivocation detection.
    This is a contradiction, and thus no honest party commits $v'\neq v$ at Step~\ref{bb4:step:commit}.
    
    {\bf Part (2).}
    Suppose an honest party $h$ commits some value $v$ at Step~\ref{bb4:step:commit} at time $\Delta+t$. 
    Since $h$ forwards the $f+1$ \vote messages at time $\Delta+t$, all honest parties will receive these \vote messages within time $2\Delta+t$.
    According to the protocol, all honest parties will update their $\lock=v$, unless they are locked on some value $v'\neq v$ with $\rank\leq t$.
    Hence, it is sufficient to show that there exists no $f+1$ \vote messages for any $v'\neq v$ each has $d\leq t$.
    First we show that there is no \vote with $d\leq t$ for any $v'\neq v$ sent by an honest party. Otherwise, this \vote will reach party $h$ within time $d+\Delta\leq t+\Delta$ and prevent $h$ from committing $h$ due to equivocation detection. 
    Since there are at most $f$ Byzantine parties, there exists no $f+1$ \vote messages for any $v'\neq v$ each has $d\leq t$.
\end{proof}

\begin{theorem}\label{thm:ub:sync:syncstart}
    $(\Delta+\delta)$-BB protocol solves Byzantine broadcast tolerating $n/3<f<n/2$ faults under synchronized start in the synchronous authenticated setting, and has optimal \latency of  $\Delta+\delta$.
\end{theorem}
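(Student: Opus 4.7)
The plan is to follow the same four-part template used for Theorem~\ref{thm:ub:sync:unsyncstart} (agreement, termination, validity, good-case latency), leveraging Lemma~\ref{lem:bb4:1} to do essentially all of the safety work. The synchronized-start assumption lets me identify local and global time throughout, which substantially simplifies the timing arguments compared with the $(\Delta+1.5\delta)$-BB proof. Optimality of the $\Delta+\delta$ latency is immediate from the matching lower bound in Theorem~\ref{thm:lb:sync:D+d}.

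For agreement I would case-split on whether any honest party commits in Step~\ref{bb4:step:commit}. If no honest party does, every honest party commits on the BA output at Step~\ref{bb4:step:ba}, and BA agreement closes the case. Otherwise, some honest party commits value $v$ at Step~\ref{bb4:step:commit}; by Lemma~\ref{lem:bb4:1}(1) no other honest party commits any $v'\neq v$ in that step, and by Lemma~\ref{lem:bb4:1}(2) every honest party enters Step~\ref{bb4:step:ba} with $\lock=v$. BA validity then forces the BA output to be $v$, so every honest party that has not yet committed commits $v$ at Step~\ref{bb4:step:ba}.

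Termination is immediate: every honest party unconditionally invokes the BA at local time $4\Delta$ and commits and terminates at the end of it. For validity and good-case latency, assume the broadcaster is honest with input $v$. Because starts are synchronized and $\delta\le\Delta$, every honest party receives the proposal at some time $d_j\le\delta$ and multicasts a \vote carrying that $d_j$. Since the broadcaster is honest, no honest party ever sees two values signed by $L$, so the equivocation-detection check in Step~\ref{bb4:step:commit} never fails at any honest party. Instantiating the universal quantifier at $t=\delta$, each honest party has received all $n-f\ge f+1$ honest votes for $v$, each with $d\le\delta=t$, by time $2\delta\le\Delta+\delta=t+\Delta$, and the ``no equivocation within time $t+\Delta$'' condition is vacuous. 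Hence every honest party commits $v$ at time at most $\Delta+\delta$, yielding validity and the claimed good-case latency simultaneously.

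The only subtlety I anticipate is that $t$ is not a quantity any party picks or even knows (since $\delta$ is unknown); rather the commit rule fires existentially over $t\in[0,\Delta]$. The argument must remember that it suffices to exhibit one $t$ that works, and in the good case $t=\delta$ works for every honest party. With that observation the remaining details are a short timing calculation, and the hard part---preventing conflicting commits and establishing a common lock---is already absorbed into Lemma~\ref{lem:bb4:1}.
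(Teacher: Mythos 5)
Your proposal is correct and follows essentially the same route as the paper's proof: agreement via the case split on whether any honest party commits at Step~\ref{bb4:step:commit}, with Lemma~\ref{lem:bb4:1} supplying uniqueness of the early-committed value and the common lock, BA validity closing the gap, and a direct timing argument instantiating $t=\delta$ for validity and the $\Delta+\delta$ latency (matched by the lower bound of Theorem~\ref{thm:lb:sync:D+d}). Your explicit remark that the commit rule is existentially quantified over $t$ and that $t=\delta$ is the witness in the good case is a slightly more careful reading than the paper's, but the substance is the same.
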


\begin{proof}
    {\bf Agreement.}
    If all honest parties commit at Step~\ref{bb4:step:ba}, all honest parties commit on the same value due to the agreement property of the BA.
    Otherwise, there must be some honest party that commits at Step~\ref{bb4:step:commit}. By Lemma~\ref{lem:bb4:1}, no honest party commits $v'\neq v$ at Step~\ref{bb4:step:commit} and all honest party have $\lock=v$ at Step~\ref{bb4:step:ba}. Since all honest parties have the same input $v$ for the BA, according to the validity condition of the BA, the output of the agreement is $v$. Then any honest party that has not committed will commit $v$.
    
    {\bf Termination.}
    According to the protocol, honest parties invoke a BA instance at time $4\Delta$, and terminate after the BA at Step~\ref{bb4:step:ba}. The parties commit a value before termination at Step~\ref{bb4:step:ba} or \ref{bb4:step:commit}.

    {\bf Validity and \latency.}
    If the broadcaster is honest, it sends the same proposal of value $v$ to all parties, and all honest parties receive the proposal within time $\delta$.
    All $n-f\geq f+1$ honest parties will send \vote with $d\leq \delta$ for the proposal at Step~\ref{bb4:step:vote}, and there exists no valid \vote for any $v'\neq v$. 
    Then at Step~\ref{bb4:step:commit}, within time $\Delta+\delta$, all honest parties detect no equivocation and receive $f+1$ signed \vote messages for $v$ with $d\leq \delta$ from the honest parties, and commit $v$.
    Thus the \latency of the protocol is $\Delta+\delta$.
\end{proof}

\subsection{$n/3<f<n/2$ and Unsynchronized Start, Matching Lower and Upper Bounds of $\Delta+1.5\delta$}

\paragraph{BRB lower bound $\Delta+1.5\delta$ under unsynchronized start and $ f>n/3$}
\label{sec:lbproof:1+1.5}

\begin{proof}[Proof of Theorem~\ref{thm:lb:sync:D+1.5d}]
    
The proof is illustrated in Figure~\ref{fig:BB-lb}.
Assume there exists a BRB protocol $\Pi$ that has \latency $<\Delta+1.5\delta$, under clock skew $\clockskew$ and $f>n/3$ Byzantine faults.
As mentioned in Section~\ref{sec:prelim}, we assume the clock skew $\clockskew= 0.5\delta$ due to the lower bound for clock skew~\cite{attiya2004distributed}.
We divide $n$ parties into $5$ disjoint groups $A,B,C,g,h$, where $g,h$ each contain a single party, and $A,B,C$ each evenly contains the remaining $n-2<3f-2$ parties so that each of $A,B,C$ contains $\leq f-1$ parties.

We construct four executions below. 
The honest parties have the same initial state at any executions constructed.
When any party starts the protocol, it also starts the local clock at the same time.
For brevity, we often use the group to refer all the parties in that group.

\begin{figure}
    \centering
    \includegraphics[width=0.75\textwidth]{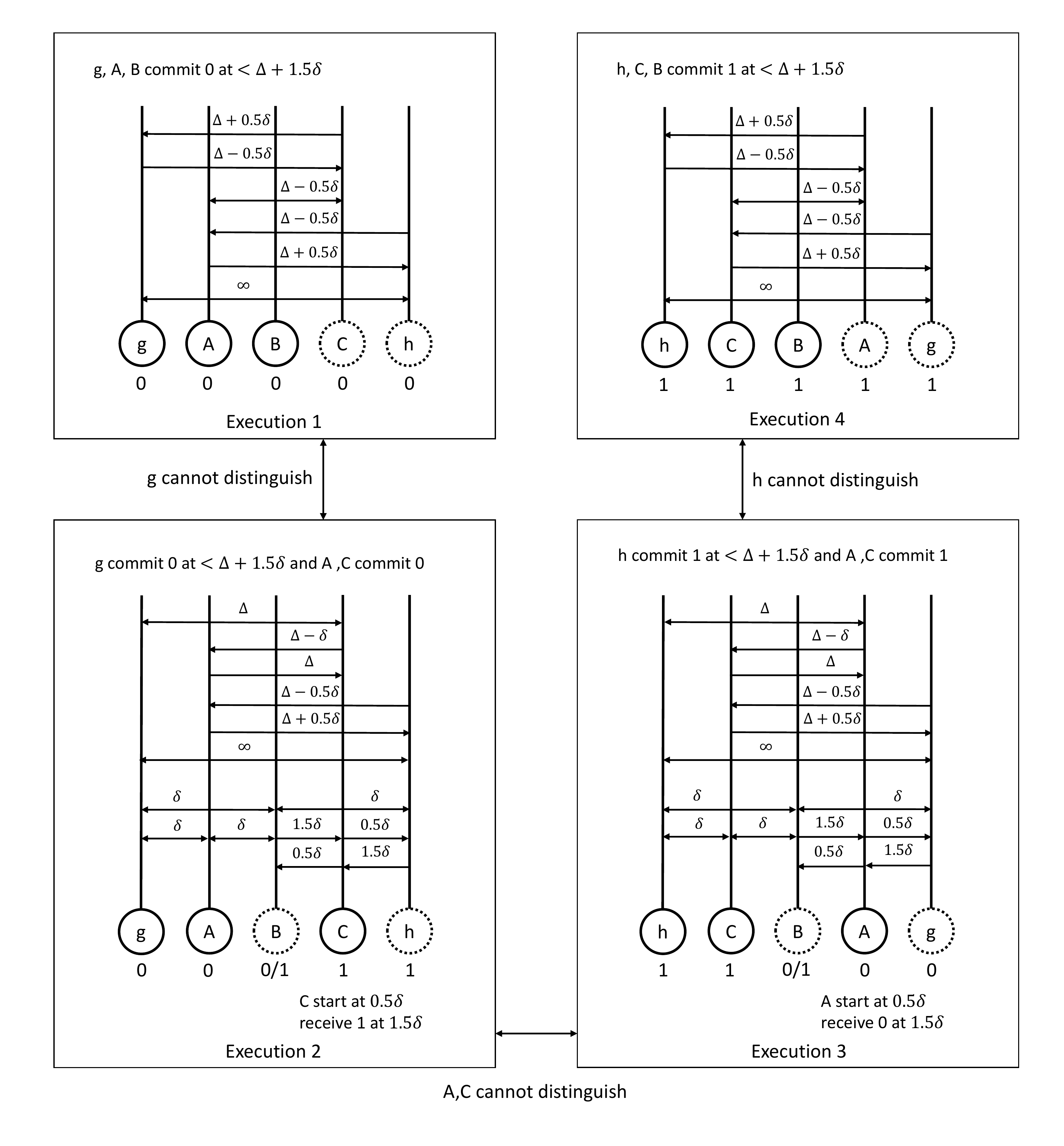}
    \caption{BRB \latency lower bound: $\Delta+1.5\delta$ with unsynchronized start. 
    Dotted circles denote Byzantine parties.
    }
    \label{fig:BB-lb}
\end{figure}

\begin{enumerate}[leftmargin=*]
    \item Execution  1. The message delay bound is $\delta$, and the message delay is $\delta$ between all pairs of honest parties.  The broadcaster is honest and sends value $0$ to all parties. All parties start the protocol at time $0$, and receive the broadcaster's proposal at time $\delta$.
    \begin{itemize}
        \item $g,A,B$ are honest.
        \item Parties in $C$ are Byzantine. They behave honestly except that they control the message delays to be $\Delta+0.5\delta$ from $C$ to $g$, $\Delta-0.5\delta$ from $C$ to $A$, and $\Delta-0.5\delta$ from $g,A$ to $C$.
        \item $h$ is Byzantine, it behaves honestly except that it controls the message delay to be $\Delta-0.5\delta$ from $h$ to $A$, $\Delta+0.5\delta$ from $A$ to $h$, and $\infty$ between $h$ and $g$.
    \end{itemize}
    Since the broadcaster is honest, the honest parties in 
    $g,A,B$ will commit $0$ at time $<\Delta+1.5\delta$ by assumption.
    
    \item Execution  4.  This Execution is a symmetry of Execution 1. The message delay is $\delta$ between all pairs of honest parties. The broadcaster is honest and sends value $1$ to all parties. All parties start the protocol at time $0$, and receive the broadcaster's proposal at time $\delta$.
    \begin{itemize}
        \item $h,C,B$ are honest.
        \item Parties in $A$ are Byzantine. They behave honestly except that they control the message delays to be $\Delta+0.5\delta$ from $A$ to $h$, $\Delta-0.5\delta$ from $A$ to $C$, and $\Delta-0.5\delta$ from $h,C$ to $A$.
        \item $g$ is Byzantine, it behaves honestly except that it controls the message delay to be $\Delta-0.5\delta$ from $g$ to $C$, $\Delta+0.5\delta$ from $C$ to $g$, and $\infty$ between $g$ and $h$.
    \end{itemize}
    Since the broadcaster is honest, the honest parties in 
    $B,C,h$ will commit $1$ within time $\Delta+1.5\delta$ by assumption.
    
    \item Execution  2. The message delay bound is $\Delta$, and thus the adversary can control the message delay to be any value in $[0,\Delta]$ between any pair of honest parties.
    The broadcaster is Byzantine, it sends $0$ to $g,A,B$, and $1$ to $B,C,h$. 
    Let $\delta$ denote the message delay bound in Execution $1$ and $4$.
    All parties start the protocol at time $0$ except that the parties in $C$ start at time $0.5\delta$.
    All parties receive the broadcaster's proposal at time $\delta$, except that $C$ receive at time $1.5\delta$.
    \begin{itemize}
        \item $g,A,C$ are honest.
        The message delay is $\delta$ between $g,A$, and $\Delta$ between $g,C$.
        The message delay is $\Delta-\delta$ from $C$ to $A$, and $\Delta$ from $A$ to $C$. 
        \item The parties in $B$ are Byzantine, they behave to $g,A$ the same as that in Execution 1, and to $C,h$ the same as that in Execution 4, but they control the message delay to be $1.5\delta$ from $B$ to $C$ and $0.5\delta$ from $C$ to $B$.
        \item $h$ is Byzantine, it behaves honestly except it controls the message delays as follows:
        $\infty$ between $g,h$, $\delta$ between $B,h$, $0.5\delta$ from $C$ to $h$, $1.5\delta$ from $h$ to $C$, $\Delta+0.5\delta$ from $A$ to $h$, and $\Delta-0.5\delta$ from $h$ to $A$.
    \end{itemize}
    
    \item Execution  3. This Execution is the symmetry of Execution 2.
    The message delay bound is $\Delta$ and let $\delta$ denote the message delay bound in Execution $1$ and $4$.
    The broadcaster is Byzantine, it sends $0$ to $g,A,B$, and $1$ to $B,C,h$.
    All parties start the protocol at time $0$ except that the parties in $A$ start at time $0.5\delta$.
    All parties receive the broadcaster's proposal at time $\delta$, except that $A$ receive at time $1.5\delta$.
    \begin{itemize}
        \item $h,A,C$ are honest.
        The message delay is $\delta$ between $h,C$, and $\Delta$ between $h,A$.
        The message delay is $\Delta-\delta$ from $A$ to $C$, and $\Delta$ from $C$ to $A$.
        \item The parties in $B$ are Byzantine, they behave to $g,A$ the same as that in Execution 1 but control the message delay to be $1.5\delta$ from $B$ to $A$ and $0.5\delta$ from $A$ to $B$, and to $C,h$ the same as that in Execution 4.
        \item $g$ is Byzantine, it behaves honestly except it controls the messages delays as follows:
        $\infty$ between $g,h$, $\delta$ between $B,g$, $0.5\delta$ from $A$ to $g$, $1.5\delta$ from $g$ to $A$, $\Delta+0.5\delta$ from $C$ to $g$, and $\Delta-0.5\delta$ from $g$ to $C$.
    \end{itemize}

\end{enumerate}

{\bf Claim 1: The party in $g$ cannot distinguish Execution 1 and 2 before time $\Delta+1.5\delta$, and thus it commits $0$ in both executions. Similarly, $h$ cannot distinguish Execution 3 and 4 before time $\Delta+1.5\delta$, and it commits $1$ in both executions.}
    
    We need to show that before $\Delta+1.5\delta$, the party in $g$ receives the same set of messages at the same corresponding time points by its local clock in both Execution 1 and 2. According to the construction of the executions, we have the following observation.
    \begin{itemize}[leftmargin=*]
        \item By construction, $B,h$ behave to $g$ the same in both Execution 1 and 2.
        \item For $A$, we show that the local history of $A$ before time $\Delta+0.5\delta$ is identical in both Execution 1 and 2. Then, any message from $A$ to $g$ before time $\Delta+1.5\delta$ is identical in both executions since the message delay between $g,A$ is $\delta$ in both executions.
        \begin{itemize}
            \item $A$ start the protocol at time $0$ with the same initial state, and receive proposal of value $0$ at time $\delta$ from the sender in both executions.
            \item By construction, $B$ behave the same to $A$ in both Execution 1 and 2.
            \item The messages from $C$ to $A$ before time $\Delta+0.5\delta$ are identical in Execution 1 and 2. 
            In Execution 1, $C$ start at time $0$, receive from the broadcaster at time $\delta$, and behave honestly but simulate a message delay of $\Delta-0.5\delta$ to $A$.  
            In Execution 2, $C$ start at time $0.5\delta$, receive from the broadcaster at time $1.5\delta$, and behave honestly. 
            Since the message delay from $C$ to $A$ is $\Delta-0.5\delta$ in Execution 1 and $\Delta-\delta$ in Execution 2 ($0.5\delta$ faster than Execution 1), and the local history at $C$ in Execution 2 from time $0.5\delta$ to $1.5\delta$ is identical to that in Execution 1 from time $0$ to $\delta$, the set of messages $g$ received from $C$ before time $\Delta+0.5\delta$ is the same in both executions.
            \item The message delay from $h$ to $A$ is $\Delta-0.5\delta$ in both executions, and $h$ has the same local history before time $\delta$. Hence any message from $h$ to $A$ received before time $\Delta+0.5\delta$ is the same in both executions.
            \item $g$ is honest in both executions, and will behave identically to $A$ before time $\Delta+1.5\delta$ unless different messages are received.
        \end{itemize}
        
        \item For $C$, we show that the messages $g$ received from $C$ before $\Delta+1.5\delta$ are identical in both Execution 1 and 2. 
        The local history of $C$ from time $0$ to time $<\delta$ in Execution 1 is identical to that from time $0.5\delta$ to time $<1.5\delta$ in Execution 2, since $C$ receive from the broadcaster at time $\delta$ in Execution 1, and at time $1.5\delta$ in Execution 2.
        Moreover, the message delay from $C$ to $g$ is $\Delta+0.5\delta$ in Execution 1, and $\Delta$ in Execution 2, which implies the claim.

    \end{itemize}
    By the argument above, the party in $g$ cannot distinguish Execution 1 and 2 before time $\Delta+1.5\delta$, and thus it commits $0$ in both executions.
    Similarly, $h$ cannot distinguish Execution 3 and 4 before time $\Delta+1.5\delta$, and thus it commits $1$ in both executions.

{\bf Claim 2: The parties in $A,C$ cannot distinguish Execution 2 and 3.}

We will prove that the local histories at $A,C$ are identical in both Execution 2 and 3.
By construction, $C$ in Execution 2 start the protocol and receive from the broadcaster $0.5\delta$ time later than $C$ in Execution 3, and  $A$ in Execution 3 start the protocol and receive from the broadcaster $0.5\delta$ time later than $A$ in Execution 2.
As for the message delays between $A,C$, the delay from $A$ to $C$ is $\Delta$ in Execution 2 and $\Delta-\delta$ in Execution 3, and the delay from $C$ to $A$ is $\Delta$ in Execution 3 and $\Delta-\delta$ in Execution 2. The differences in the message delays between $A,C$ compensate the delays of when $A,C$ start their protocol, and therefore, if any other receiving events at $A,C$ are identical from their local view, then $A,C$ will have the same local histories.
For other message delays, for $C$ in Execution 2 and $A$ in Execution 3 that start $0.5\delta$ time later, all incoming message delays are $0.5\delta$ larger and outgoing message delays are $0.5\delta$ smaller, which also compensate the delays of when $A,C$ start their protocol.
Rest of the message delays are identical in both executions.
The receiving events at $g,A,C,h$ from $B$ are identical in both executions, since $B$ behave to $g,A$ the same as that in Execution 1, and to $h,C$ the same as that in Execution 2.
The receiving events at $A,C$ from $g,h$ are also identical in both executions, since $g,h$ behave honestly except that they control delays to compensate the delay of when $A,C$ start their protocol.
Hence the local histories at $A,C$ are identical, and thus $A,C$ cannot distinguish Execution 2,3.

{\bf Contradiction.}
By Claim $1$, $g$ commits $0$ in Execution 2 and $h$ commits $1$ in Execution 3.
To satisfy safety, $A$ must commit $0$ in Execution 2 and $C$ must commit $1$ in Execution 3.
However, by Claim $2$, $A,C$ cannot distinguish Execution 2 and 3, and they may commit different values in the same execution, violating safety.
Therefore, such a BRB protocol $\Pi$ that has \latency $<\Delta+1.5\delta$ cannot exist.
\end{proof}

% \paragraph{BB upper bound $\Delta+1.5\delta$ under unsynchronized start and $f<n/2$.}
% \label{sec:ubproof:1+1.5}

\subsection{$n/2<f<n$, Lower Bound of $(\lfloor \frac{n}{n-f} \rfloor -1)\Delta$ and Upper Bound of $O(\frac{n}{n-f})\Delta$}
\label{sec:sync>1/2}

For the dishonest majority case, we prove a lower bound of $(\lfloor \frac{n}{n-f} \rfloor -1)\Delta$ on the \latency, and an upper bound of $O(\frac{n}{n-f})\Delta$ can be implied by the BB protocol in~\cite{constantroundBB}.

\paragraph{BRB lower bound $(\lfloor \frac{n}{n-f} \rfloor -1)\Delta$ under synchronized start and $f\geq n/2$.}

We first show that no BRB protocol can have \latency less than $(\lfloor \frac{n}{n-f} \rfloor -1)\Delta$ under $f\geq n/2$.
The proof is inspired by the round complexity lower bound proof of Byzantine broadcast in \cite{garay2007round}, where the authors show that even randomized BB protocols require at least $2n/(n-f)-1$ rounds to terminate.

\begin{theorem}\label{thm:lb:sync:nD}
     Any Byzantine reliable broadcast protocol that is resilient to $f\geq n/2$ faults must have a \latency at least $(\lfloor \frac{n}{n-f} \rfloor -1)\Delta$, even with synchronized start.
\end{theorem}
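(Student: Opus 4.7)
The plan is to argue by contradiction via a hybrid-indistinguishability construction in the spirit of~\cite{garay2007round}. Assume a BRB protocol $\Pi$ with synchronized start achieves good-case latency $T < (k-1)\Delta$, where $k = \lfloor n/(n-f) \rfloor$. The first step is to partition the $n$ parties into $k$ groups $P_1,\ldots,P_k$ of size $n-f$ each (plus any residual parties absorbed into one group or housing the designated broadcaster). Because $k(n-f) \leq n$, the crucial counting fact $(k-1)(n-f) \leq f$ holds, so any configuration in which exactly one of $P_1,\ldots,P_k$ is honest respects the fault budget.

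Next I would introduce $k$ reference executions $E_1,\ldots,E_k$. In $E_i$ the broadcaster is honest with input $v_i$, the parties in $P_i$ are honest, and the remaining groups $P_j$ for $j\neq i$ are Byzantine; choose the inputs so that $v_1,\ldots,v_k$ alternate, in particular $v_1 \neq v_k$. By validity and the assumed latency, the parties in $P_i$ commit $v_i$ by time $T$ in $E_i$.

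The heart of the proof is to construct a single ``Byzantine-broadcaster'' execution $F$ in which every $P_i$ is honest but the broadcaster equivocates, sending $v_i$ to $P_i$ for each $i$, and to show by a chain of $k-1$ intermediate hybrid executions that the parties in $P_i$ cannot distinguish $F$ from $E_i$ until time $(k-1)\Delta$. Each hop of the hybrid chain flips the honest/Byzantine status of one group relative to the previous hybrid and exploits the fact that, in the authenticated model, a signed message from the broadcaster attesting to a different value can only reach a distant group by being relayed through intermediate groups, each relay costing $\Delta$. Combining the indistinguishability window of $(k-1)\Delta > T$ with $v_1 \neq v_k$ forces the parties in $P_1$ and $P_k$ to commit different values in $F$ before time $T$, violating agreement and yielding the contradiction.

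The main obstacle is the precise construction and bookkeeping of the hybrid sequence: one must simultaneously (i) keep all honest-to-honest delays within $\Delta$ in every hybrid, (ii) ensure the simulated Byzantine behavior is consistent with what authenticated parties could actually produce (respecting unforgeability of signatures, in particular that a Byzantine group cannot fabricate a broadcaster signature on a value it never saw), and (iii) couple the length of the hybrid chain to the resilience parameter so that exactly $k-1$ hops are unavoidable. This parallels the Dolev--Strong style chain of executions for the $f+1$-round lower bound, but is parameterized by the group structure $n/(n-f)$ rather than by individual Byzantine parties, which is what makes the bound scale as $(\lfloor n/(n-f)\rfloor - 1)\Delta$ rather than $f\Delta$.
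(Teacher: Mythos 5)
There is a genuine gap, and it sits exactly where you placed your ``main obstacle'': the construction of the central execution $F$ and its claimed $(k-1)\Delta$ indistinguishability window cannot work as described. In $F$ every group $P_1,\ldots,P_k$ is honest and only the broadcaster is Byzantine, so every pair of honest parties is connected by a link of delay at most $\Delta$. A party in $P_j$ that receives the broadcaster-signed value $v_j$ at time $\Delta$ can forward that signed message directly to $P_i$ by time $2\Delta$; nothing forces the signed equivocating value to be ``relayed through intermediate groups, each relay costing $\Delta$,'' because in $F$ there are no Byzantine intermediaries to impose a line topology and the adversary cannot delay honest-to-honest messages beyond $\Delta$. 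Hence $P_1$ distinguishes $F$ from $E_1$ by time $2\Delta$, far short of $(k-1)\Delta$, and the intended contradiction (``$P_1$ commits $v_1$ and $P_k$ commits $v_k$ in $F$ before time $T$'') collapses for any protocol that simply echoes all broadcaster-signed proposals. Your obstacle (ii) is also fatal to the direct $F$-versus-$E_i$ comparison: in $E_i$ the honest broadcaster signs only $v_i$, so the Byzantine groups $P_j$ cannot simulate honest parties of $F$ who hold a broadcaster signature on $v_j\neq v_i$.

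The paper's proof avoids both problems by never using an all-honest execution. It arranges $2\lfloor n/(n-f)\rfloor$ groups $G_0,\ldots,G_d$ (alternating sizes $1$ and $n-f-1$, with the broadcaster in $G_0$) on a line, and builds a chain of executions in which \emph{each} execution has exactly two \emph{adjacent} honest groups while every Byzantine group behaves honestly except that it communicates only with its neighbors on the line. The Byzantine broadcaster sends signed $0$ to the first half and signed $1$ to the second half in every middle execution, so Byzantine groups always legitimately possess the signatures they relay, and a conflicting signed value must traverse $(d-1)/2 = \lfloor n/(n-f)\rfloor - 1$ hops of the line --- each costing $\Delta$ --- before the honest pair can see it. Consecutive executions differ only in which adjacent pair is honest and are indistinguishable to the shared group, and the contradiction is extracted at the middle group between two adjacent executions of the chain, not inside a single global execution. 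To repair your argument you would need to replace $F$ by such a chain with a line topology enforced by Byzantine groups of total size within the fault budget $f$, which is essentially the paper's construction; as written, your proposal's key indistinguishability claim is false.
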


\begin{proof}
The proof is illustrated in Figure~\ref{fig:BB-dishonest-majority-lb}.
Let $h=n-f$ denote the number of honest parties.
Let $d=2\lfloor \frac{n}{h} \rfloor-1$, which is odd.
We divide the parties into $d+1=2\lfloor \frac{n}{h} \rfloor$ disjoint groups $G_0,...,G_d$, where $|G_i|=1$ for $i=0,2,...,d-1$, $|G_i|=h-1$ for $i=1,3,...,d-2$ and $|G_d|\geq h-1$.
Suppose there exists a BRB protocol $\Pi$ that can tolerate $f$ Byzantine faults and commit before time $(\lfloor \frac{n}{n-f} \rfloor -1)\Delta=(d-1)\Delta/2$ when the broadcaster is honest.
For brevity, we often use group $G_i$ to refer all the parties in $G_i$.

Considering the following executions with the party in $G_0$ being the broadcaster. 
In all the executions below, 
any Byzantine party in group $G_i$ behaves as honest except that it only communicates with parties in groups $G_i$, $G_{i-1}$ and $G_{i+1}$. 
For the broadcaster in $G_0$, when it is Byzantine, it only communicates with $G_1$ and $G_d$ after sending the proposal.
\begin{itemize}
    \item Execution $0$. Only the broadcaster $G_0$ and $G_1$ are honest, and $G_0$ sends $0$ to all parties. All Byzantine parties in $G_2,...,G_d$ pretends their message delay is $\Delta$. Since the broadcaster is honest, $G_0,G_1$ commit $0$ before time $(d-1)\Delta/2$.
    
    \item Execution $d$. Only the broadcaster $G_0$ and $G_d$ are honest, and $G_0$ sends $1$ to all parties. All Byzantine parties in $G_1,...,G_{d-1}$ pretends their message delay is $\Delta$. Since the broadcaster is honest, $G_0,G_d$ commit $1$ before time $(d-1)\Delta/2$.
    
    \item Execution $i$, where $i=1,2,...,d-1$. Only $G_i, G_{i+1}$ are honest. The broadcaster in $G_0$ is Byzantine, it sends $0$ to $G_j$ for $1\leq j\leq (d+1)/2$, and sends $1$ to $G_j$ for $(d+1)/2\leq j\leq d+1$. $G_0$ behaves to $G_1$ the same as $G_0$ to $G_1$ in Execution $0$, and behaves to $G_d$ the same as $G_0$ to $G_d$ in Execution $d$. All other message delays are $\Delta$.
    \item In the executions above, we argue the following indistinguishability:
    \begin{itemize}
        \item $G_1$ cannot distinguish Execution $0$ and $1$ before time $(d-1)\Delta/2$.
        Any message sent by $G_{(d+1)/2}$ takes time $(d-1)\Delta/2$ to reach $G_1$. Before that the set of messages received by $G_1$ is identical in both executions, since $G_1,G_2,...,G_{(d-1)/2}$ all receive $0$ from the broadcaster, and any different message takes $(d-1)\Delta/2$ to reach $G_1$. 
        Thus, $G_1$ also commits $0$ in Execution $1$ before time $(d-1)\Delta/2$.
        Similarly, $G_d$ cannot distinguish Execution $d$ and $d-1$, and commits $1$ in Execution $d-1$ before time $(d-1)\Delta/2$.
        \item $G_i$ cannot distinguish Execution $i-1$ and Execution $i$ for $2\leq i \leq d-1$, since two executions look identical to $G_i$. The only difference between Execution $i-1$ and Execution $i$ is that the set of honest parties changes from $G_{i-1},G_i$ to $G_i,G_{i+1}$, but since the Byzantine parties behaves as honest except that they only communicate with neighboring parties, $G_i$ cannot distinguish the two executions.
    \end{itemize}
    {\em Contradiction.} 
    Since $G_i$ cannot distinguish Execution $i-1$ and Execution $i$ for $2\leq i \leq d-1$, by the termination property of BRB, we can infer that $G_{(d+1)/2}$ commit $0$ in Execution $(d-1)/2$ and commit $1$ in Execution $(d+1)/2$. However, $G_{(d+1)/2}$ cannot distinguish these two executions, and thus will violate agreement property of Byzantine broadcast. Hence, any BRB protocol with good case latency $<(d-1)\Delta/2=(\lfloor \frac{n}{n-f} \rfloor -1)\Delta$ cannot exist. \qedhere
\end{itemize}
\end{proof}

\begin{figure}[t]
    \centering
    \includegraphics[width=0.6\textwidth]{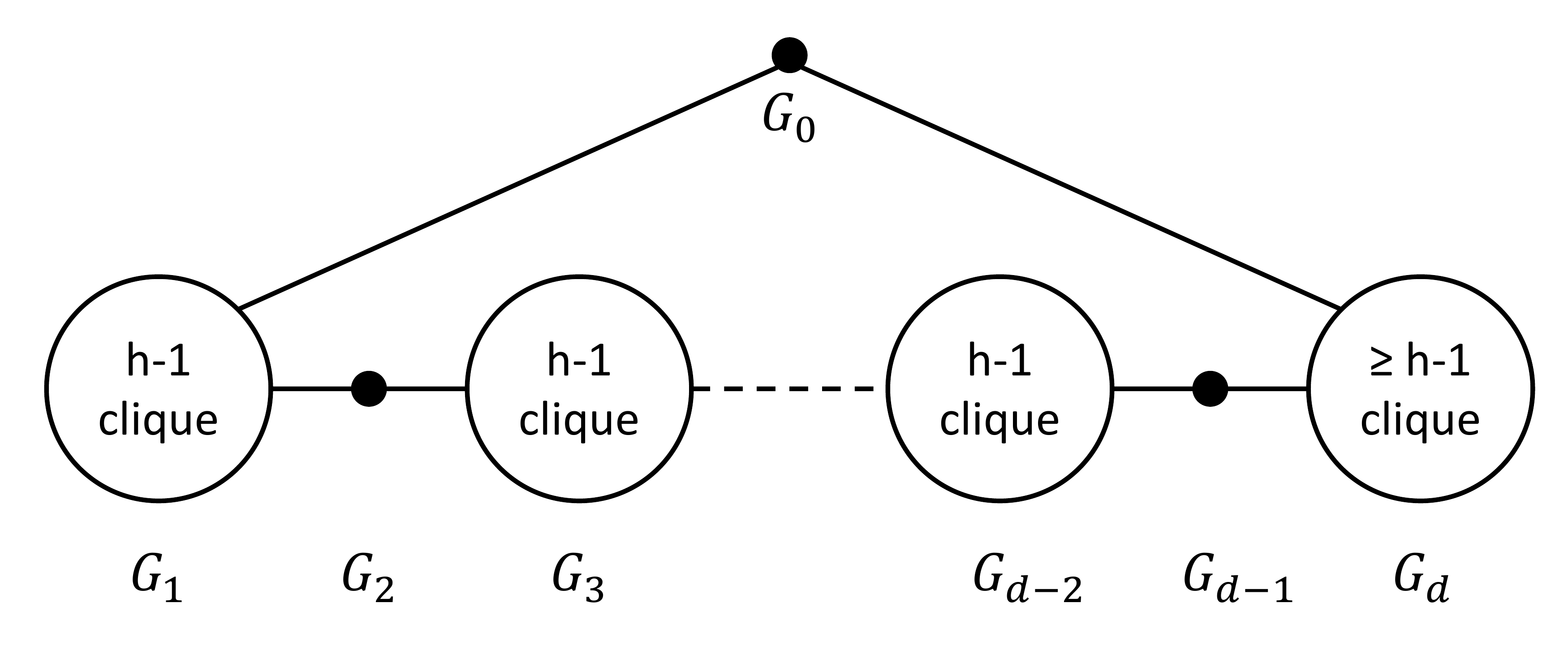}
    \caption{BRB Lower Bound: $(\lfloor \frac{n}{n-f} \rfloor -1)\Delta$ when $f\geq n/2$}
    \label{fig:BB-dishonest-majority-lb}
\end{figure}

\paragraph{BB upper bound $O(\frac{n}{n-f})\Delta$ under unsynchronized start and $f\geq n/2$.}

By the recent progress~\cite{constantroundBB} on the Byzantine broadcast protocol with expected constant round under $f\geq n/2$, we can directly obtain an upper bound on the \latency of $O(\frac{n}{n-f})\Delta$ for BB. For brevity we will omit the floor or ceiling on the accurate latency for the discussion below, as we don't have a tight bound for the $f\geq n/2$ case.
The BB protocol in~\cite{constantroundBB} is bootstrapped from a primitive called {\em TrustCast}, which takes about $\frac{2n}{n-f}$ rounds and can guarantee that each honest party either receives a message from the sender or knows the sender is Byzantine. 
The BB protocol is leader-based, and each epoch with the corresponding leader invokes $3$ instances of TrustCast, for the leader to send the proposal, the parties to vote for the proposal, and the parties to send commit certificate, respectively. 
When an honest leader is in charge, the protocol guarantees that all honest parties can commit after the voting (thus after the second TrustCast), and hence has \latency about $\frac{4n}{n-f}\Delta$.
More details of the expected constant round BB protocol can be found in~\cite{constantroundBB}.

Here we briefly describe how to further improve the upper bound of \latency to about $\frac{2n}{n-f}\Delta$ under $f\geq n/2$, based on the above BB protocol.
The idea is to add a fast path in the first round, where the broadcaster sends the proposal directly in $1$ round instead of invoking TrustCast for about $\frac{2n}{n-f}$ rounds, and then every party use TrustCast to send its vote, which takes about $\frac{2n}{n-f}$ rounds. Rest of the protocol such as the commit rule or the commit certificate remains the same.
If the broadcaster is honest, all honest parties can commit within about $\frac{2n}{n-f}\Delta$ time, and thus the \latency of the protocol is about $\frac{2n}{n-f}\Delta$.
Note that there is still a factor of $2$ gap between the lower bound and the upper bound for the $f\geq n/2$ case, which is an interesting open question for future work.

% \clearpage

% \input{sync-unify}

% \clearpage

% \input{lb-proof-D+d+skew}

% \clearpage

% \input{sync-unify-algo}

\end{document}